\newtheorem{theorem}{Theorem}[section]
\newtheorem{proposition}{Proposition}[section]
\newtheorem{assumption}{Assumption}[section]
\theoremstyle{definition}
\newtheorem{definition}{Definition}[section]
\newtheorem{remark}{Remark}[section]
\newtheorem{example}{Example}[section]
\DeclareMathOperator{\Var}{Var}
\DeclareMathOperator{\E}{E}
\DeclareMathOperator{\Prob}{P}
\def\f{Fr\'echet }
\DeclareMathOperator*{\argmin}{arg\,min}
\newcommand{\Eo}{\E_\oplus}
\renewcommand{\tilde}{\widetilde}
\renewcommand{\hat}{\widehat}
\renewcommand{\tilde}{\widetilde}
\renewcommand{\hat}{\widehat}
\def\f{Fr\'echet }
\def\bco{\iffalse} 
\begin{document}

\title[]{Geodesic Causal Inference}
\thanks{The work of D. K. was partially supported by JSPS KAKENHI Grant Numbers 23K12456 and 26K00323 and the work of H.G.M. was partially supported by NSF grant DMS-2310450} 

\author[D. Kurisu]{Daisuke Kurisu$^*$}
\author[Y. Zhou]{Yidong Zhou$^*$}
\author[T. Otsu]{Taisuke Otsu}
\author[H.-G. M\"uller]{Hans-Georg M\"uller}

\renewcommand{\thefootnote}{\fnsymbol{footnote}}
\footnotetext[1]{The first two authors contributed equally to this work and are listed alphabetically.}

\address[D. Kurisu]{Faculty of Economics, The University of Tokyo\\
7-3-1, Hongo, Bunkyo-ku, Tokyo 113-0033, Japan.
}
\email{d.kurisu@e.u-tokyo.ac.jp}

\address[Y. Zhou]{Department of Statistics, University of California, Davis, One Shields Avenue, Davis, CA, 95616, USA.
}
\email{ydzhou@ucdavis.edu}

\address[T. Otsu]{Department of Economics, London School of Economics, Houghton Street,
London, WC2A 2AE, UK.
}
\email{t.otsu@lse.ac.uk}

\address[H.-G. M\"uller]{Department of Statistics, University of California, Davis, One Shields Avenue, Davis, CA, 95616, USA.
}
\email{hgmueller@ucdavis.edu}

\begin{abstract}
Adjusting for confounding and imbalance when establishing statistical relationships is an increasingly important task, and causal inference methods have emerged as the most popular tool to achieve this. Existing methodology has been developed primarily for outcomes that lie in Euclidean spaces. We introduce here a general framework for causal inference when outcomes reside in general geodesic metric spaces, where we draw on a novel geodesic calculus that facilitates scalar multiplication for geodesics and the quantification of treatment effects through the concept of geodesic average treatment effect. Using ideas from Fr\'echet regression, we obtain a doubly robust estimation of the geodesic average treatment effect and results on consistency and rates of convergence for the proposed estimators. We also develop an intrinsic uncertainty quantification framework for the treatment effect based on Fr\'echet objective functions. The proposed framework is illustrated through simulations and real data applications, including network-valued outcomes from New York City taxi trips to assess the impact of the COVID-19 pandemic, and compositional data on U.S. state-level energy sources to study the effect of coal mining. 
\end{abstract}

\keywords{Doubly robust estimation, Fr\'echet regression, functional data, geodesic average treatment effect, metric statistic, network, random object\\
 \indent \textit{MSC2020 subject classifications}: 62R20, 62D20, 62J02}
\maketitle

\section{Introduction}\label{Sec:Intro}
Causal inference aims to mitigate the effects of confounders in the assessment of treatment effects and is an intensively studied research topic \citep{pete:17, ding:23}. While inadequate adjustment for confounders leads to biased causal effect estimates, \citet{rose:83} demonstrated in a seminal paper that the propensity score, defined as the probability of treatment assignment conditional on confounders, can be used to correct this bias. \citet{robin:94} proposed a \emph{doubly robust method} combining outcome regression and propensity score modeling. Doubly robust estimators enjoy favorable theoretical properties under the correct specification of either the outcome regression or propensity score models \citep{scha:99, bang:05, ogbu:15} and have become a standard for treatment effect estimation and inference.

Almost all existing work in causal inference has focused on scalar outcomes, or more generally, outcomes situated in a Euclidean space. A notable exception is the framework of \citet{LiKoWa23}, which provides causal inference for univariate distributional outcomes in the Wasserstein space by exploiting that quantile functions form a convex subset of $L^2$. Related work on functional outcomes \citep{ecke:24} considers causal inference in Hilbert spaces using outcome regression, but its reliance on linear structure limits applicability to settings where the outcome resides in a non-Euclidean space.

As pointed out by a referee, an earlier arXiv version of \citet{lin:21} included a broader extension to manifold-valued outcomes using an extrinsic approach based on tangent-space linearization. However, such extrinsic approaches face two fundamental limitations. First, many outcome spaces encountered in practice do not admit a smooth manifold structure with well-defined tangent spaces and invertible exponential and logarithmic maps, including spheres with boundaries, tree spaces \citep{bill:01}, and other general Hadamard spaces. Second, even when such maps exist, extrinsic methods suffer from a non-revertibility problem: linear operations performed in the tangent space need not correspond to valid objects after mapping back. This issue arises even for univariate distributions in the Wasserstein space, where the logarithmic map takes values in a convex cone rather than a linear subspace and the exponential map cannot undo arbitrary linear combinations without projection, leading to geometric distortions such as tail shrinkage and support truncation \citep{bigo:17,caze:18,pego:22,chen:23}. These limitations restrict the applicability and interpretability of extrinsic approaches.

Here we aim at a major generalization to obtain causal inference and specifically doubly robust treatment effect estimation for outcomes that are located in a geodesic metric space using an intrinsic approach. Euclidean, functional, and univariate distributional outcomes are included as special cases, and our framework accommodates a wide range of random objects encountered in applications, including networks, symmetric positive-definite (SPD) matrices and compositional data. Examples in Section~\ref{Sec:Pre1} and the real-data analyses demonstrate the practical relevance of this extension. A central challenge in this setting is that general metric spaces lack linear operations such as addition, scalar multiplication, and inner products, which underpin classical causal inference methodology. As a result, algebraic notions of differences or averages cannot be used to quantify treatment effects and must be replaced with intrinsically defined geometric constructs. To address this, we work in uniquely geodesic metric spaces that are extendable and draw on novel developments in geodesic calculus to define a notion of scalar multiplication along geodesics. This structure provides a principled analogue of linear interpolation and forms the basis for defining treatment effects on complex outcomes.

\begin{figure}[tb]
    \centering
    \includegraphics[width=0.5\linewidth]{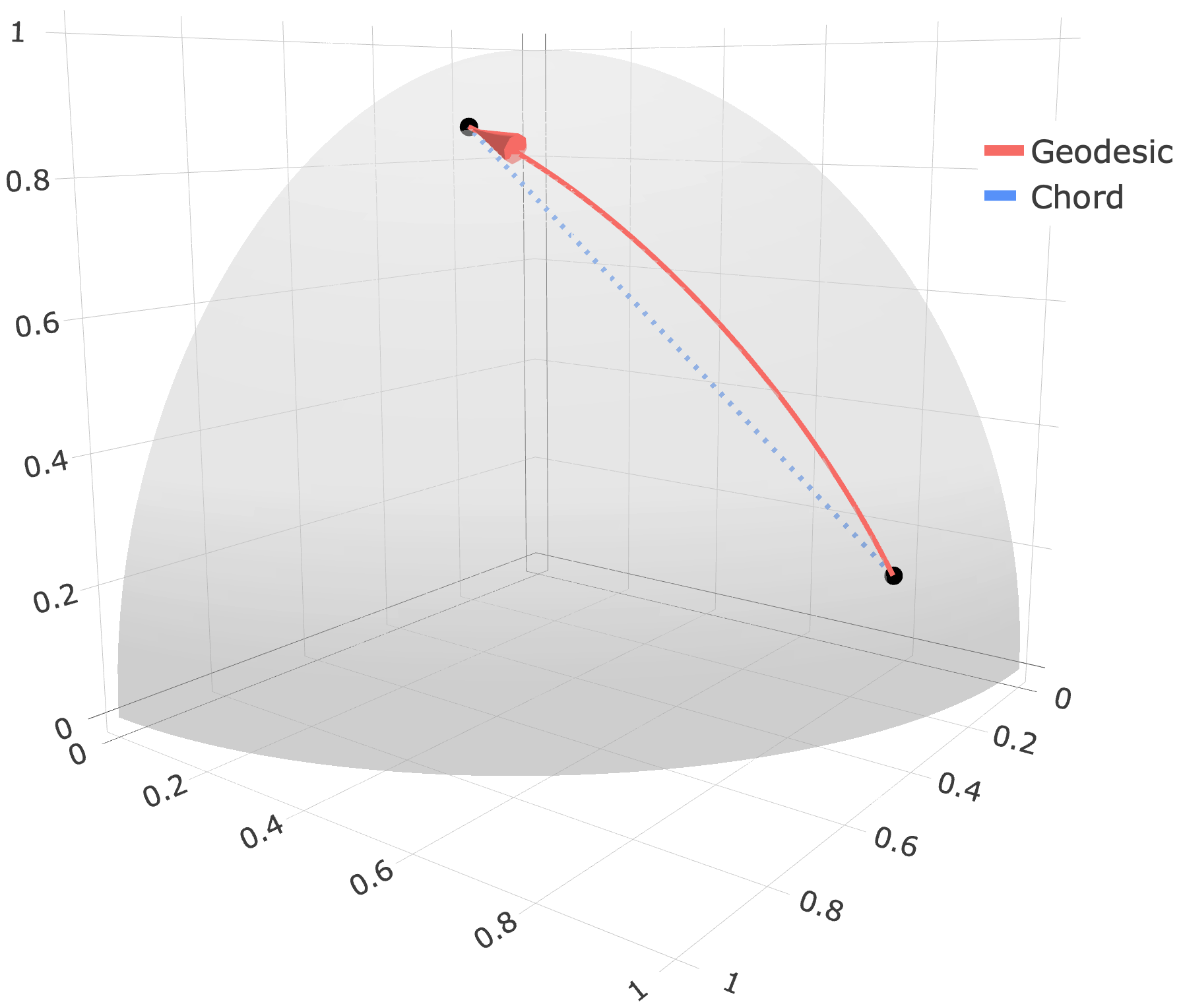}
    \caption{Geodesic (red solid) and Euclidean chord (blue dashed) between two points on $\mathcal S^2_{+}$, the positive orthant of the unit sphere. Many complex outcomes, including compositions after the square-root transformation, lie in this space. The geodesic follows the intrinsic shortest path, while the chord ignores the underlying geometry.}
    \label{fig:s2illus}
\end{figure}

A key contribution of our work is the \emph{geodesic average treatment effect} (GATE), defined as the \emph{entire geodesic curve} connecting the mean potential outcomes. In Euclidean settings, the GATE reduces to the line segment joining the two mean potential outcomes. In nonlinear spaces, however, the full geodesic captures both the \emph{magnitude} and the \emph{direction} of the change induced by treatment, providing insight into how the underlying object evolves along the shortest intrinsic path. This goes beyond scalar summaries such as the absolute average treatment effect of \citet{shin:24}, which quantify only the size of the effect. Their work, which appeared shortly after the initial circulation of our manuscript, focuses on distance-based estimands and matching strategies. By contrast, the GATE preserves the full geometric trajectory between potential outcomes and is amenable to doubly robust estimation, offering a more informative and flexible representation of treatment effects for complex outcomes.

The interpretability of the GATE arises from the fact that it reflects the intrinsic geometry of the outcome space. Figure~\ref{fig:s2illus} illustrates this in $\mathcal S^2_{+}$, the positive orthant of the unit sphere. Certain data types naturally reside in this space; for example, Example~\ref{exm:com} shows that compositional data equipped with the arclength metric can be viewed as elements of $\mathcal S^2_{+}$ \citep{scea:11,scea:14}. The red curve depicts the spherical geodesic, the shortest path compatible with the geometry, while the blue dotted segment is the Euclidean chord that ignores this structure. Much like a great-circle route on the globe provides the true shortest path between two locations, the geodesic captures the intrinsic direction of change between the two objects. This viewpoint extends naturally to other complex outcomes. For distributional data under the Wasserstein metric, the geodesic corresponds to mass transport; for networks, it describes systematic deformation of connectivity patterns; and for SPD matrices, it traces smooth transitions in covariance structure. Across these settings, the geodesic provides a coherent and interpretable description of how complex outcomes change.

A tempting but inadequate strategy for such data is to ignore the underlying geometry and analyze each coordinate or matrix entry separately. This treats the object as a collection of independent scalars and often violates its structural constraints. For compositional data, elementwise modeling can break the unit-sum constraint and yield invalid compositions. For networks, entrywise changes need not preserve connectivity structure, and summaries such as degrees or clustering coefficients capture only fragments of the overall object. For SPD matrices, coordinatewise adjustments can destroy positive-definiteness. In contrast, the GATE defines the treatment effect through the intrinsic geodesic between the mean potential outcomes, ensuring that every point along the path remains a valid object of the same type and capturing the coherent, geometry-respecting deformation induced by treatment.

Building on this intrinsic representation, we construct doubly robust estimators for the GATE and establish consistency and convergence rates. We further develop an intrinsic uncertainty quantification framework based on Fr\'echet objective functions, enabling the construction of test statistics, confidence regions, and hypothesis tests directly in the metric space. The finite-sample performance of the proposed estimators is examined through simulations, and applications to compositional data and network-valued outcomes highlight their practical utility.

The remainder of the paper is organized as follows. Section~\ref{Sec:Pre} reviews geodesic metric spaces, introduces scalar multiplication on geodesics, presents examples of relevant outcome spaces, and defines the GATE. Section~\ref{sec:GATE} develops doubly robust estimators for the GATE. Section~\ref{sec:DR} establishes their asymptotic properties, including consistency and convergence rates. Section~\ref{sec:uq} develops intrinsic uncertainty quantification procedures based on Fr\'echet objective functions. Section~\ref{sec:sim} reports simulation results for SPD matrices and compositional data, and Section~\ref{sec:data} illustrates the methodology through applications to taxi network data and electricity generation compositions. Section~\ref{sec:disc} concludes with a discussion of potential extensions and open questions.

\section{Treatment effects for random objects}\label{Sec:Pre}
\subsection{Preliminaries on geodesic metric spaces} \label{Sec:Pre1}
Consider a metric space \((\Omega, d)\) equipped with a probability measure $\Prob$. A \textit{curve} in \(\Omega\) is a continuous map \(\gamma : [0, 1] \to \Omega\) with length $L(\gamma) = \sup\sum_{i=0}^{I-1} d\{\gamma(t_{i}),\gamma(t_{i+1})\}$, where the supremum is taken over all possible partitions of the interval \([0, 1]\) with breakpoints \(0 = t_0 \leq t_1 \leq \cdots \leq t_I = 1\). A curve \(\gamma\) is called a \textit{geodesic} if it satisfies \(d(\gamma(s), \gamma(t)) = |t - s|d(\gamma(0), \gamma(1))\) for every \(s, t \in [0, 1]\). The space \((\Omega, d)\) is termed a \textit{geodesic metric space} if every pair of points \(\alpha, \beta \in \Omega\) is connected by a geodesic, denoted as \(\gamma_{\alpha, \beta}\) \citep{bura:01}.

We assume that \((\Omega, d)\) is a \textit{uniquely geodesic metric space}, meaning that for every pair of points in \(\Omega\), there exists exactly one geodesic connecting them. For \(\alpha, \beta, \zeta \in \Omega\), the following operations are defined for geodesics:
\begin{align*}
\gamma_{\alpha,\zeta}\oplus\gamma_{\zeta,\beta} & :=\gamma_{\alpha,\beta},\ \ominus \gamma_{\alpha,\beta} := \gamma_{\beta,\alpha},\ \mathrm{id}_{\alpha} :=\gamma_{\alpha,\alpha}.
\end{align*}

Additionally, we assume that \((\Omega, d)\) is \textit{extendable}, meaning that every geodesic can be extended to a geodesic line defined on \(\mathbb{R}\) \citep{brids:99}. Specifically, for any \(\alpha, \beta \in \Omega\), the geodesic \(\gamma_{\alpha, \beta} : [0, 1] \to \Omega\) can be extended to \(\bar{\gamma}_{\alpha, \beta} : \mathbb{R} \to \Omega\), such that the restriction of \(\bar{\gamma}_{\alpha, \beta}\) to \([0, 1]\) coincides with \(\gamma_{\alpha, \beta}\), i.e., \(\bar{\gamma}_{\alpha, \beta}|_{[0, 1]} = \gamma_{\alpha, \beta}\). We can then define a scalar multiplication for geodesics $\gamma_{\alpha, \beta}$ with a factor $\rho\in\mathbb{R}$ by
\begin{align*}
\rho \odot \gamma_{\alpha,\beta} & :=
\begin{cases}
\{\bar{\gamma}_{\alpha,\beta}(t):t\in[0,\rho]\} & \text{if $\rho>0$}\\
\{\bar{\gamma}_{\alpha,\beta}(t):t\in[\rho,0]\} & \text{if $\rho<0$}\\
\mathrm{id}_\alpha & \text{if $\rho=0$} 
\end{cases},\quad\rho\odot\mathrm{id}_{\alpha}:=\mathrm{id}_{\alpha}.
\end{align*}

In practical applications, the relevant metric space often constitutes a subset $(\mathcal{M}, d)$ of $(\Omega, d)$, which is typically closed and convex. Prominent instances of such spaces are given in Examples \ref{exm:net}--\ref{exm:com} below. We use these same example spaces to demonstrate the practical performance of the proposed approach in simulations and real-world data analyses. In Appendix B, we demonstrate that these examples satisfy the pertinent assumptions outlined in Sections \ref{sec:GATE} and \ref{sec:DR}. 

Geodesic extensions are confined to operate within the closed subset $\mathcal{M} \subset \Omega$, in some cases necessitating suitable modifications to ensure that they do not cross the boundary of $\mathcal{M}$ \citep{zhu:23}. Considering the forward extension of a geodesic where $\rho>1$, assume that the geodesic $\gamma_{\alpha,\beta}$ extends to the boundary point $\zeta$ and denote the extended geodesic by $\gamma_{\alpha, \zeta}: [0, 1]\mapsto\mathcal{M}$. We then define a scalar multiplication for $\rho>1$ by 
\[\rho\odot\gamma_{\alpha, \beta}=\{\gamma_{\alpha, \zeta}(t): t\in[0, h(\rho)]\},\]
where 
\[h(\rho)=-(1-\frac{d(\alpha, \beta)}{d(\alpha, \zeta)})^\rho+1.\]
It is straightforward to verify that the start point of $\rho\odot\gamma_{\alpha, \beta}$ is always $\alpha$, while the end point is $\beta$ for $\rho=1$ and $\zeta$ for $\rho=\infty$; see Figure~\ref{fig:ext} for an illustration. Similar modifications apply to the reverse extension where $\rho<-1$. To simplify the notation, we write the end point of $\rho\odot\gamma_{\alpha, \beta}$ as $\gamma_{\alpha, \beta}(\rho)$.

\begin{figure}[tb]
    \centering
    \includegraphics[width=0.45\linewidth]{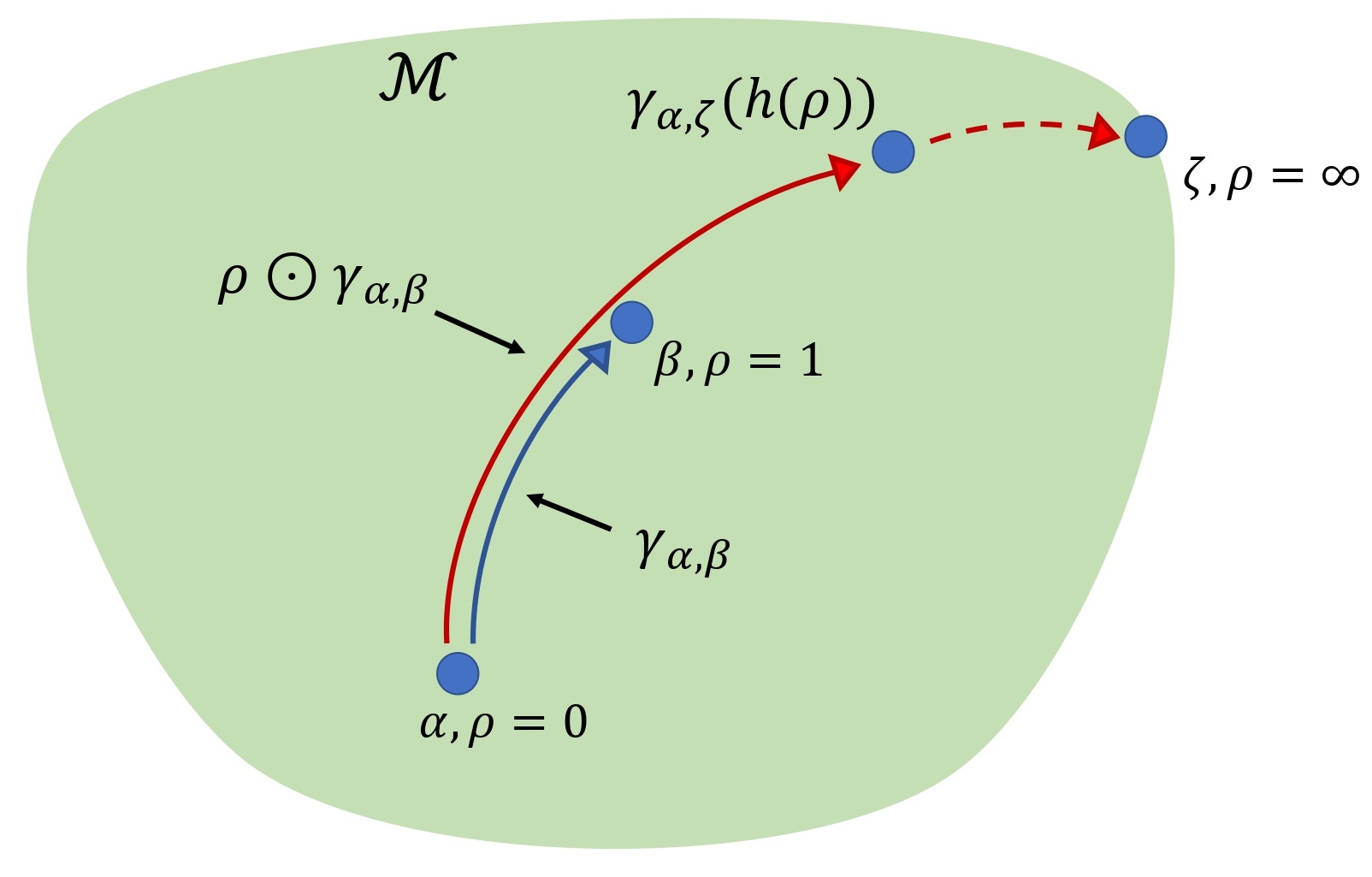}
    \caption{Illustration of geodesic extension. The circles symbolize objects in the geodesic metric space $\mathcal{M}$. The arrows emanating from these circles represent the geodesic paths connecting them.}
    \label{fig:ext}
\end{figure}

\begin{example}[Networks]
\label{exm:net}
Network data represent interactions among units such as individuals in a social system, locations in a transportation network, or genes in a regulatory pathway. A common analytic representation is the graph Laplacian \citep{kola:14,zhou:22,seve:22}. For an undirected, weighted graph with $m$ nodes and bounded edge weights, its Laplacian lies in a convex subset of $\mathbb{R}^{m^2}$. When equipped with the Frobenius metric $d_F(L_1, L_2) = \|L_1 - L_2\|_F$, geodesics between Laplacians are given by the linear interpolation $\gamma_{L_1,L_2}(t) = (1 - t)L_1 + tL_2, t \in [0,1]$.
\end{example}

\begin{example}[Symmetric positive-definite matrices]
\label{exm:cor}
Symmetric positive-definite (SPD) matrices are frequently used to represent structured second-order information, such as covariance patterns, diffusion tensors, or kernel matrices. Let $\mathrm{Sym}_m^{+}$ denote the space of $m\times m$ SPD matrices. Several geodesic metrics are commonly used on this space. Under the Frobenius metric $d_F(A,B)=\|A-B\|_F$, the geodesic between $A$ and $B$ is the linear path $\gamma_{A, B}(t)=(1-t)A+tB$. The Log-Euclidean metric \citep{arsi:07}, defined by $d_{LE}(A,B)=\|\log A-\log B\|_F$, instead yields the geodesic $\gamma_{A, B}(t)=\exp\{(1-t)\log A+t\log B\}$. Both metrics endow $\mathrm{Sym}_m^{+}$ with a uniquely geodesic structure, and related constructions such as the power metric \citep{dryd:09} and the Log-Cholesky metric \citep{lin:19:1} provide additional geometries with similar properties.
\end{example}

\begin{example}[Functional data]
\label{exm:fun}
Functional data refer to observations that take the form of smooth curves or trajectories, often recorded over time or across a spatial domain \citep{rams:05,hsin:15,mull:16:3}. Let $\mathcal{T}$ be a compact interval and consider the space $L^{2}(\mathcal{T})$ of square-integrable functions on this domain. This space carries the inner product $\langle f,g\rangle=\int_{\mathcal{T}} f(t)g(t)\,dt$ and the corresponding metric $d_{L^{2}}(f,g)=\left(\int_{\mathcal{T}}\{f(t)-g(t)\}^{2}\,dt\right)^{1/2}$. Equipped with this structure, $L^{2}(\mathcal{T})$ is a Hilbert space, and geodesics are simply linear blends of functions, given by $\gamma_{f,g}(t)=(1-t)f+tg$ for $t\in[0,1]$.
\end{example}

\begin{example}[Univariate probability measures]
\label{exm:mea}
Consider univariate probability measures on a closed interval $\mathcal{I}\subset\mathbb{R}$ with finite second moments. The space of such probability measures, equipped with the Wasserstein metric $d_{\mathcal{W}}$, is known as the Wasserstein space $(\mathcal{W}, d_{\mathcal{W}})$, which is a complete and separable metric space \citep{ambr:08}. The Wasserstein metric between any two probability measures $\mu$ and $\nu$ is
\[d_{\mathcal{W}}^2(\mu, \nu)=\int_0^1\{F_{\mu}^{-1}(p)-F_{\nu}^{-1}(p)\}^2dp,\]
where $F_{\mu}^{-1}(\cdot)$, $F_{\nu}^{-1}(\cdot)$ denote the quantile functions of $\mu$ and $\nu$, respectively. Write $\tau_\#\mu$ for the pushforward measure of $\mu$ by the transport $\tau$. The geodesic between $\mu$ and $\nu$ is given by McCann's interpolant \citep{mcca:97}, $\gamma_{\mu, \nu}(t)=(\mathrm{id}+t(F_\nu^{-1}\circ F_\mu-\mathrm{id}))_\#\mu$, where $\mathrm{id}$ and $F_\mu$ denote the identity map and cumulative distribution function of $\mu$, respectively.
\end{example}

\begin{example}[Compositional data]
\label{exm:com}
Compositional data takes values in the simplex
\[\Delta^{d-1}=\{\bm{y}\in\mathbb{R}^d: y_j\geq0, j=1, \ldots, d,\text{ and }\sum_{j=1}^dy_j=1\},\]
reflecting that such data are non-negative proportions that sum to 1. Consider the component-wise square root $\sqrt{\bm{y}}=(\sqrt{y_1}, \ldots, \sqrt{y_d})'$ of $\bm{y}\in\Delta^{d-1}$, the simplex $\Delta^{d-1}$ can be mapped to the first orthant of the unit sphere $\mathcal{S}^{d-1}_+=\{\bm{z}\in\mathcal{S}^{d-1}:z_j\geq0, j=1, \ldots, d\}$ \citep{scea:11,scea:14}. Equipping $\mathcal{S}^{d-1}_+$ with the geodesic (Riemannian) metric on the sphere $d_g(\bm{z}_1, \bm{z}_2)=\mathrm{arccos}(\bm{z}_1'\bm{z}_2)$ for $\bm{z}_1, \bm{z}_2\in\mathcal{S}^{d-1}_+$ induces a uniquely geodesic structure. The geodesic connecting $\bm{z}_1$ and $\bm{z}_2$ is 
\[\gamma_{\bm{z}_1, \bm{z}_2}(t)=\cos(\theta t)\bm{z}_1+\sin(\theta t)\frac{\bm{z}_2-(\bm{z}_1'\bm{z}_2)\bm{z}_1}{\|\bm{z}_2-(\bm{z}_1'\bm{z}_2)\bm{z}_1\|},\]
where $\theta=\arccos(\bm{z}_1'\bm{z}_2)$ represents the angle between $\bm{z}_1$ and $\bm{z}_2$.

This idea can be extended to distributions, where one considers a separable Hilbert space $\mathcal{H}$ with inner product $\langle\cdot, \cdot\rangle_{\mathcal{H}}$ and norm $\|\cdot\|_{\mathcal{H}}$. The Hilbert sphere $\mathcal{S}=\{g\in\mathcal{H}: \|g\|_{\mathcal{H}}=1\}$ is an infinite-dimensional extension of finite-dimensional spheres. The space of (multivariate) absolutely continuous distributions can be equipped with the Fisher-Rao metric $d_R(f_1, f_2)=\mathrm{arccos}(\langle\sqrt{f_1}, \sqrt{f_2}\rangle_{\mathcal{H}})$ and then modeled as a subset of the Hilbert sphere $\mathcal{S}$, where the Fisher-Rao metric pertains to the geodesic metric between square roots of densities $f_1$ and $f_2$.
\end{example}

\subsection{Geodesic average treatment effects}\label{Sec:Pre2}
Let $(\mathcal{M},d)$ be a uniquely extendable geodesic metric space that is complete and totally bounded. Units $i=1,\dots,n$ are either assigned to treatment or control. For the $i$-th unit, we observe a treatment indicator $T_i \in \{0,1\}$, where $T_i=1$ indicates treatment and $T_i=0$ otherwise. Let $Y_i(0), Y_i(1) \in \mathcal{M}$ denote the potential outcomes under control and treatment, respectively. We assume the stable unit treatment value assumption (SUTVA), which entails no interference between units and no hidden versions of treatment, and implies the consistency relation
\begin{align*}
Y_i =
\begin{cases}
Y_i(0) & \text{if } T_i = 0,\\
Y_i(1) & \text{if } T_i = 1.
\end{cases}
\end{align*}
In addition, we observe Euclidean covariates $X_i \in \mathbb{R}^p$. We assume that $\{(Y_i, T_i, X_i)\}_{i=1}^n$ is an i.i.d. sample from a joint distribution $\Prob$ on $\mathcal{M} \times \{0,1\} \times \mathcal{X}$, where a generic random variable is $(Y,T,X)$, all conditional distributions are well-defined, expectations and conditional expectations are taken with respect to $\Prob$, and $\mathcal{X}$ is assumed to be a compact subset of $\mathbb{R}^p$.

For scalar outcomes, the average treatment effect is defined as $\E[Y(1)]-\E[Y(0)]$, which is expressed as the difference between expected potential outcomes. This formulation relies fundamentally on two operations: expectation and subtraction. However, in general metric spaces, these operations are not directly available due to the absence of linear structure and must be replaced by their geometric counterparts.

Expectation for metric space-valued outcomes is generalized by the \f mean \citep{frec:48},
\[\E_\oplus[Y] = \argmin_{\nu \in \mathcal{M}} \E[d^2(\nu, Y)],\]
provided the minimizer exists and is unique. Such properties are guaranteed under suitable geometric conditions, for example in Hadamard spaces \citep{stur:03}. In the presence of covariates $X \in \mathbb{R}^p$, the corresponding conditional \f mean \citep{PeMu19} is given by
\[\E_\oplus[Y|X] = \argmin_{\nu \in \mathcal{M}} \E[d^2(\nu, Y)|X],\]
where the expectation is taken with respect to the conditional distribution of $Y$ given $X$. This definition recovers the classical conditional expectation when $\mathcal{M} = \mathbb{R}$ and $d$ is the Euclidean distance, and serves as the natural regression target for metric space-valued responses.

The second ingredient in defining causal effects is the notion of difference. In Euclidean space, the difference $\E[Y(1)] - \E[Y(0)]$ can be interpreted geometrically as the straight-line segment connecting $\E[Y(0)]$ to $\E[Y(1)]$. In a uniquely geodesic metric space, the analogue of a straight line is a geodesic, i.e., the shortest path connecting two points. Therefore, the comparison between mean potential outcomes is naturally represented by the geodesic connecting their \f means. Figure~\ref{fig:s2illus} provides an illustration of geodesics on the positive orthant of the unit sphere $\mathcal{S}^2_{+}$. These considerations lead to the following definition of causal effects in geodesic metric spaces.
\begin{definition}[Geodesic individual and average treatment effects]\label{def:GATE}
The geodesic individual treatment effect of $T$ on $Y$ for unit $i$ is defined as the geodesic connecting the two potential outcomes, $\gamma_{Y_i(0), Y_i(1)}$. The \emph{geodesic average treatment effect} (GATE) is defined as the geodesic connecting the \f means of the potential outcomes, $\gamma_{\E_\oplus[Y(0)], \E_\oplus[Y(1)]}$.
\end{definition}
When $(\mathcal{M}, d) = (\mathbb{R}, |\cdot|)$, geodesics reduce to line segments, and the definitions above recover the classical treatment effects: the individual treatment effect $Y_i(1) - Y_i(0)$ and the average treatment effect $\E[Y(1)] - \E[Y(0)]$. More generally, a geodesic connecting two points in $(\mathcal{M}, d)$ encodes both the minimal distance and directional information between them with respect to the metric. Consequently, Definition~\ref{def:GATE} provides a natural geometric generalization of treatment effects, capturing both magnitude and direction of causal impact in spaces where linear structure is absent.

\begin{remark}
    As pointed out by a referee, an alternative formulation of the GATE can be defined via conditional \f means, namely as $\gamma_{\E_{\oplus}[m_0(X)],\E_{\oplus}[m_1(X)]}$, where $m_t(x)=\E_{\oplus}[Y(t)|X=x]$. Under the non-Euclidean analogue of the law of iterated expectations (cf. Assumption~\ref{Ass:p-map}(i)), this coincides with Definition~\ref{def:GATE}. This perspective also suggests an estimation strategy based on doubly robust pseudo-outcomes applied to the scalar conditional risk functions $M_t(\nu,x)=\E[d^2(\nu,Y(t))|X=x]$ for each fixed $\nu$, followed by a minimization step to recover $m_t(x)$ and an outer empirical \f mean; see \citet{kenn:23, kenn:24}. While this yields doubly robust estimation of $M_t(\nu,x)$ for each $\nu$, the resulting estimator of $m_t(x)=\argmin_{\nu\in\mathcal{M}} M_t(\nu,x)$ requires additional control of the estimated risk surface over $\nu$ and stability of the minimizer. Moreover, it entails solving optimization problems over $\mathcal{M}$ for each $x$, which can be nontrivial in general geodesic metric spaces where no canonical parametrization or discretization is available. In contrast, the proposed formulation directly targets $\E_{\oplus}[Y(t)]$ and yields a unified geometric doubly robust estimator without introducing an intermediate optimization over $\nu$.
\end{remark}

\section{Doubly robust estimation of the GATE}\label{sec:GATE}

\subsection{Doubly robust identification}\label{subsec:GATE1}
We now develop a doubly robust representation for the GATE. The key idea is to reinterpret the classical Euclidean doubly robust estimand in geometric terms and then extend this structure to general geodesic metric spaces by replacing linear operations with their intrinsic geometric counterparts. We begin with the standard assumptions that enable identification of causal effects from observational data.

\begin{assumption}\label{Ass:GATE} \quad
\begin{itemize}
\item[(i)] There exists a constant $\eta_0 \in (0,1/2)$ such that $\eta_0 \leq p(x) \leq 1-\eta_0$ for each $x \in \mathcal{X}$. 
\item[(ii)] $T$ and $\{Y(0), Y(1)\}$ are conditionally independent given $X$. 
\end{itemize}
\end{assumption}

Assumptions \ref{Ass:GATE}(i) and (ii) correspond to the standard overlap and unconfoundedness conditions in causal inference. The overlap condition ensures that, for any covariate value $x$, both treatment and control groups are sufficiently represented, while unconfoundedness guarantees that, conditional on $X$, the treatment assignment is independent of the potential outcomes.

Let $p(x)=\Prob(T=1|X=x)$ denote the propensity score. For $t \in \{0,1\}$, define the conditional \f means of the potential outcomes as $m_t(x)=\E_{\oplus}[Y(t)|X=x]$, which play the role of outcome regression functions in the present setting. To motivate the construction in metric spaces, it is useful to recall the corresponding formulation in the Euclidean case. Let $e(x)$ and $\mu_t(x)$, $t \in \{0,1\}$, be models for the propensity score $p(x)$ and the outcome regression functions $m_t(x)$, respectively. Under Assumption~\ref{Ass:GATE}, the mean potential outcomes satisfy the classical doubly robust identity
\begin{equation}\label{eq:DR-Euclid}
\E[Y(t)] = \E\left[\mu_t(X) + \left(\frac{tT}{e(X)} + \frac{(1-t)(1-T)}{1-e(X)}\right)\big(Y - \mu_t(X)\big)\right].
\end{equation}
This identity is doubly robust in the sense that it remains valid if either the propensity score model or the outcome regression model is correctly specified.

A key observation is that the Euclidean pseudo-outcome admits a geometric interpretation. In Euclidean space, geodesics coincide with line segments, and for any $a,y \in \mathbb{R}$ and $\rho \in \mathbb{R}$, one can write $a + \rho (y-a) = \gamma_{a,y}(\rho)$. Therefore, \eqref{eq:DR-Euclid} can be equivalently written as
\[\E[Y(t)] = \E\left[\gamma_{\mu_t(X),Y}\left(\frac{tT}{e(X)} + \frac{(1-t)(1-T)}{1-e(X)}\right)\right].\]
This formulation reveals that the doubly robust estimator can be viewed as an average of points obtained by appropriately weighting (and potentially extrapolating along) the geodesic from $\mu_t(X)$ toward $Y$. This geometric perspective suggests a natural extension to general metric spaces: replace Euclidean line segments by geodesics and replace expectations by \f means.

To carry out this extension, we require structural properties of the \f mean that parallel those of Euclidean expectation.

\begin{assumption}\label{Ass:p-map}\quad
\begin{itemize}
    \item[(i)] For any random object $Y \in \mathcal{M}$ and random vector $X \in \mathbb{R}^p$, the \f mean satisfies $\E_\oplus[Y] = \E_\oplus[\E_\oplus[Y|X]]$.
    \item[(ii)] For any fixed $\alpha \in \mathcal{M}$, random object $Y \in \mathcal{M}$, and real-valued $U \in \mathbb{R}$ independent of $Y$, one has $\E_\oplus[\gamma_{\alpha,Y}(U)]=\gamma_{\alpha,\,\E_\oplus[Y]}\left(\E[U]\right)$.
\end{itemize}
\end{assumption}

Assumption \ref{Ass:p-map} provides structural properties of the \f mean that parallel key features of expectation in Euclidean spaces. Part (i) serves as the non-Euclidean analogue of the law of iterated expectations, ensuring coherence between conditional and marginal \f means. Part (ii) formalizes how \f means behave under geodesic interpolation, allowing weighted geodesics to be averaged in a manner consistent with their Euclidean counterparts. We verify that Assumption \ref{Ass:p-map} holds for Examples~\ref{exm:net}--\ref{exm:mea} in Appendix B.

Motivated by the Euclidean formulation, define
\begin{equation}\label{eq:dr}
    \Theta_t^{(\mathrm{DR})} = \E_\oplus\left[\gamma_{\mu_t(X),Y}\left(\frac{tT}{e(X)} + \frac{(1-t)(1-T)}{1-e(X)}\right)\right], \quad t \in \{0,1\}.
\end{equation}
By definition of the \f mean, one has $\Theta_t^{(\mathrm{DR})} = \argmin_{\nu \in \mathcal{M}}Q_t(\nu;e, \mu_t)$ where
\begin{equation}\label{eq:qt}
Q_t(\nu;e, \mu_t)=\E\left[d^2\left(\nu,\gamma_{\mu_t(X),Y}\left(\frac{tT}{e(X)} + \frac{(1-t)(1-T)}{1-e(X)}\right)\right)\right]. 
\end{equation}
The following result establishes the doubly robust identification of the GATE.

\begin{proposition}\label{Prp:GDR-identify}
Suppose that Assumptions~\ref{Ass:GATE} and \ref{Ass:p-map} hold. If either the propensity score model $e(\cdot)$ or the outcome regression models $\mu_t(\cdot)$ are correctly specified, that is, either $e(\cdot)=p(\cdot)$ or $\mu_t(\cdot)=m_t(\cdot)$ for $t \in \{0,1\}$, then
\begin{equation}\label{eq:GDR}
\gamma_{\E_{\oplus}[Y(0)], \E_{\oplus}[Y(1)]} = \gamma_{\Theta_0^{(\mathrm{DR})}, \Theta_1^{(\mathrm{DR})}}.
\end{equation}
\end{proposition}

Proposition~\ref{Prp:GDR-identify} shows that the GATE remains identifiable when either nuisance component is correctly specified, thereby preserving the double robustness property. This provides a direct geometric analogue of classical doubly robust estimands, where Euclidean differences and expectations are replaced by geodesics and \f means, respectively.

\subsection{Doubly robust estimation}\label{subsec:GATE2}

Building on Proposition~\ref{Prp:GDR-identify}, we construct a sample-based doubly robust estimator for the GATE. Specifically, the estimator is defined as the geodesic $\gamma_{\hat{\Theta}_0^{(\mathrm{DR})},\hat{\Theta}_1^{(\mathrm{DR})}}$, where 
\begin{align}
    \hat{\Theta}_t^{(\mathrm{DR})}&=\argmin_{\nu \in \mathcal{M}}Q_{n,t}(\nu;\hat{e}, \hat{\mu}_t),\label{eq:drhat}\\
    Q_{n,t}(\nu;\hat{e}, \hat{\mu}_t)&= \frac{1}{n}\sum_{i=1}^nd^2\left(\nu, \gamma_{\hat{\mu}_t(X_i), Y_i}\left(\frac{tT_i}{\hat{e}(X_i)} + \frac{(1-t)(1 - T_i)}{1 - \hat{e}(X_i)}\right)\right).\label{eq:qnt}
\end{align}
Here $\hat{e}(\cdot)$ and $\hat{\mu}_t(\cdot)$ denote estimators of the propensity score model $e(\cdot)$ and the outcome regression models $\mu_t(\cdot)$, respectively. The objective function $Q_{n,t}$ corresponds to the empirical \f functional of the doubly robust pseudo-outcomes, and thus $\hat{\Theta}_t^{(\mathrm{DR})}$ can be interpreted as their sample \f mean.

In practice, the choice of estimators for the nuisance components depends on the application. In our real data analysis, the propensity score is estimated via logistic regression, while the outcome regression function is estimated using global \f regression \citep{PeMu19}. Specifically, for $t \in \{0,1\}$,
\begin{equation}\label{eq:GFR}
    \mu_t(x) =\argmin_{\nu \in \mathcal{M}}\E\left[\left\{1 + (X - \E[X])'\Sigma^{-1}(x - \E[X])\right\}d^2(\nu,Y)|T=t\right],
\end{equation}
where $\Sigma = \Var(X)$. The corresponding sample estimator is
\[\hat{\mu}_t(x)=\argmin_{\nu \in \mathcal{M}}\frac{1}{N_t}\sum_{i \in I_t} \left\{1 + (X_i - \bar{X})'\hat{\Sigma}^{-1}(x - \bar{X})\right\}d^2(\nu,Y_i),\]
where $I_t = \{1 \leq i \leq n : T_i = t\}$, $N_t = |I_t|$, $\bar{X} = n^{-1} \sum_{i=1}^n X_i$, and $\hat{\Sigma} = n^{-1} \sum_{i=1}^n (X_i - \bar{X})(X_i - \bar{X})'$. The \f regression framework extends classical regression methods to accommodate responses taking values in general metric spaces; see \citet{PeMu19} for further details.

\begin{remark}[Cross-fitting estimator]
In the Euclidean case, doubly robust estimation methods have been combined with the cross-fitting approach to mitigate over-fitting \citep{CCDDHNR18}. The proposed doubly robust estimator $\gamma_{\hat{\Theta}_0^{(\mathrm{DR})},\hat{\Theta}_1^{(\mathrm{DR})}}$ can be similarly adapted to incorporate cross-fitting. Let $\{S_k\}_{k=1}^K$ be a (random) partition of $\{1,\dots,n\}$ into $K$ subsets. For each $k$, let $\hat{e}_k$ and $\hat{\mu}_{t,k}$ be the estimators of $e$ and $\mu_t(x)$ computed from the data $\{Y_i,T_i,X_i\}_{i \in S_{-k}}$, where $S_{-k} = \cup_{j \neq k}S_j$. Setting 
\begin{align*}
\hat{\Theta}_{t,k}^{(\mathrm{DR})}& = \argmin_{\nu \in \mathcal{M}}\frac{1}{n_k}\sum_{i \in S_k}d^2\left(\nu, \gamma_{\hat{\mu}_{t,k}(X_i), Y_i}\left(\frac{tT_i}{\hat{e}_k(X_i)} + \frac{(1-t)(1 - T_i)}{1 - \hat{e}_k(X_i) }\right)\right),\quad t \in \{0,1\},
\end{align*}
where $n_k$ is the sample size of $S_k$, the cross-fitting doubly robust estimator for the GATE is obtained as $\gamma_{\hat{\Theta}_0^{(\mathrm{CF})},\hat{\Theta}_1^{(\mathrm{CF})}}$, with 
\begin{align}\label{eq:cf}
\hat{\Theta}_t^{(\mathrm{CF})}& = \argmin_{\nu \in \mathcal{M}}\sum_{k=1}^K\frac{n_k}{n}d^2(\nu, \hat{\Theta}_{t,k}^{(\mathrm{DR})}),\quad t \in \{0,1\}.
\end{align}
\end{remark}

\begin{remark}[Outcome regression and inverse probability weighting estimators]\label{rem:OR-IPW}
When the outcome regression models are correctly specified, that is, $\mu_t(\cdot)=m_t(\cdot)$ for $t\in\{0,1\}$, the \f mean of the potential outcome satisfies $\E_\oplus[Y(t)] = \E_\oplus[\mu_t(X)]$. This motivates the outcome regression estimator of the GATE, obtained by replacing population quantities with their empirical counterparts. Specifically, the estimator is given by $\gamma_{\hat{\Theta}_0^{(\mathrm{OR})},\hat{\Theta}_1^{(\mathrm{OR})}}$, where
\begin{align*}
\hat{\Theta}_t^{(\mathrm{OR})} = \argmin_{\nu \in \mathcal{M}}\frac{1}{n}\sum_{i=1}^{n}d^2(\nu,\hat{\mu}_t(X_i)),\quad t \in \{0,1\}.
\end{align*}

When the propensity score model is correctly specified, that is, $e(\cdot)=p(\cdot)$, the mean potential outcome can be identified via inverse probability weighting. In the Euclidean setting, this relies on the identity
\[\E[Y(t)]=\E\left[\left(\frac{tT}{e(X)} + \frac{(1-t)(1-T)}{1-e(X)}\right)Y\right],\]
where the weights re-balance the observed outcomes to mimic a randomized experiment. Rewriting this expression around a reference point $y_\oplus$,
\[\E[Y(t)]=\E\left[y_\oplus+\left(\frac{tT}{e(X)} + \frac{(1-t)(1-T)}{1-e(X)}\right)(Y-y_\oplus)\right],\]
suggests a geometric interpretation: the pseudo-outcome is obtained by transporting $y_\oplus$ toward $Y$ along the line segment connecting them, with a weight determined by the propensity score. In a geodesic metric space, this linear interpolation is naturally replaced by a geodesic. Accordingly, we define
\[\Theta_t^{(\mathrm{IPW})}=\E_{\oplus}\left[\gamma_{y_\oplus, Y}\left(\frac{tT}{e(X)} + \frac{(1-t)(1-T)}{1-e(X)}\right)\right],\]
where $y_\oplus$ is a reference point, taken in practice as the sample \f mean of $\{Y_i\}_{i=1}^n$. The inverse probability weighting estimator of the GATE is then $\gamma_{\Theta_0^{(\mathrm{IPW})}, \Theta_1^{(\mathrm{IPW})}}$, with empirical version $\gamma_{\hat{\Theta}_0^{(\mathrm{IPW})}, \hat{\Theta}_1^{(\mathrm{IPW})}}$, where
\begin{align*}
\hat{\Theta}_t^{(\mathrm{IPW})} = \argmin_{\nu \in \mathcal{M}}\frac{1}{n}\sum_{i=1}^{n}d^2\left(\nu,\gamma_{y_\oplus,Y_i}\left(\frac{tT_i}{\hat{e}(X_i)} + \frac{(1-t)(1 - T_i)}{1 - \hat{e}(X_i) }\right)\right),\quad t \in \{0,1\},
\end{align*}
The asymptotic properties of $\hat{\Theta}_t^{(\mathrm{OR})}$ and $\hat{\Theta}_t^{(\mathrm{IPW})}$ are provided in Appendix A and their finite-sample performance is examined in Section~\ref{sec:sim}. 
\end{remark}

\section{Consistency and rates of convergence}\label{sec:DR}

To study the asymptotic behavior of the proposed doubly robust and cross-fitting estimators, we impose conditions on the nuisance estimators for the propensity score and outcome regression functions.

\begin{assumption}\label{Ass:model-prop-reg} \quad
\begin{itemize}
\item[(i)] $\sup_{x \in \mathcal{X}}|\hat{e}(x) - e(x)| = O_{\Prob}(\varrho_n)$ with $\varrho_n \to 0$ as $n \to \infty$, and $\eta_0 \leq e(x) \leq 1-\eta_0$ for all $x \in \mathcal{X}$, where $\eta_0$ is the constant in Assumption~\ref{Ass:GATE}(i).
\item[(ii)] For $t \in \{0,1\}$, $\sup_{x \in \mathcal{X}}d(\hat{\mu}_t(x),\mu_t(x)) = O_{\Prob}(r_n)$ with $r_n \to 0$ as $n \to \infty$. 
\end{itemize}
\end{assumption}

Assumption~\ref{Ass:model-prop-reg} characterizes the convergence rates of the nuisance estimators $\hat e$ and $\hat\mu_t$ approach their respective population targets $e$ and $\mu_t$. These population targets are determined by the chosen models and need not coincide with the true propensity score $p$ or the true conditional \f mean $m_t$ unless the corresponding models are correctly specified. For commonly used parametric models, such as logistic regression for the propensity score, Assumption~\ref{Ass:model-prop-reg}(i) typically holds with $\varrho_n = n^{-1/2}$. For the outcome regression, if one employs global \f regression as in \eqref{eq:GFR}, then under suitable regularity conditions, Assumption~\ref{Ass:model-prop-reg}(ii) holds with $r_n = n^{-\alpha'}$ for some $\alpha' < 1/2$; see Theorem~2 of \citet{PeMu19}. Alternatively, local \f regression \citep{ChMu22} may be used to allow for greater flexibility, at the cost of slower convergence rates and sensitivity to the dimensionality of the covariates.

In addition to the regularity conditions on the nuisance estimators, we impose a geometric condition on the underlying metric space.

\begin{assumption}\label{Ass:geodesic-dist}
For all $\alpha_1,\alpha_2 \in (\mathcal{M},d)$, 
\begin{equation}\label{ineq:geodesic-dist}
\sup_{\beta \in \mathcal{M}, \kappa \in [1/(1-\eta_0),1/\eta_0]}d\left(\gamma_{\alpha_1,\beta}(\kappa), \gamma_{\alpha_2,\beta}(\kappa)\right) \leq C_0d(\alpha_1,\alpha_2),
\end{equation}
for some positive constant $C_0$ depending only on $\eta_0$. 
\end{assumption}

Assumption~\ref{Ass:geodesic-dist} provides a uniform Lipschitz control on geodesic interpolations with respect to their starting points, which is essential for analyzing the empirical objective function in \eqref{eq:qnt}. When the space $\mathcal{M}$ is a CAT($0$) space (i.e., a geodesic metric space that is globally non-positively curved in the sense of Alexandrov) \citep{stur:03}, this assumption is automatically satisfied. Many metric spaces of statistical interest are CAT($0$) spaces, such as the Wasserstein space of univariate distributions, the space of networks expressed as graph Laplacians with the (power) Frobenius metric, SPD matrices with the affine-invariant metric \citep{than:23}, the Log-Cholesky metric \citep{lin:19:1} and various other metrics \citep{pigo:14}, as well as phylogenetic tree space with the BHV metric \citep{bill:01}. Assumption \ref{Ass:geodesic-dist} also holds for some relevant positively curved spaces such as open hemispheres or orthants of spheres that are relevant for compositional data with the Riemannian metric; these spaces are not CAT(0), see Example \ref{exm:com}.

We further impose conditions on the population and empirical objective functions, as well as the identification of the target parameter.
 
\begin{assumption}\label{Ass:DR-criterion} \quad
\begin{itemize}
\item[(i)] For $t \in \{0,1\}$, the population minimizer $\Theta_t^{(\mathrm{DR})}$ and its empirical counterpart $\hat{\Theta}_t^{(\mathrm{DR})}$, defined in~\eqref{eq:dr} and~\eqref{eq:drhat}, exist and are unique. Moreover, for any $\varepsilon>0$, 
\[\inf_{d(\nu, \Theta_t^{(\mathrm{DR})})>\varepsilon}Q_t(\nu;e,\mu_t) > Q_t(\Theta_t^{(\mathrm{DR})};e,\mu_t).\]
\item[(ii)] At least one of the nuisance models is correctly specified, that is, either $e(\cdot) = p(\cdot)$ or $\mu_t(\cdot) = m_t(\cdot)$ for $t \in \{0,1\}$.
\end{itemize}
\end{assumption}

Assumption~\ref{Ass:DR-criterion}(i) is a standard separation condition ensuring well-posedness and consistency of the M-estimator; see, for example, Chapter~3.2 of \citet{vaWe23}. Assumption~\ref{Ass:DR-criterion}(ii) is the familiar double robustness condition, which guarantees identification of the target parameter when at least one of the nuisance models is correctly specified. 

The following result provides the consistency of the doubly robust and cross-fitting estimators $\hat{\Theta}_t^{(\mathrm{DR})}$ in \eqref{eq:drhat} and $\hat{\Theta}_t^{(\mathrm{CF})}$ in \eqref{eq:cf}.

\begin{theorem}\label{Thm:GDR-conv}
Suppose that Assumptions \ref{Ass:GATE}, \ref{Ass:p-map}, \ref{Ass:geodesic-dist}, and \ref{Ass:DR-criterion} hold.
\begin{itemize}
\item[(i)]Under Assumption \ref{Ass:model-prop-reg}, as $n \to \infty$, 
\[d(\hat{\Theta}_t^{(\mathrm{DR})},\E_\oplus[Y(t)]) = o_{\Prob}(1),\quad t \in \{0,1\}.\]
\item[(ii)] Suppose that Assumption \ref{Ass:model-prop-reg} holds for $\{\hat{e}_k,\hat{\mu}_{t,k}\}_{k=1}^K$ and there exist constants $c_1$ and $c_2$ such that $0<c_1 \leq n_k/n \leq c_2<1$ for all $k=1,\dots,K$. Then as $n \to \infty$, 
\[d(\hat{\Theta}_t^{(\mathrm{CF})},\E_\oplus[Y(t)]) = o_{\Prob}(1),\quad t \in \{0,1\}.\]
\end{itemize}
\end{theorem}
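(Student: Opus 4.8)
The plan is to prove both parts by the standard consistency argument for $M$-estimators (see, e.g., Chapter 3.2 of \cite{vaWe23}): it suffices to show that the empirical criterion converges uniformly in probability to the population criterion, after which the separation condition Assumption \ref{Ass:DR-criterion}(i) and the argmin theorem yield $d(\hat{\Theta}_t^{(\mathrm{DR})},\Theta_t^{(\mathrm{DR})}) = o_p(1)$, and Assumption \ref{Ass:DR-criterion}(ii) identifies the limit as $\Theta_t^{(\mathrm{DR})} = \E_\oplus[Y(t)]$. The substantive task is thus to establish
\[
\sup_{\nu \in \mathcal{M}}\left|Q_{n,t}(\nu;\hat{\mu}_t,\hat{\varphi}) - Q_t(\nu;\mu_t,\varphi_*)\right| = o_p(1),
\]
which I would split, via the triangle inequality, into a \emph{nuisance plug-in} piece $\sup_\nu|Q_{n,t}(\nu;\hat{\mu}_t,\hat{\varphi}) - Q_{n,t}(\nu;\mu_t,\varphi_*)|$ and a \emph{uniform law of large numbers} piece $\sup_\nu|Q_{n,t}(\nu;\mu_t,\varphi_*) - Q_t(\nu;\mu_t,\varphi_*)|$.

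For the plug-in piece, write $P_i(\mu,\varphi) = \gamma_{\mu(X_i),Y_i}(\kappa_i(\varphi))$ for the perturbed target with weight $\kappa_i(\varphi) = tT_i/e(X_i;\varphi) + (1-t)(1-T_i)/(1-e(X_i;\varphi))$, and abbreviate $\hat{\kappa}_i = \kappa_i(\hat{\varphi})$, $\kappa_i = \kappa_i(\varphi_*)$. Since $\mathcal{M}$ is totally bounded, hence of bounded diameter, the elementary inequality $|d^2(\nu,a)-d^2(\nu,b)| \le 2\,\mathrm{diam}(\mathcal{M})\,d(a,b)$ reduces the plug-in piece to controlling $\sup_i d(P_i(\hat{\mu}_t,\hat{\varphi}),P_i(\mu_t,\varphi_*))$. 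I would decompose this displacement by the triangle inequality into a change of the geodesic start point, $d(\gamma_{\hat{\mu}_t(X_i),Y_i}(\hat{\kappa}_i),\gamma_{\mu_t(X_i),Y_i}(\hat{\kappa}_i))$, and a change of the scalar parameter, $d(\gamma_{\mu_t(X_i),Y_i}(\hat{\kappa}_i),\gamma_{\mu_t(X_i),Y_i}(\kappa_i))$. The crucial point is that the weights, whenever nonzero, lie exactly in the interval $[1/(1-\eta_0),1/\eta_0]$ of Assumption \ref{Ass:geodesic-dist}, because $e(\cdot;\varphi)\in[\eta_0,1-\eta_0]$ by Assumption \ref{Ass:model-prop-reg}(i); hence Assumption \ref{Ass:geodesic-dist} bounds the start-point term by $C_0\,d(\hat{\mu}_t(X_i),\mu_t(X_i)) \le C_0\sup_x d(\hat{\mu}_t(x),\mu_t(x)) = O_p(r_n)$ via Assumption \ref{Ass:model-prop-reg}(iii) (for the zero-weight terms the geodesic degenerates to its start point and this displacement is simply $d(\hat{\mu}_t(X_i),\mu_t(X_i))$). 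For the scalar term, note $\hat{\kappa}_i-\kappa_i$ is nonzero only on the treatment-matched indices, where the Lipschitz bound $|e(x;\hat{\varphi})-e(x;\varphi_*)|\le C_e\|\hat{\varphi}-\varphi_*\|$ together with $\|\hat{\varphi}-\varphi_*\|=O_p(\varrho_n)$ and $e$ bounded away from $0,1$ give $\sup_i|\hat{\kappa}_i-\kappa_i|=O_p(\varrho_n)$; the constant-speed parametrization of the (extended) geodesic and the bounded diameter then bound the displacement by a multiple of $\mathrm{diam}(\mathcal{M})\,|\hat{\kappa}_i-\kappa_i|=O_p(\varrho_n)$. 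As $r_n,\varrho_n\to 0$, the plug-in piece is $o_p(1)$.

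The uniform law of large numbers piece is more routine: for the fixed nuisances $(\mu_t,\varphi_*)$ the maps $\nu\mapsto d^2(\nu,P_i(\mu_t,\varphi_*))$ are uniformly Lipschitz in $\nu$ (constant $2\,\mathrm{diam}(\mathcal{M})$) and uniformly bounded, so the function class indexed by the totally bounded set $\mathcal{M}$ is Glivenko--Cantelli by a standard covering argument, giving $\sup_\nu|Q_{n,t}(\nu;\mu_t,\varphi_*)-Q_t(\nu;\mu_t,\varphi_*)|=o_p(1)$. Combining the two pieces proves part (i). For part (ii), I would apply part (i) fold by fold: since $n_k/n\ge c_1>0$ forces $n_k\to\infty$ and Assumption \ref{Ass:model-prop-reg} holds for $\{\hat{\varphi}_k,\hat{\mu}_{t,k}\}$, each $\hat{\Theta}_{t,k}^{(\mathrm{DR})}$ is consistent for $\E_\oplus[Y(t)]$. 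Using once more the $2\,\mathrm{diam}(\mathcal{M})$ Lipschitz bound over the finitely many folds, the cross-fitting criterion $\nu\mapsto\sum_{k=1}^K(n_k/n)\,d^2(\nu,\hat{\Theta}_{t,k}^{(\mathrm{DR})})$ converges uniformly in $\nu$ to $\nu\mapsto d^2(\nu,\E_\oplus[Y(t)])$, whose unique, well-separated minimizer is $\E_\oplus[Y(t)]$; a final application of the argmin theorem gives $d(\hat{\Theta}_t^{(\mathrm{CF})},\E_\oplus[Y(t)])=o_p(1)$.

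I expect the main obstacle to be the plug-in piece, and within it the control of the start-point perturbation of the geodesics, which is exactly where the non-Euclidean geometry prevents a naive linearization and where Assumption \ref{Ass:geodesic-dist} does the essential work by stabilizing geodesic endpoints under perturbation of the start point over the relevant weight range. A secondary point needing care is confirming that the scalar-parameter perturbation of the boundary-modified geodesic extension is genuinely continuous and of order $|\hat{\kappa}_i-\kappa_i|$; total boundedness of $\mathcal{M}$ keeps all displacements uniformly bounded, which is what makes both the passage from $d^2$ to $d$ and the two-stage triangle decomposition go through.
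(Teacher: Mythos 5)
Your proposal is correct, and for part (i) it is essentially the paper's argument: the paper likewise reduces the claim to $\sup_{\nu \in \mathcal{M}}|Q_{n,t}(\nu;\hat{\mu}_t,\hat{\varphi}) - Q_t(\nu;\mu_t,\varphi_*)| = o_p(1)$ via Corollary 3.2.3 of \cite{vaWe23}, and controls the nuisance plug-in exactly as you do --- Assumption~\ref{Ass:geodesic-dist} for the start-point perturbation, giving $O_p(r_n)$, and the Lipschitz and overlap conditions of Assumption~\ref{Ass:model-prop-reg}(i)--(ii) together with the constant-speed parametrization for the scalar weight, giving $O_p(\varrho_n)$, both bounds uniform in $\nu$ because the $2\,\mathrm{diam}(\mathcal{M})$ factor absorbs all $\nu$-dependence. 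The only cosmetic difference in (i) is the law-of-large-numbers step: you run a finite-net Glivenko--Cantelli argument over the totally bounded $\mathcal{M}$, while the paper establishes pointwise convergence plus asymptotic equicontinuity and concludes weak convergence in $\ell^\infty(\mathcal{M})$ (Theorems 1.5.4 and 1.3.6 of \cite{vaWe23}); given total boundedness these are the same mechanism in different packaging. Part (ii) is where you genuinely diverge. You rerun the argmin theorem on the aggregation criterion $\nu\mapsto\sum_{k=1}^K (n_k/n)\, d^2(\nu,\hat{\Theta}_{t,k}^{(\mathrm{DR})})$, which converges uniformly to $\nu \mapsto d^2(\nu,\E_\oplus[Y(t)])$, whose minimizer is automatically well separated; this is valid. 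The paper instead avoids any second uniform-convergence step with a purely deterministic estimate from the minimizing property,
\[
\sum_{k=1}^K\frac{n_k}{n}\, d^2(\hat{\Theta}_t^{(\mathrm{CF})},\hat{\Theta}_{t,k}^{(\mathrm{DR})}) \leq \sum_{k=1}^K\frac{n_k}{n}\, d^2(\E_\oplus[Y(t)],\hat{\Theta}_{t,k}^{(\mathrm{DR})}) \leq \max_{1\leq k\leq K} d^2(\E_\oplus[Y(t)],\hat{\Theta}_{t,k}^{(\mathrm{DR})}),
\]
so some fold $\bar{k}$ satisfies $d(\hat{\Theta}_t^{(\mathrm{CF})},\hat{\Theta}_{t,\bar{k}}^{(\mathrm{DR})}) \leq \max_k d(\E_\oplus[Y(t)],\hat{\Theta}_{t,k}^{(\mathrm{DR})})$, and the triangle inequality yields $d(\hat{\Theta}_t^{(\mathrm{CF})},\E_\oplus[Y(t)]) \leq 2\max_{1\leq k \leq K} d(\hat{\Theta}_{t,k}^{(\mathrm{DR})},\E_\oplus[Y(t)]) = o_p(1)$. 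The paper's route buys a \emph{quantitative} inequality that is recycled verbatim to obtain the cross-fitting rate in Theorem~\ref{Thm:GDR-rate}(ii), whereas your argmin argument delivers consistency but would need this extra bound (or an equivalent) to yield rates. Finally, your closing caveat about the boundary-modified geodesic extension is well placed, but note it applies equally to the paper's own bound in the $Q_1$ term of its proof; in both arguments it is settled by Lipschitz continuity of the reparametrization $h(\rho)$ over the compact weight range $[1/(1-\eta_0),1/\eta_0]$.
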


To obtain rates of convergence for estimators $\hat{\Theta}_t^{(\mathrm{DR})}$ and $\hat{\Theta}_t^{(\mathrm{CF})}$, we need additional entropy conditions to quantify the complexity of the space. Let $B_\delta(\omega)$ be the ball of radius $\delta$ centered at $\omega \in \mathcal{M}$ and $N(\varepsilon, B_\delta(\omega),d)$ its covering number using balls of size $\varepsilon$.

\begin{assumption}\label{Ass:GDR-rate}
For $t \in \{0,1\}$, 
\begin{itemize}
\item[(i)] As $\delta \to 0$, 
\begin{align}
J_t(\delta)&:=\int_0^1 \sqrt{1 + \log N(\delta \varepsilon, B_\delta(\Theta_t^{(\mathrm{DR})}),d)}d\varepsilon = O(1), \label{eq:coverN-bound1}\\
J_{\mu_t}(\delta) &:= \int_0^1 \sqrt{1 + \log N(\delta \varepsilon, B_{\delta'_1}(\mu_t),d_{\infty})}d\varepsilon = O(\delta^{-\vartheta}), \label{eq:coverN-bound2}
\end{align}
for some $\delta'_1>0$ and $\vartheta \in (0,1)$, where for $\nu,\mu:\mathcal{X} \to \mathcal{M}$, $d_\infty(\nu,\mu) = \sup_{x \in \mathcal{X}}d(\nu(x),\mu(x))$.
\item[(ii)] There exist positive constants $\eta, \eta_1, C, C_1$, and $\beta>1$ such that
\begin{align}\label{eq:exp}
\inf_{\substack{\sup_{x\in\mathcal{X}}d(\tilde{\mu}(x),\mu_t(x))\leq \eta_1\\ \sup_{x\in\mathcal{X}}|\tilde{e}(x)-e(x)| \leq \eta_1}}\hspace{0.2em}
\inf_{d(\nu,\Theta_t^{(\mathrm{DR})})<\eta} 
\Bigg\{& Q_t(\nu; \tilde{e}, \tilde{\mu}) - Q_t(\Theta_t^{(\mathrm{DR})}; \tilde{e}, \tilde{\mu}) - \nonumber \\
&Cd(\nu, \Theta_t^{(\mathrm{DR})})^\beta
+ C_1\eta_1^{\frac{\beta}{2(\beta-1)}}d(\nu, \Theta_t^{(\mathrm{DR})})^{\frac{\beta}{2}} 
\Bigg\} \geq 0.
\end{align}
\end{itemize}
\end{assumption}

These assumptions control the behavior of the (centered) objective function $Q_{n,t}-Q_t$ around the minimum. Assumption \ref{Ass:GDR-rate}(i) postulates an entropy condition for $\Theta_t^{(\mathrm{DR})}$ and $\mu_t$. A sufficient condition for (\ref{eq:coverN-bound2}) is $J'_{\mu_t}(\delta) :=\int_0^1 \sup_{x \in \mathcal{X}}\sqrt{1 + \log N(\bar{c}\delta\varepsilon, B_{\delta'_1}(\mu_t(x)),d)}d\varepsilon = O(\delta^{-\vartheta'})$ for some positive constants $\bar{c}$ and $\vartheta' \in (0,1)$. One can verify (\ref{eq:coverN-bound1}) and
\begin{align}\label{eq:covering-reg}
J'_{\mu_t}(\delta) = O(-\log \delta),\ \text{as $\delta \to 0$}
\end{align}
for common metric spaces including Examples \ref{exm:net}--\ref{exm:com}, for which therefore Assumption \ref{Ass:GDR-rate}(i) is satisfied. If (\ref{eq:covering-reg}) holds, we can take $\vartheta \in (0,1)$ in Theorem \ref{Thm:GDR-rate} below arbitrarily small.

Assumption~\ref{Ass:GDR-rate}(ii) imposes a curvature condition on the population objective function $Q_t$ in a neighborhood of its minimizer, ensuring sufficient separation. It extends Assumption~B3 in \citet{DeKe20} for M-estimators with plug-in nuisance components. When $Q_t$ does not involve nuisance parameters, the fourth term in (\ref{eq:exp}) vanishes, in analogy to Assumption~(U2) in \citet{PeMu19}. This condition is satisfied for a broad class of metric spaces; in particular, Examples~\ref{exm:net}--\ref{exm:com} satisfy it with $\beta = 2$; see Propositions~B.1--B.5 in Appendix B. In the Euclidean case, $\beta = 2$, corresponding to a quadratic local approximation of $Q_t$, in which the third and fourth terms in~\eqref{eq:exp} capture the quadratic and linear components, respectively.

The following result establishes the convergence rates for the proposed doubly robust and cross-fitting estimators.

\begin{theorem}\label{Thm:GDR-rate}
Suppose that Assumptions \ref{Ass:GATE}, \ref{Ass:p-map}, \ref{Ass:geodesic-dist}, and \ref{Ass:DR-criterion} hold. 
\begin{itemize}
\item[(i)] If Assumptions \ref{Ass:model-prop-reg} and \ref{Ass:GDR-rate} hold, then for each $\epsilon \in (0,1)$, as $n \to \infty$, with $\varrho_n$ and $r_n$ as in Assumption~\ref{Ass:model-prop-reg} and $\beta, \vartheta$ as in Assumption~\ref{Ass:GDR-rate}, 
\begin{align}
d(\hat{\Theta}_t^{(\mathrm{DR})},\E_\oplus[Y(t)]) &= O_{\Prob}\left( n^{-\frac{1}{2(\beta - 1 +\vartheta)}} + (\varrho_n + r_n)^{\frac{1-\epsilon}{\beta-1}}\right),\quad t \in \{0,1\}. \label{eq:GDR-rate} 
\end{align}
\item[(ii)] If Assumptions \ref{Ass:model-prop-reg} and \ref{Ass:GDR-rate} hold for $\{\hat{e}_k,\hat{\mu}_{t,k}\}_{k=1}^K$ and there exist constants $c_1$ and $c_2$ such that $0<c_1 \leq n_k/n \leq c_2<1$ for all $k=1,\dots,K$, then for each $\epsilon \in (0,1)$, as $n \to \infty$, 
\begin{align}
d(\hat{\Theta}_t^{(\mathrm{CF})},\E_\oplus[Y(t)]) &= O_{\Prob}\left( n^{-\frac{1}{2(\beta - 1 +\vartheta)}} + (\varrho_n + r_n)^{\frac{1-\epsilon}{\beta-1}}\right),\quad t \in \{0,1\}. \label{eq:GDR-rate-CF} 
\end{align}
\end{itemize}
\end{theorem}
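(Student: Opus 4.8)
The plan is to treat $\hat{\Theta}_t^{(\mathrm{DR})}$ as an M-estimator with plugged-in nuisance parameters and to derive its rate by the standard recipe: a curvature lower bound for the population criterion, a modulus-of-continuity bound for the centered empirical criterion, and a balancing of the two. Throughout I fix $t$, write $\zeta_n := \varrho_n + r_n$ for the nuisance error, which by Assumption \ref{Ass:model-prop-reg} satisfies $\max\{\|\hat{\varphi}-\varphi_*\|,\, d_\infty(\hat{\mu}_t,\mu_t)\} = O_p(\zeta_n)$, and work on the event of probability tending to one on which the estimated nuisances lie in the neighborhoods of Assumption \ref{Ass:GDR-rate}(ii); I abbreviate $\Theta_t := \Theta_t^{(\mathrm{DR})} = \E_\oplus[Y(t)]$, the last equality by Assumption \ref{Ass:DR-criterion}(ii). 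Since $\hat{\Theta}_t^{(\mathrm{DR})}$ minimizes $\nu \mapsto Q_{n,t}(\nu;\hat{\mu}_t,\hat{\varphi})$ in \eqref{eq:Qn}, adding and subtracting the population criterion $Q_t(\cdot;\hat{\mu}_t,\hat{\varphi})$ gives
\begin{align*}
Q_t(\hat{\Theta}_t^{(\mathrm{DR})};\hat{\mu}_t,\hat{\varphi}) - Q_t(\Theta_t;\hat{\mu}_t,\hat{\varphi}) \le -(Q_{n,t}-Q_t)(\hat{\Theta}_t^{(\mathrm{DR})};\hat{\mu}_t,\hat{\varphi}) + (Q_{n,t}-Q_t)(\Theta_t;\hat{\mu}_t,\hat{\varphi}).
\end{align*}
Applying Assumption \ref{Ass:GDR-rate}(ii) with $\mu=\hat{\mu}_t$, $\varphi=\hat{\varphi}$, and neighborhood radius taken to be the realized nuisance error of order $\zeta_n$, lower-bounds the left-hand side by $C\,d(\hat{\Theta}_t^{(\mathrm{DR})},\Theta_t)^\beta - C_1\,\zeta_n^{\beta/(2(\beta-1))}\,d(\hat{\Theta}_t^{(\mathrm{DR})},\Theta_t)^{\beta/2}$, once $d(\hat{\Theta}_t^{(\mathrm{DR})},\Theta_t)<\eta$, which holds with probability tending to one by the consistency already shown in Theorem \ref{Thm:GDR-conv}.

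Next I would control the right-hand side by the modulus of continuity of the centered empirical process over the local neighborhood. Using the Lipschitz-type geodesic stability \eqref{ineq:geodesic-dist} of Assumption \ref{Ass:geodesic-dist} together with total boundedness of $(\mathcal{M},d)$, the integrand $(\nu,\mu)\mapsto d^2(\nu,\gamma_{\mu(X),Y}(\kappa))$ — with $\kappa = tT/e(X;\varphi)+(1-t)(1-T)/(1-e(X;\varphi)) \in [1/(1-\eta_0),1/\eta_0]$ — is Lipschitz in $\nu$ and in $\mu$, so the increments over $d(\nu,\Theta_t)\le\delta$ have envelope of order $\delta$ and the $L_2(P)$ covering numbers of the associated function class are governed by the metric covering numbers entering $J_t(\delta)$ and $J_{\mu_t}(\delta)$ in \eqref{eq:coverN-bound1}--\eqref{eq:coverN-bound2}, the finite-dimensional index $\varphi$ contributing only a logarithmic factor. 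Because $\hat{\mu}_t,\hat{\varphi}$ are random and, in the non-cross-fitted case, dependent on the evaluation sample, I would pass to a supremum over the nuisance neighborhoods and apply a localized maximal inequality (see, e.g., \cite{vaWe23}) to obtain
\begin{align*}
\E \sup_{\substack{d(\nu,\Theta_t)\le\delta \\ d_\infty(\mu,\mu_t)\le\zeta_n,\ \|\varphi-\varphi_*\|\le\zeta_n}} \left| (Q_{n,t}-Q_t)(\nu;\mu,\varphi) - (Q_{n,t}-Q_t)(\Theta_t;\mu,\varphi) \right| \lesssim n^{-1/2}\,\delta\,\{J_t(\delta)+J_{\mu_t}(\delta)\} \lesssim n^{-1/2}\,\delta^{1-\varpi},
\end{align*}
where the last bound uses $J_t(\delta)=O(1)$ and $J_{\mu_t}(\delta)=O(\delta^{-\varpi})$. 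It is precisely this nuisance entropy that injects the exponent $\varpi$ into the final rate.

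Combining the curvature bound and the empirical modulus through a peeling argument over dyadic shells $\{2^{j-1}\delta < d(\nu,\Theta_t)\le 2^j\delta\}$, in the estimated-nuisance framework of \cite{DeKe20} and the rate machinery of \cite{vaWe23}, yields $d(\hat{\Theta}_t^{(\mathrm{DR})},\Theta_t)=O_p(\delta_n)$, where $\delta_n$ is the larger of the scales at which the curvature $\delta^\beta$ matches, respectively, the stochastic fluctuation $n^{-1/2}\delta^{1-\varpi}$ and the nuisance drift $\zeta_n^{\beta/(2(\beta-1))}\delta^{\beta/2}$. Solving $\delta^\beta\asymp n^{-1/2}\delta^{1-\varpi}$ produces the first term $n^{-1/(2(\beta-1+\varpi))}$ of \eqref{eq:GDR-rate}, while absorbing the drift into $C\,d^\beta$ by Young's inequality and solving $\delta^\beta\asymp\zeta_n^{\beta/(\beta-1)}$ gives $\delta\asymp\zeta_n^{1/(\beta-1)}$, the arbitrarily small slack $\beta'\in(0,1)$ in the exponent $\beta'/(\beta-1)$ absorbing constants and the interpolation loss. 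This proves part (i). For part (ii) I would apply the argument of (i) fold by fold: since $\hat{\varphi}_k,\hat{\mu}_{t,k}$ are computed from $S_{-k}$ and are hence independent of $S_k$, conditioning on $S_{-k}$ lets me treat them as fixed, and with $n_k\asymp n$ each $\hat{\Theta}_{t,k}^{(\mathrm{DR})}$ attains the rate of \eqref{eq:GDR-rate}. As the cross-fitted estimator \eqref{CF} is the weighted Fréchet mean of the $\hat{\Theta}_{t,k}^{(\mathrm{DR})}$, a stability bound for Fréchet means — again via the $\beta$-growth in Assumption \ref{Ass:GDR-rate}(ii) applied to the aggregation criterion — shows $d(\hat{\Theta}_t^{(\mathrm{CF})},\Theta_t)$ is controlled by a constant multiple of $\max_k d(\hat{\Theta}_{t,k}^{(\mathrm{DR})},\Theta_t)$, preserving the rate and giving \eqref{eq:GDR-rate-CF}.

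I expect the main obstacle to be the empirical modulus in the non-cross-fitted case: converting the metric covering numbers of Assumption \ref{Ass:GDR-rate}(i) into $L_2$ covering numbers of the joint class indexed by $(\nu,\mu,\varphi)$ through the geodesic Lipschitz bound \eqref{ineq:geodesic-dist}, uniformly over the \emph{random} nuisance neighborhood, and verifying that the resulting entropy integral scales as $\delta^{-\varpi}$ with the correct envelope so that the fluctuation is of exact order $n^{-1/2}\delta^{1-\varpi}$. The accompanying bookkeeping that keeps the nuisance drift and the stochastic fluctuation separate, so that the two additive terms of the stated rate emerge cleanly rather than interacting, is the most delicate part of the argument.
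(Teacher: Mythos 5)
Your proposal for part (i) is, in substance, the paper's own proof: the paper likewise adapts the peeling argument of Theorem 3.2.5 in \cite{vaWe23}, converts the metric entropies of Assumption \ref{Ass:GDR-rate}(i) into bracketing numbers for the joint class indexed by $(\nu,\mu,\varphi)$ via the Lipschitz bounds coming from Assumptions \ref{Ass:model-prop-reg}(i) and \ref{Ass:geodesic-dist} (the parametric index $\varphi$ contributing only a polynomial factor $(\delta\varepsilon)^{-c_3}$, hence a logarithmic entropy term), obtains the maximal inequality of order $\delta^{1-\varpi}/\sqrt{n}$, and peels over dyadic shells measured in $d(\nu,\Theta_t^{(\mathrm{DR})})^{\beta/2}$ at the scale $q_n^{-1} = \max\{n^{-\beta/(4(\beta-1+\varpi))}, (\varrho_n+r_n)^{\beta\beta'/(2(\beta-1))}\}$. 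Your absorption of the drift term by Young's inequality is a cosmetic variant of the paper's device of working on the event $A_n$ on which the nuisance errors are at most $\delta_1 q_n^{-2(\beta-1)/\beta} \asymp (\varrho_n+r_n)^{\beta'}$ and letting the curvature dominate the drift on large shells; and your diagnosis of where the slack $\beta'<1$ enters is correct, since $\Prob(A_n^c)\to 0$ requires the deterministic neighborhood radius to strictly dominate the $O_p(\varrho_n+r_n)$ nuisance error. Your remark that $\eta_1$ should be ``the realized nuisance error'' needs this same regularization (the radius in Assumption \ref{Ass:GDR-rate}(ii) must be deterministic, or a high-probability bound), but that is exactly what the event $A_n$ supplies.

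For part (ii), your fold-wise reduction (conditioning on $S_{-k}$, applying the rate of part (i) with $n_k \asymp n$) matches the paper, but your justification of the aggregation step misfires: Assumption \ref{Ass:GDR-rate}(ii) constrains the population criterion $Q_t$ and says nothing about the aggregation functional $\nu \mapsto \sum_{k=1}^K (n_k/n)\, d^2(\nu, \hat{\Theta}_{t,k}^{(\mathrm{DR})})$, and in a general metric space one cannot appeal to a generic ``stability bound for Fr\'echet means.'' No curvature is needed there: since $\hat{\Theta}_t^{(\mathrm{CF})}$ minimizes this functional, comparing its value with the value at $\E_\oplus[Y(t)]$ gives $\sum_k (n_k/n)\, d^2(\hat{\Theta}_t^{(\mathrm{CF})}, \hat{\Theta}_{t,k}^{(\mathrm{DR})}) \leq \max_k d^2(\E_\oplus[Y(t)], \hat{\Theta}_{t,k}^{(\mathrm{DR})})$, hence $d(\hat{\Theta}_t^{(\mathrm{CF})}, \hat{\Theta}_{t,\bar{k}}^{(\mathrm{DR})}) \leq \max_k d(\E_\oplus[Y(t)], \hat{\Theta}_{t,k}^{(\mathrm{DR})})$ for some $\bar{k}$, and the triangle inequality yields $d(\hat{\Theta}_t^{(\mathrm{CF})}, \E_\oplus[Y(t)]) \leq 2\max_k d(\hat{\Theta}_{t,k}^{(\mathrm{DR})}, \E_\oplus[Y(t)])$. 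This is precisely \eqref{ineq:CF-conv}, already established in the proof of Theorem \ref{Thm:GDR-conv}(ii), and it preserves the rate with the explicit constant $2$. With this elementary substitution, your argument coincides with the paper's.
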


Both the doubly robust and cross-fitting estimators achieve the same convergence rate. The first term reflects the intrinsic stochastic fluctuation of the estimator when the nuisance components are treated as fixed, and reduces to the usual rate for \f means when $\vartheta=0$. The second term arises from the estimation of the nuisance functions and depends on the rates at which $\hat e$ and $\hat\mu_t$ converge to their respective population targets $e$ and $\mu_t$, which need not coincide with the true propensity score $p$ and conditional \f mean $m_t$ unless the corresponding models are correctly specified. In a typical scenario, we have $\beta=2$, $\varrho_n=n^{-1/2}$, and $r_n=n^{-\alpha_1}$ with any $\alpha_1<1/2$ so that the rate becomes $O_{\Prob}( n^{-\frac{1}{2(1+\vartheta)}} + n^{-\alpha_1(1-\epsilon)})$ for any $\vartheta, \epsilon \in (0,1)$, which can be arbitrarily close to the parametric rate.

\begin{remark}[Convergence rates of outcome regression and inverse probability weighting estimators]
In Appendix A, we provide convergence rates of estimators $\hat{\Theta}_t^{(\mathrm{OR})}$ and $\hat{\Theta}_t^{(\mathrm{IPW})}$ that rely on the correct specification of the outcome regression function and the propensity score function, respectively. Under some regularity conditions one can show that for each $\epsilon \in (0,1)$, as $n \to \infty$, 
\begin{align*}
d(\hat{\Theta}_t^{(\mathrm{OR})},\E_\oplus[Y(t)]) &= O_{\Prob}\left(n^{-\frac{1}{2(\beta-1+\vartheta_1)}} + r_{n,1}^{\frac{1-\epsilon}{\beta-1}}\right),\\
d(\hat{\Theta}_t^{(\mathrm{IPW})},\E_\oplus[Y(t)]) &= O_{\Prob}\left(n^{-\frac{1}{2(\beta-1+\vartheta_2)}} + \varrho_{n,1}^{\frac{1-\epsilon}{\beta-1}}\right),
\end{align*}
where $\vartheta_1, \vartheta_2 \in (0,1)$ are constants arising from entropy conditions for the models of the outcome regression function and the propensity score, corresponding to $\vartheta$ in (\ref{eq:coverN-bound2}). Here $r_{n,1}$ and $\varrho_{n,1}$ are null sequences that correspond to the convergence rates of $\hat{\mu}_t(\cdot)$ and $\hat{e}(\cdot)$, respectively, i.e.,
\[\sup_{x \in \mathcal{X}}d(\hat{\mu}_t(x),m_t(x)) = O_{\Prob}(r_{n,1}),\ \sup_{x \in \mathcal{X}}|\hat{e}(x) - p(x)| = O_{\Prob}(\varrho_{n,1}).\]
\end{remark}

\section{Uncertainty quantification}\label{sec:uq}

In general metric spaces, inference is challenging due to the absence of linear structure, which precludes the direct application of classical asymptotic tools such as central limit theorems for estimators. To overcome this difficulty, we develop an intrinsic inference framework based on the Fr\'echet objective functions that characterize the target parameters. These objective functions are real-valued and therefore amenable to standard probabilistic analysis. The key idea is that the population Fr\'echet objective function is uniquely minimized at $\Eo[Y(t)]$. This enables us to assess whether a candidate point $\omega \in \mathcal{M}$ coincides with $\Eo[Y(t)]$ through the behavior of the corresponding objective function. We first develop objective function-based test statistics, then construct confidence regions, and finally propose a test for no treatment effect.

\subsection{Objective function-based test statistics}\label{subsec:obj}
Suppose that the assumptions of Theorem~\ref{Thm:GDR-rate} hold. Let $\omega_0 \in \mathcal{M}$ be fixed. Let $m=m_n \to \infty$ with $m/n \to 0$ as $n\to\infty$. Define
\begin{align*}
\check{T}_m^{(t)}(\omega_0) &= \frac{1}{\sqrt{m}}\sum_{i=1}^m\left\{d^2\left(\omega_0, \gamma_{\mu_t(X_i), Y_i}\left(\frac{tT_i}{e(X_i)} + \frac{(1-t)(1 - T_i)}{1 - e(X_i)}\right)\right)-Q_t(\omega_0;e,\mu_t)\right\}.
\end{align*}
For notational simplicity we write the sum over \(i=1,\ldots,m\). This may be interpreted as summing over an arbitrary subset of \(m\) observations; by exchangeability, the choice of the subset is immaterial. Let $\leadsto$ and $\stackrel{\Prob}{\to}$ denote weak convergence and convergence in $\Prob$-probability, respectively. The central limit theorem yields 
\[\check{T}_m^{(t)}(\omega_0) \leadsto N(0,V^{(t)}(\omega_0)),\]
as $n \to \infty$, where
\[V^{(t)}(\omega_0) = \mathrm{Var}\left(d^2\left(\omega_0, \gamma_{\mu_t(X_i), Y_i}\left(\frac{tT_i}{e(X_i)} + \frac{(1-t)(1 - T_i)}{1 - e(X_i)}\right)\right)\right).\]

Since $\Eo[Y(t)]$ uniquely minimizes $Q_t(\cdot;e,\mu_t)$,
\[B_t(\omega_0,\Eo[Y(t)]):=Q_t(\omega_0;e,\mu_t)-Q_t(\Eo[Y(t)];e,\mu_t)=\begin{cases}0,&\omega_0=\Eo[Y(t)],\\>0,&\omega_0\neq\Eo[Y(t)].\end{cases}\]
Letting $\tilde{T}_m^{(t)}(\omega_0) = \check{T}_m^{(t)}(\omega_0) + \sqrt{m}B_t(\omega_0,\Eo[Y(t)])$, it follows that
\[\left|\tilde{T}_m^{(t)}(\omega_0)\right|\begin{cases}\leadsto |N(0,V^{(t)}(\omega_0))|,&\omega_0=\Eo[Y(t)],\\\stackrel{\Prob}{\to}\infty,&\omega_0\neq\Eo[Y(t)].\end{cases}\]

In practice, the nuisance functions are unknown. For $\mathrm{M}\in\{\mathrm{DR},\mathrm{CF}\}$, our plug-in statistic is defined as
\begin{align*}
T_{m,n}^{(t)}(\omega_0) &=\frac{1}{\sqrt{m}}\!\sum_{i=1}^m\!\left\{\!d^2\!\!\left(\omega_0, \gamma_{\hat{\mu}_t(X_i), Y_i}\!\!\left(\!\frac{tT_i}{\hat{e}(X_i)} \!+\! \frac{(1-t)(1 - T_i)}{1 - \hat{e}(X_i)}\!\right)\!\!\right)\!-Q_{n,t}(\hat{\Theta}_t^{(\mathrm{M})};\hat{e}, \hat{\mu}_t)\!\right\}, 
\end{align*}
where $\hat{\Theta}_t^{(\mathrm{M})}$, $\hat{e}$, and $\hat{\mu}_t$ are estimators constructed from the full sample $\mathcal{D}_n=\{(Y_i,T_i,X_i)\}_{i=1}^n$. The use of a sum over $m$ observations, with $m \to \infty$ and $m/n \to 0$, serves to separate the stochastic fluctuation of the statistic from the plug-in errors induced by estimating the nuisance functions. In particular, the statistic fluctuates on the $\sqrt{m}$ scale, while the estimation errors of $\hat e$, $\hat \mu_t$, and $\hat \Theta_t^{(\mathrm{M})}$ are of smaller order under the conditions of Proposition~\ref{prp:test-stat-lim}, and therefore become asymptotically negligible. We formalize this observation in the following proposition.

\begin{proposition}\label{prp:test-stat-lim}
Suppose that the assumptions of Theorem \ref{Thm:GDR-rate} hold. Additionally, assume that $m^{-1}+\sqrt{m}\{n^{-\frac{1}{2}\min\{\frac{1}{(\beta - 1 +\vartheta)},1\}} + (\varrho_n + r_n)^{\min\{\frac{1-\epsilon}{\beta-1},1\}}\}\to 0$ as $n \to \infty$. Then for $t,t'\in\{0, 1\}$ and $\mathrm{M} \in \{\mathrm{DR},\mathrm{CF}\}$, 
\[T_{m,n}^{(t)}(\hat{\Theta}_{t'}^{(\mathrm{M})}) = T_{m,n}^{(t)}(\Eo[Y(t')]) +o_{\Prob}(1) = \tilde{T}_m^{(t)}(\Eo[Y(t')]) + o_{\Prob}(1)\]
and hence 
\begin{align*}
|T_{m,n}^{(t)}(\hat{\Theta}_{t'}^{(\mathrm{M})})| 
\begin{cases}
\leadsto |N(0,V^{(t)}(\Eo[Y(t')]))|,&t=t',\\
\stackrel{\Prob}{\to} \infty,&t\neq t'.
\end{cases}
\end{align*}
\end{proposition}

\subsection{Confidence regions}
Proposition~\ref{prp:test-stat-lim} enables the construction of confidence regions for $\Eo[Y(t)]$.  For $\alpha \in (0,1)$, let $q_{1-\alpha}^{(t)}$ denote the $(1-\alpha)$-quantile of $|N(0,V^{(t)}(\Eo[Y(t)]))|$. Since $V^{(t)}(\Eo[Y(t)])$ is unknown, we estimate this quantile via the following multiplier bootstrap procedure. Let $\{\xi_i\}_{i=1}^m$ be i.i.d. standard normal random variables defined on a probability space $(\Omega_\xi,\mathcal{A}_\xi,\Prob_\xi)$, independent of the data $\mathcal{D}_n$.

For $\mathrm{M} \in \{\mathrm{DR,\mathrm{CF}}\}$, define the bootstrap statistics
\begin{align*}
&T_{m,n}^{(t,*)}(\hat{\Theta}_t^{(\mathrm{M})})\\ 
&= \frac{1}{\sqrt{m}}\sum_{i=1}^m \xi_i\!\left\{d^2\!\left(\hat{\Theta}_t^{(\mathrm{M})}, \gamma_{\hat{\mu}_t(X_i), Y_i}\left(\frac{tT_i}{\hat{e}(X_i)} + \frac{(1-t)(1 - T_i)}{1 - \hat{e}(X_i)}\right)\right)-Q_{m,t}(\hat{\Theta}_t^{(\mathrm{M})};\hat{e}, \hat{\mu}_t)\right\}
\end{align*}
and let $\hat{q}_{1-\alpha}^{(t)}$ denote the $(1-\alpha)$-quantile of $|T_{m,n}^{(t,*)}(\hat{\Theta}_t^{(\mathrm{M})})|$ given the data $\mathcal{D}_n$. We define the $(1-\alpha)$ confidence region for $\Eo[Y(t)]$ as
\[\hat{R}_{1-\alpha}^{(t)} = \{\omega \in \mathcal{M}: |T_{m,n}^{(t)}(\omega)|\leq \hat{q}_{1-\alpha}^{(t)}\}.\]

\begin{proposition}\label{prp:GDR-boot}
    Under the same assumption in Proposition \ref{prp:test-stat-lim}, we have $\Prob(\Eo[Y(t)] \in \hat{R}_{1-\alpha}^{(t)}) \to 1-\alpha$ as $n \to \infty$.
\end{proposition}
The confidence region $\hat{R}_{1-\alpha}^{(t)}$ admits a natural interpretation: it collects all points whose Fr\'echet objective values cannot be statistically distinguished from that of the true minimizer at level $\alpha$. In this sense, it provides an intrinsic characterization of uncertainty for $\Eo[Y(t)]$ that respects the geometry of the underlying space.

\subsection{Testing for no treatment effect}
We consider testing the null hypothesis of no treatment effect,
\[\mathbb{H}_0:d(\Eo[Y(0)],\Eo[Y(1)])=0\quad\text{vs.}\quad\mathbb{H}_1:d(\Eo[Y(0)],\Eo[Y(1)])>0.\]
The objective function-based statistics developed in Section~\ref{subsec:obj} provide a natural basis for testing $\mathbb{H}_0$. In particular, Proposition~\ref{prp:test-stat-lim} implies that, for $\mathrm{M}\in\{\mathrm{DR},\mathrm{CF}\}$,
\[T_{m,n}^{(0)}(\hat{\Theta}_1^{(\mathrm{M})})-\tilde{T}_m^{(0)}(\Eo[Y(1)])=o_{\Prob}(1).\]
Consequently,
\[\left|T_{m,n}^{(0)}(\hat{\Theta}_1^{(\mathrm{M})})\right|\begin{cases}\leadsto |N(0,V^{(0)}(\Eo[Y(0)]))|,&\text{under }\mathbb{H}_0,\\\stackrel{\Prob}{\to}\infty,&\text{under }\mathbb{H}_1.\end{cases}\]

This characterization suggests a natural testing procedure. Under $\mathbb{H}_0$, the estimator $\hat{\Theta}_1^{(\mathrm{M})}$ behaves asymptotically as if it were evaluated at the true minimizer of the objective function for $t=0$, and hence the corresponding statistic remains stochastically bounded. In contrast, under $\mathbb{H}_1$, the two population minimizers are distinct, and the objective function gap induces divergence of the statistic.

Using the confidence region $\hat{R}_{1-\alpha}^{(0)}$ defined in the previous subsection, we reject $\mathbb{H}_0$ whenever $\hat{\Theta}_1^{(\mathrm{M})}\notin \hat{R}_{1-\alpha}^{(0)}$. Equivalently, the rejection rule can be expressed as
\[\left|T_{m,n}^{(0)}(\hat{\Theta}_1^{(\mathrm{M})})\right|>\hat{q}_{1-\alpha}^{(0)}.\]
Under the assumptions of Proposition~\ref{prp:test-stat-lim}, this test has asymptotic size $\alpha$ and is consistent, in the sense that
\[\Prob(\hat{\Theta}_1^{(\mathrm{M})}\notin \hat{R}_{1-\alpha}^{(0)})\begin{cases}\to\alpha,&\text{under }\mathbb{H}_0,\\\to1,&\text{under }\mathbb{H}_1.\end{cases}.\]

\section{Simulation studies}\label{sec:sim} 
\subsection{Implementation details and simulation scenarios}
The algorithm for the proposed approach is outlined in Algorithm~\ref{alg:gci}. The global \f regression involved in the second step is implemented using the R package \texttt{frechet} \citep{chen:20}. The minimization problem one needs to solve requires specific considerations for various geodesic spaces. For the spaces in Examples \ref{exm:net}--\ref{exm:mea}, the minimization problem can be reduced to convex quadratic optimization \citep{stel:20}. For Example \ref{exm:com}, the necessary optimization can be performed using the trust regions algorithm \citep{geye:20}. The third step involves the calculation of \f means, which reduces to the entry-wise mean of matrices for Examples \ref{exm:net}, \ref{exm:cor}, and the point-wise mean of functions for Examples \ref{exm:fun}, \ref{exm:mea} due to the convexity of these spaces. The \f mean for Example \ref{exm:com} can be implemented using the R package \texttt{manifold} \citep{mull:21:2}.
\begin{algorithm}
	\KwIn{data $\{(Y_i, T_i, X_i)\}_{i=1}^n$.}
	\KwOut{doubly robust estimator of geodesic average treatment effect (GATE) $\gamma_{\hat{\Theta}_0^{(\text{DR})}, \hat{\Theta}_1^{(\text{DR})}}$.}
        $\hat{e}(\cdot)\longleftarrow$ the estimated propensity score using logistic regression with data $\{T_i, X_i\}_{i=1}^n$\;
        $\hat{\mu}_t(\cdot)\longleftarrow$ the estimated outcome regression function using global \f regression:
        \[\hat{\mu}_t(x)=\argmin_{\nu \in \mathcal{M}}\frac{1}{N_t}\sum_{i \in I_t} \{1 + (X_i - \bar{X})'\hat{\Sigma}^{-1}(x - \bar{X})\}d^2(\nu,Y_i),\quad t \in \{0,1\},\] where $I_t=\{1\leq i \leq n:T_i = t\}$, $N_t$ is the cardinality of $I_t$, $\bar{X} = n^{-1}\sum_{i=1}^n X_i$, and $\hat{\Sigma} = n^{-1}\sum_{i=1}^n(X_i - \bar{X})(X_i - \bar{X})'$\; 
	$\hat{\Theta}_t^{(\text{DR})}\longleftarrow$ the estimated mean potential outcomes:
        \[\hat{\Theta}_t^{\mathrm{(DR)}}=\argmin_{\nu\in\mathcal{M}}\frac{1}{n}\sum_{i=1}^nd^2\left(\nu, \gamma_{\hat{\mu}_t(X_i), Y_i}\left(\frac{tT_i}{\hat{e}(X_i)} + \frac{(1-t)(1 - T_i)}{1 - \hat{e}(X_i) }\right)\right)\]
        where $\hat{e}(\cdot)$ and $\hat{\mu}_t(\cdot)$ are the estimated propensity score and outcome regression function\;
        $\gamma_{\hat{\Theta}_0^{(\text{DR})}, \hat{\Theta}_1^{(\text{DR})}}\longleftarrow$ the doubly robust estimator of GATE.
	\caption{Geodesic Causal Inference}
	\label{alg:gci}
\end{algorithm}

To assess the performance of the proposed doubly robust estimators, we report here the results of simulations for various settings, specifically for the space of SPD matrices equipped with the Frobenius metric and the space of three-dimensional compositional data $\mathcal{S}_+^{2}$ equipped with the geodesic metric; see Examples \ref{exm:cor} and \ref{exm:com}.

We consider sample sizes $n=100, 300, 1000$, with 500 Monte Carlo runs. For the $q$th Monte Carlo run, with $\gamma_{\hat{\Theta}_0^q, \hat{\Theta}_1^q}$ denoting the GATE estimator, the average quality of the estimation over the 500 Monte Carlo runs is assessed by the average squared error (ASE)
\[\mathrm{ASE}=\frac{1}{500}\sum_{q=1}^{500}\{d^2(\hat{\Theta}_0^q, \E_\oplus[Y(0)])+d^2(\hat{\Theta}_1^q, \E_\oplus[Y(1)])\},\]
where $d$ is the Frobenius metric $d_F$ for SPD matrices and the geodesic metric $d_g$ for compositional data. 

In all simulations, the confounder $X$ follows a uniform distribution $[-1, 1]$. The treatment $T$ has a conditional Bernoulli distribution depending on $X$ with $\Prob(T=1|X)=\mathrm{expit}(0.75X)$ where $\mathrm{expit}(\cdot)=\mathrm{exp}(\cdot)/(1+\mathrm{exp}(\cdot))$. 
We consider two specifications for the outcome regression: a global \f regression model with predictor $X$ in $m_t(X)$ (correct specification) and one with the predictor $X^2$ (incorrect specification), and two specifications for the propensity score model: a logistic regression model with predictor $X$ (correct specification) and a model with predictor $X^2$ (incorrect specification). To demonstrate the double robustness of estimators $\hat{\Theta}_t^{\text{(DR)}}$ and $\hat{\Theta}_t^{\text{(CF)}}$, we compare them with outcome regression (OR) and inverse probability weighting (IPW) estimators.

\subsection{SPD matrices}
Random SPD matrices $Y$ are generated as follows. First, the lower-triangular entries of $Y$ are independently generated as $Y_{jk}=T+X+2+\epsilon$, where $1 \le j,k \le m,\, j>k$ and $\epsilon$ is independently generated from a uniform distribution on $[-0.1, 0.1]$. Due to symmetry, the upper-triangular entries of $Y$ are then $Y_{kj}=Y_{jk}$ for $1 \le j,k \le m, \ j>k$. To ensure positive semi-definiteness, the diagonal entries $Y_{jj}$ are set to equal the sum of all off-diagonal entries in the same row, $\sum_{k\neq j}Y_{jk}$. This construction results in a diagonally dominated matrix, thus guaranteeing positive semi-definiteness. 

We aim to estimate the GATE, whose true value is $\gamma_{\E_\oplus[Y(0)], \E_\oplus[Y(1)]}$ where $(\E_\oplus[Y(0)])_{jk}=2, (\E_\oplus[Y(0)])_{jj}=18$, and $(\E_\oplus[Y(1)])_{jk}=3, (\E_\oplus[Y(1)])_{jj}=27$ for $1\leq j\neq k\leq m$. Table~\ref{tab:cov} presents the simulation results for $10 \times 10$ matrices, i.e., for $m=10$. The ASE of the doubly robust estimator decreases as the sample size increases when either the OR or IPW model is correctly specified, confirming the double robustness property. However, neither the OR nor IPW estimator demonstrates double robustness; their ASEs can be large even with a sample size of 1000 when the corresponding model is misspecified.

\begin{table}[tb]
	\caption{Simulation results for SPD matrices. Average squared errors and standard deviations (in parentheses) for GATE using four different estimation procedures. The model specifications are in the first two columns. DR: doubly robust; CF: cross-fitting; OR: outcome regression; IPW: inverse probability weighting.}
	\centering
	\begin{tabular}{lllllll}
	\hline
	\multicolumn{2}{c}{Model}&Sample size&\multicolumn{4}{c}{Estimator}
	\\\cline{1-2}\cline{4-7}
	OR&IPW&&DR&CF&OR&IPW
	\\\hline
	\multirow{3}{*}{\ding{51}}&\multirow{3}{*}{\ding{51}}&100&6.443 (9.580)&6.446 (9.578)&6.444 (9.580)&6.800 (9.829)
	\\
        &&300&1.988 (2.602)&1.989 (2.603)&1.988 (2.602)&2.127 (2.774)
        \\
        &&1000&0.560 (0.831)&0.560 (0.832)&0.560 (0.831)&0.588 (0.915)
        \\\hline
        \multirow{3}{*}{\ding{51}}&\multirow{3}{*}{\ding{55}}&100&6.444 (9.584)&6.445 (9.576)&6.444 (9.580)&33.67 (25.44)
	\\
        &&300&1.988 (2.602)&1.988 (2.602)&1.988 (2.602)&26.93 (13.38)
        \\
        &&1000&0.560 (0.831)&0.560 (0.831)&0.560 (0.831)&23.17 (7.39)
        \\\hline
	\multirow{3}{*}{\ding{55}}&\multirow{3}{*}{\ding{51}}&100&9.871 (14.26)&12.64 (19.44)&39.74 (27.90)&6.800 (9.829)
	\\
        &&300&3.271 (4.082)&3.424 (4.396)&31.43 (14.66)&2.127 (2.774)
        \\
        &&1000&0.935 (1.359)&0.938 (1.388)&27.63 (7.924)&0.588 (0.915)
	\\\hline
	\end{tabular}
	\label{tab:cov}
\end{table}

\subsection{Compositional data}
Writing $\phi=\pi(X+2)/8\in[\pi/8, 3\pi/8]$, we model the true regression functions $m_0(\cdot)$ and $m_1(\cdot)$ as
\[m_0(X)=(\cos(\phi), \frac{1}{2}\sin(\phi), \frac{\sqrt{3}}{2}\sin(\phi)),\quad m_1(X)=(\cos(\phi), \frac{\sqrt{3}}{2}\sin(\phi), \frac{1}{2}\sin(\phi)),\]
which are illustrated in Figure~\ref{fig:comp} as red and blue lines, respectively. The random outcome $Y$ on $\mathcal{S}_+^2$ for $T=0$ is then generated by adding a small perturbation to the true regression function. To this end, we first construct an orthonormal basis $(e_1, e_2)$ for the tangent space on $m_0(X)$ where 
\[e_1=(\sin(\phi), -\frac{1}{2}\cos(\phi), -\frac{\sqrt{3}}{2}\cos(\phi)),\quad e_2=(0, \frac{\sqrt{3}}{2}, -\frac{1}{2}).\]
Consider random tangent vectors $U=Z_1e_1+Z_2e_2$, where $Z_1, Z_2$ are two independent uniformly distributed random variables on $[-0.1, 0.1]$. The random outcome $Y$ is obtained as the exponential map at $m_0(X)$ applied to the tangent vector $U$,
\[Y=\mathrm{Exp}_{m_0(X)}(U)=\cos(\|U\|)m_0(X)+\sin(\|U\|)\frac{U}{\|U\|}.\]
A similar generation procedure is used for $T=1$, where the orthonormal basis $(e_1, e_2)$ for the tangent space on $m_1(X)$ is
\[e_1=(\sin(\phi), -\frac{\sqrt{3}}{2}\cos(\phi), -\frac{1}{2}\cos(\phi)),\quad e_2=(0, \frac{1}{2}, -\frac{\sqrt{3}}{2}).\]
Figure~\ref{fig:comp} illustrates randomly generated outcomes using the above generation procedure for a sample size $n=200$, which are seen to be distributed around the true regression functions.

We are interesting in estimating the GATE, with true value $\gamma_{\E_\oplus[Y(0)], \E_\oplus[Y(1)]}$, where 
\[\E_\oplus[Y(0)]=(\frac{\sqrt{2}}{2},\frac{\sqrt{2}}{4}, \frac{\sqrt{6}}{4}),\quad\E_\oplus[Y(1)]=(\frac{\sqrt{2}}{2},\frac{\sqrt{6}}{4}, \frac{\sqrt{2}}{4}).\]
Table~\ref{tab:com} summarizes the simulation results. The ASE of the doubly robust estimator decreases as the sample size increases when either the OR or IPW model is correct, thus again confirming the double robustness property. In comparison, neither the OR nor IPW estimator exhibits the double robustness property: When the corresponding model is misspecified, their ASEs do not converge as the sample size increases.

\begin{figure}[tb]
    \centering
    \includegraphics[width=0.5\linewidth]{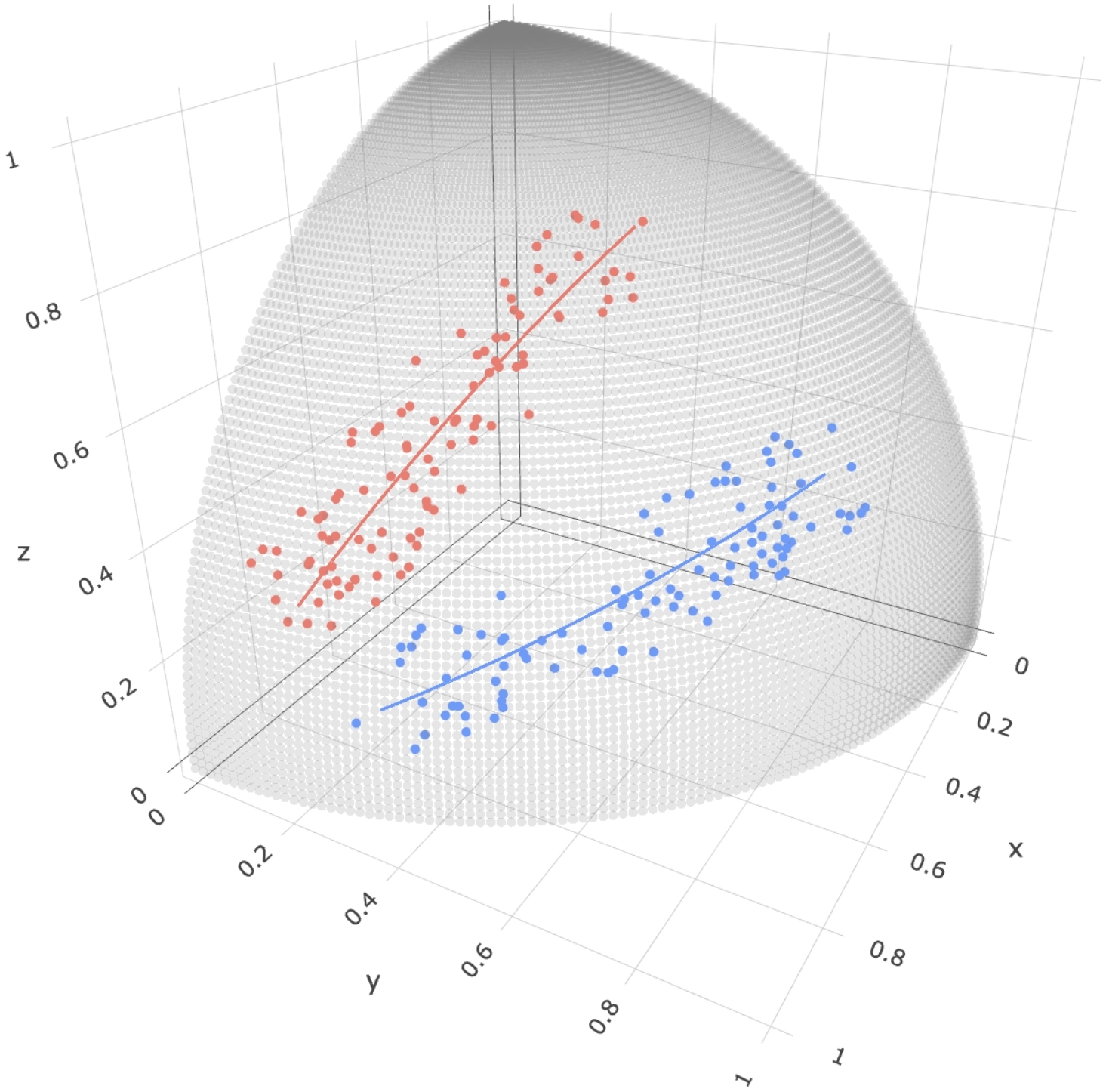}
    \caption{Simulation example for compositional data represented on the sphere with the geodesic metric for a sample of size $n=200$. Red: $T=0$; blue: $T=1$. The true regression functions $m_0(\cdot)$ and $m_1(\cdot)$ are shown as red and blue lines.}
\label{fig:comp}
\end{figure}

\begin{table}[tb]
	\caption{Simulation results for compositional data. Average squared errors and standard deviations (in parentheses) for GATE using four different estimation procedures. The model specifications are as in the first two columns. DR: doubly robust; CF: cross-fitting; OR: outcome regression; IPW: inverse probability weighting. All error values are reported $\times 10^3$ for presentation.}
	\centering
	\begin{tabular}{lllllll}
	\hline
	\multicolumn{2}{c}{Model}&Sample size&\multicolumn{4}{c}{Estimator}
	\\\cline{1-2}\cline{4-7}
	OR&IPW&&DR&CF&OR&IPW
	\\\hline
	\multirow{3}{*}{\ding{51}}&\multirow{3}{*}{\ding{51}}&100&1.369 (1.648)&1.377 (1.635)&1.371 (1.651)&2.719 (2.134)
	\\
    &&300&0.435 (0.545)&0.436 (0.545)&0.436 (0.544)&1.473 (0.843)
    \\
    &&1000&0.126 (0.145)&0.126 (0.145)&0.127 (0.146)&1.065 (0.337)
    \\\hline
    \multirow{3}{*}{\ding{51}}&\multirow{3}{*}{\ding{55}}&100&1.372 (1.651)&1.374 (1.655)&1.371 (1.651)&6.414 (4.402)
	\\
    &&300&0.436 (0.542)&0.437 (0.544)&0.436 (0.544)&5.077 (2.454)
    \\
    &&1000&0.127 (0.145)&0.127 (0.146)&0.127 (0.146)&4.505 (1.262)
    \\\hline
	\multirow{3}{*}{\ding{55}}&\multirow{3}{*}{\ding{51}}&100&3.566 (3.62)&4.171 (4.364)&7.017 (5.047)&2.719 (2.134)
	\\
    &&300&1.808 (1.371)&1.885 (1.456)&5.663 (2.821)&1.473 (0.843)
    \\
    &&1000&1.202 (0.521)&1.210 (0.522)&5.038 (1.438)&1.065 (0.337)
	\\\hline
	\end{tabular}
	\label{tab:com}
\end{table}

\section{Real world applications}\label{sec:data}
\subsection{New York yellow taxi system after COVID-19 outbreak}\label{subsec:taxi}
Yellow taxi trip records in New York City (NYC), containing details such as pick-up and drop-off dates/times, locations, trip distances, payment methods, and driver-reported passenger counts, can be accessed at \url{https://www.nyc.gov/site/tlc/about/tlc-trip-record-data.page}. Additionally, NYC Coronavirus Disease 2019 (COVID-19) data are available at \url{https://github.com/nychealth/coronavirus-data}, providing citywide and borough-specific daily counts of probable and confirmed COVID-19 cases in NYC since February 29, 2020. 

We focused on taxi trip records in Manhattan, which experiences the highest taxi traffic. Following preprocessing procedures outlined in \citet{zhou:22}, we grouped the 66 taxi zones (excluding islands) into 13 regions. We restricted our analysis to the period comprising 172 days from April 12, 2020 to September 30, 2020, during which taxi ridership per day in Manhattan steadily increased following a decline due to the COVID-19 outbreak. For each day, we constructed a daily undirected network with nodes corresponding to the 13 regions and edge weights representing the number of people traveling between connected regions. Self-loops in the networks were removed as the focus was on connections between different regions. Thus, we have observations consisting of a simple undirected weighted network for each of the 172 days, each associated with a $13\times13$ graph Laplacian.

We aim to investigate the causal effect of COVID-19 new cases on daily taxi networks in Manhattan. COVID-19 new cases were dichotomized into 0 if less than 60 and 1 otherwise, resulting in 79 and 93 days classified into low (0) and high (1) COVID-19 new cases groups, respectively. The outcomes are graph Laplacians as discussed in Example \ref{exm:net}. Confounders of interest include a weekend indicator and daily average temperature.

We obtained DR and CF estimators using the proposed approach, along with OR and IPW estimators for comparison. In Figure~\ref{fig:taxiate}, we illustrate the entry-wise differences between the adjacency matrices corresponding to low and high COVID-19 new cases for different estimators. The DR, CF, and OR estimators exhibit similar performance, indicating that high COVID-19 led to less traffic. Regions with the largest differences include regions 105, 106, and 108, primarily residential areas including popular locations such as Penn Station, Grand Central Terminal, and the Metropolitan Museum of Art. The impact of COVID-19 new cases on traffic networks is seen to be negative, especially concerning traffic in residential areas. Conversely, the IPW estimator appears ineffective in capturing this causal effect.

\begin{figure}[tb]
	\centering
	\begin{subfigure}{.4\textwidth}
		\centering
		\includegraphics[width=\linewidth]{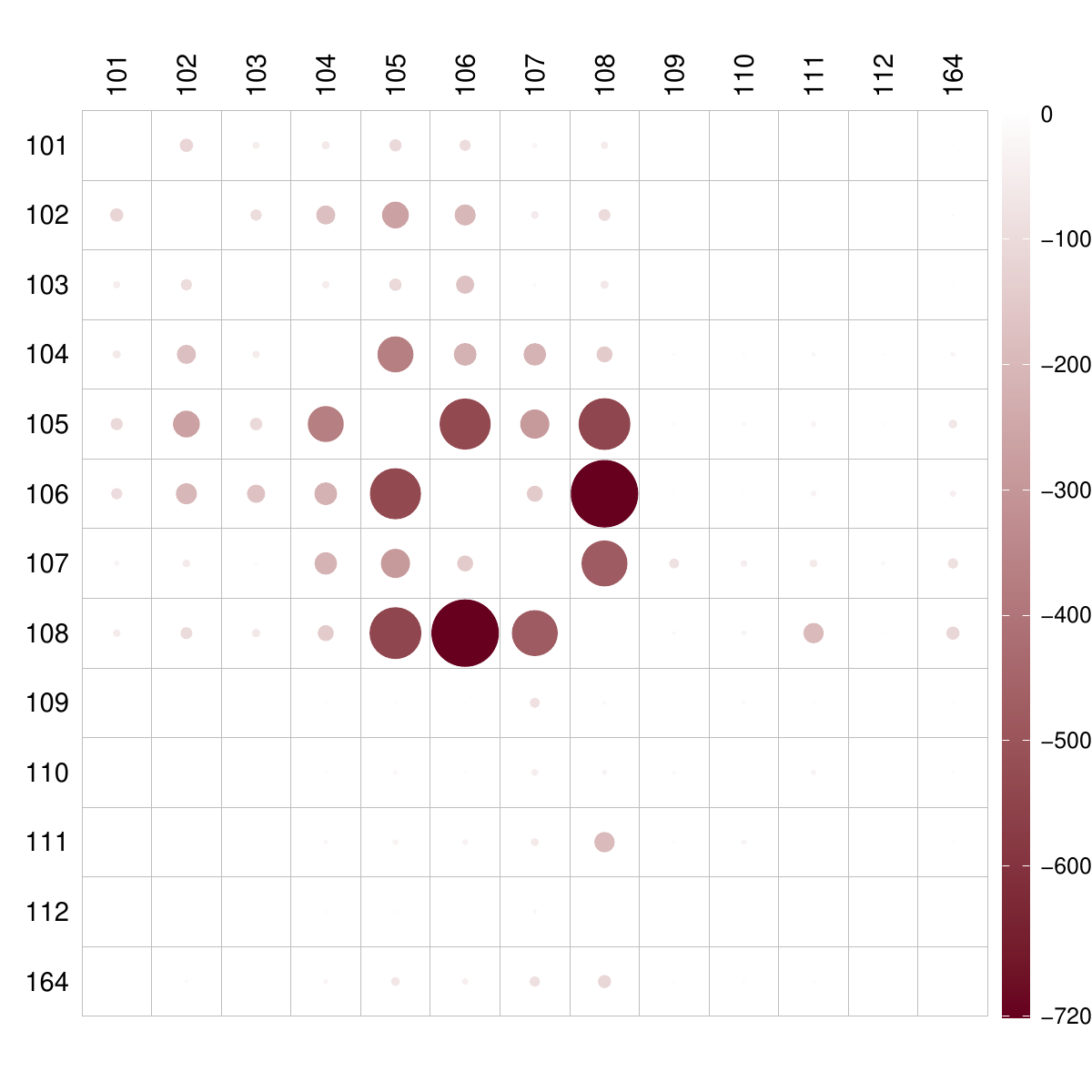}
            \caption{Cross-fitting}
	\end{subfigure}%
	\begin{subfigure}{.4\textwidth}
		\centering
		\includegraphics[width=\linewidth]{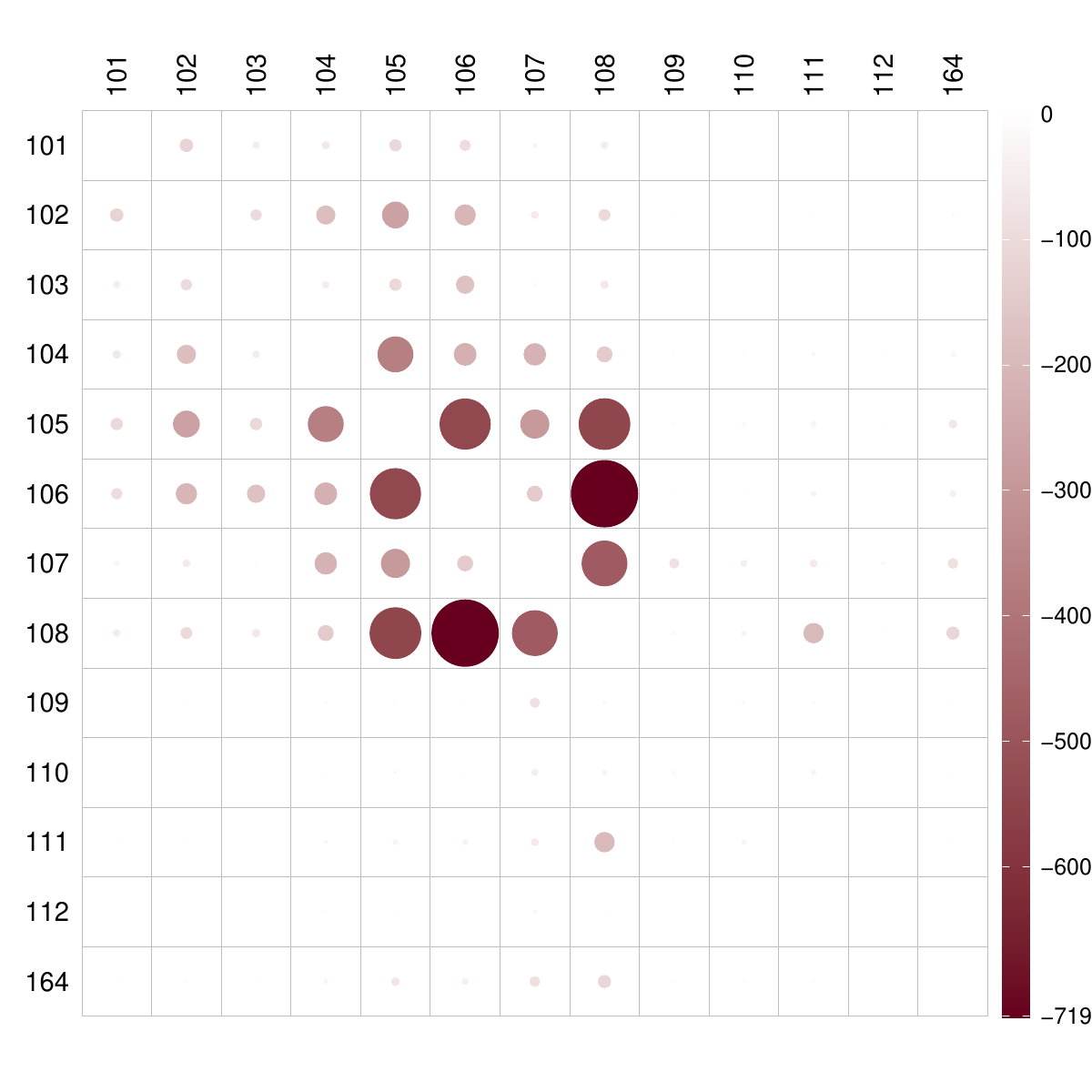}
            \caption{Doubly robust}
	\end{subfigure}
        \begin{subfigure}{.4\textwidth}
		\centering
		\includegraphics[width=\linewidth]{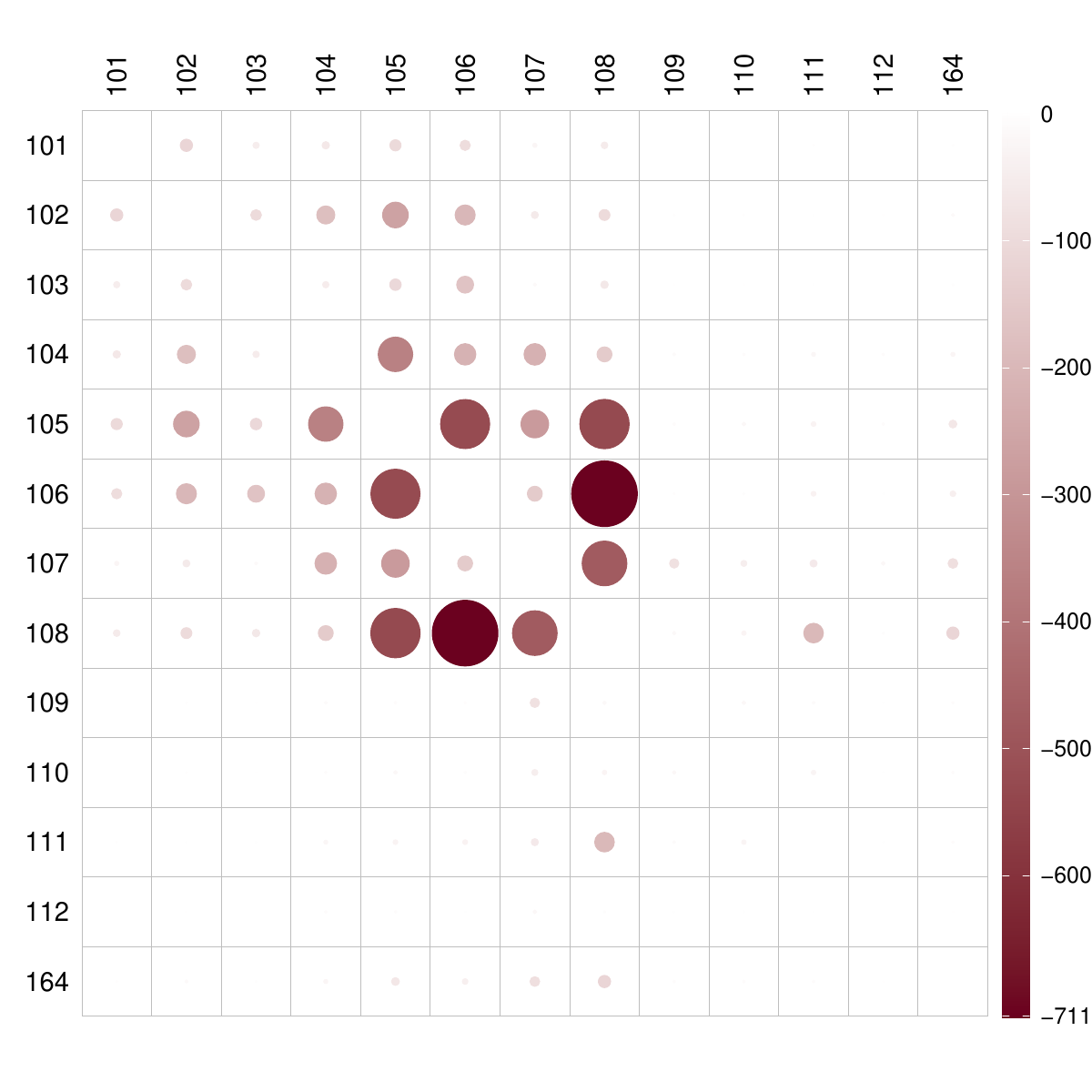}
            \caption{Outcome regression}
	\end{subfigure}%
	\begin{subfigure}{.4\textwidth}
		\centering
		\includegraphics[width=\linewidth]{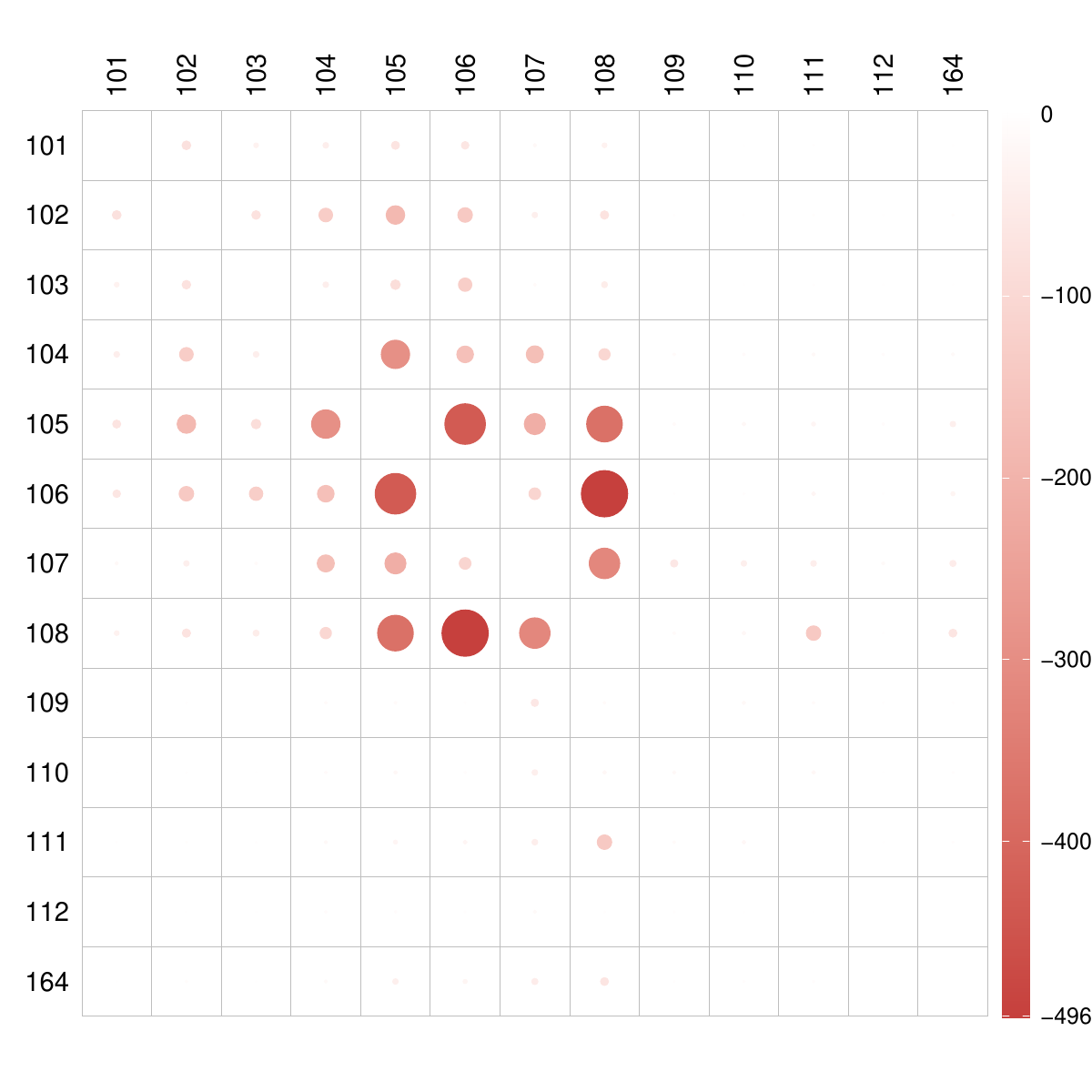}
            \caption{Inverse probability weighting}
	\end{subfigure}
    \caption{Entry-wise differences between the adjacency matrices corresponding to the mean potential networks under low and high COVID-19 new cases, estimated using cross-fitting (CF), doubly robust (DR), outcome regression (OR), and inverse probability weighting (IPW). Each heatmap summarizes the effect of elevated case counts on daily taxi flows among 13 aggregated Manhattan regions. The DR, CF, and OR estimators produce similar spatial patterns, indicating reduced mobility in key residential and commercial areas (regions 105, 106, and 108), while the IPW estimator shows divergent and less stable behavior, likely due to propensity score misspecification.}
\label{fig:taxiate}
\end{figure}

To visualize the GATE in the network setting, Figure~\ref{fig:taxigds} displays the intrinsic geodesic connecting the DR-estimated mean potential networks for low and high COVID-19 new cases. Each panel shows an intermediate network along the geodesic at evenly spaced values of $t \in \{0, 0.2, 0.4, 0.6, 0.8, 1\}$, where $t=0$ corresponds to the mean network under low case counts and $t=1$ corresponds to the mean network under high case counts. The geodesic provides an interpretable description of the causal effect: it depicts the smooth, geometry-respecting deformation of the daily taxi network as COVID-19 activity increases. Specifically, edges associated with travel among residential and commercial hubs (notably regions 105, 106, and 108) weaken progressively along the path, reflecting a coordinated decline in mobility across these areas.

A common alternative is to first reduce each network to a collection of summary statistics, such as degree centrality, clustering coefficients, or average path length, and then apply standard causal inference tools to these Euclidean summaries. While convenient, such an approach captures only selected features of the networks and cannot reveal how connectivity patterns change globally. In contrast, the geodesic in Figure~\ref{fig:taxigds} shows the full structural transition between the mean potential networks, capturing coordinated edge-level changes that are not recoverable from low-dimensional summaries. The GATE therefore provides a coherent representation of how the overall network responds to elevated COVID-19 case counts.

\begin{figure}[tb]
    \centering
	\includegraphics[width=0.8\linewidth]{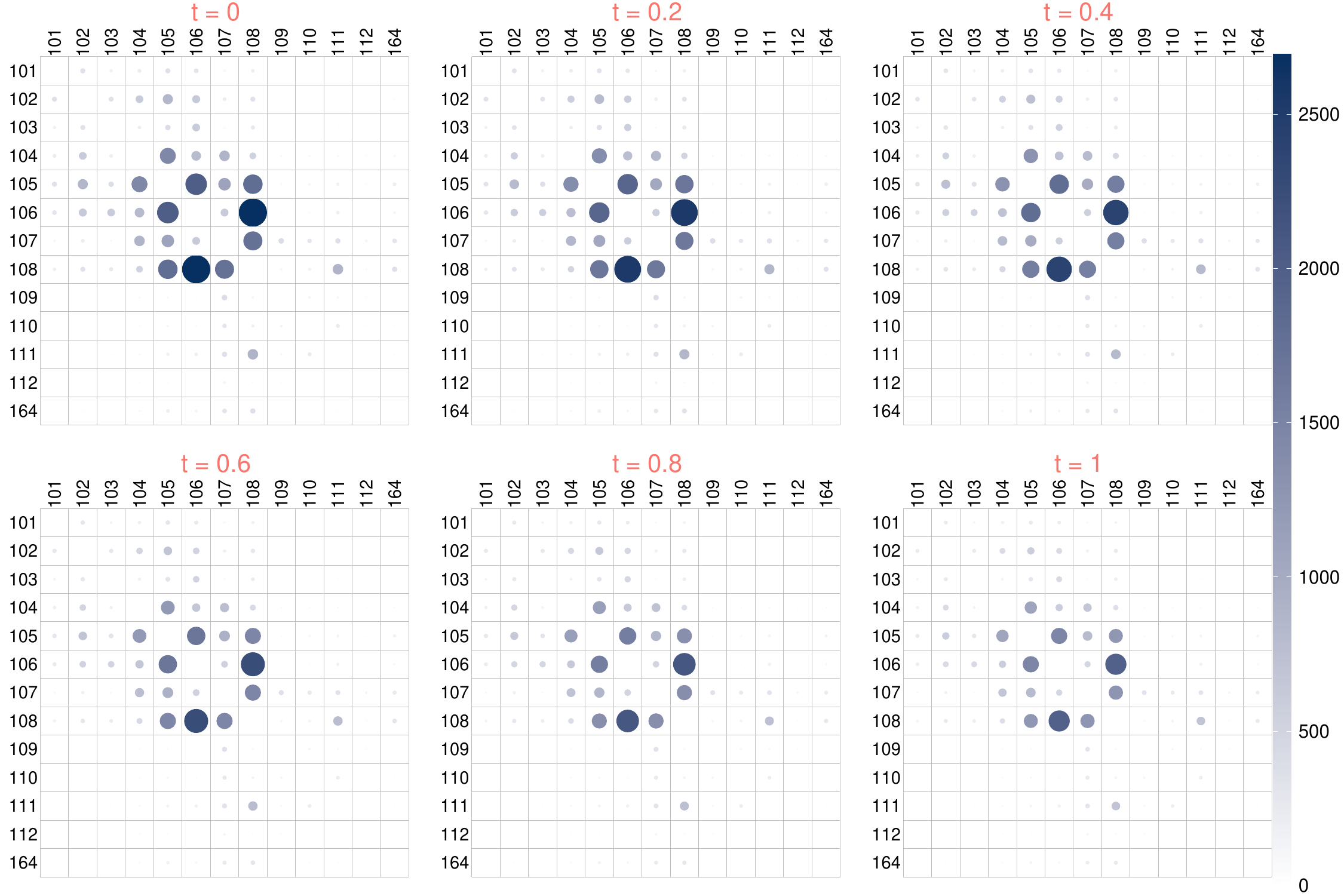}
    \caption{GATE estimated via the doubly robust approach for the NYC taxi network example. Shown are six intermediate networks along the geodesic between the DR-estimated mean potential networks under low ($t=0$) and high ($t=1$) COVID-19 case counts. The geodesic depicts the transition between the two average networks and highlights the systematic weakening of connectivity among key residential and commercial regions as case counts rise.}
    \label{fig:taxigds}
\end{figure}

The magnitude of the estimated GATE under the doubly robust approach, $d(\hat{\Theta}_0^{\mathrm{(DR)}}, \hat{\Theta}_1^{\mathrm{(DR)}})$, is 5216. To evaluate its statistical significance, we apply the inference procedure described in Section~\ref{sec:uq}. The resulting $p$-value is 0.001, providing strong evidence against the null hypothesis of no treatment effect. This suggests that COVID-19 case counts have a statistically significant impact on the daily taxi network structure in Manhattan at the 0.05 significance level.

\subsection{U.S. electricity generation data}\label{subsec:electricity}
Compositional data are ubiquitous and are not situated in a vector space as they correspond to vectors with non-negative elements that sum to 1; see Example~\ref{exm:com}. Various approaches have been developed to address the inherent nonlinearity of compositional data \citep{aitc:86, scea:14, filz:18} which are common in the analysis of geochemical and microbiome data.

Here we focus on U.S. electricity generation data, publicly available on the U.S. Energy Information Administration website (\url{http://www.eia.gov/electricity}). The data reflect net electricity generation from various sources for each state in the year 2020. In preprocessing, we excluded the ``pumped storage'' and ``other'' categories due to data errors and consolidated the remaining energy sources into three categories: Natural Gas (corresponding to the category of ``natural gas''), Other Fossil (combining ``coal,'' ``petroleum,'' and ``other gases''), and Renewables and Nuclear (combining ``hydroelectric conventional,'' ``solar thermal and photovoltaic,'' ``geothermal,'' ``wind,'' ``wood and wood-derived fuels,'' ``other biomass,'' and ``nuclear''). This yielded a sample of $n=50$ compositional observations taking values in the 2-simplex $\Delta^2$, as illustrated with a ternary plot in Figure~\ref{fig:energy}. Following the approach described in Example \ref{exm:com}, we applied the component-wise square root transformation, resulting in compositional outcomes as elements of the sphere $\mathcal{S}^2$, equipped with the geodesic metric.

In our analysis, the exposure (treatment) of interest is whether the state produced coal in 2020, where 29 states produced coal in 2020 while 21 states did not. The outcomes are the compositional data discussed in Example \ref{exm:com}, where we consider two possible confounders: Gross domestic product (GDP) per capita (millions of chained 2012 dollars) and the proportion of electricity generated from coal and petroleum in 2010 for each state. We implemented the proposed approach to obtain the DR, CF, OR, and IPW estimators, with results demonstrated in Figure~\ref{fig:energy}. The DR, CF, and OR estimators yield similar results, suggesting that coal production leads to a smaller proportion of renewables \& nuclear. In contrast, the IPW estimator yields a slightly different results, possibly due to the violation of the propensity score model, and therefore
should not be used here.

\begin{figure}[tb]
    \centering
	\includegraphics[width=0.9\linewidth]{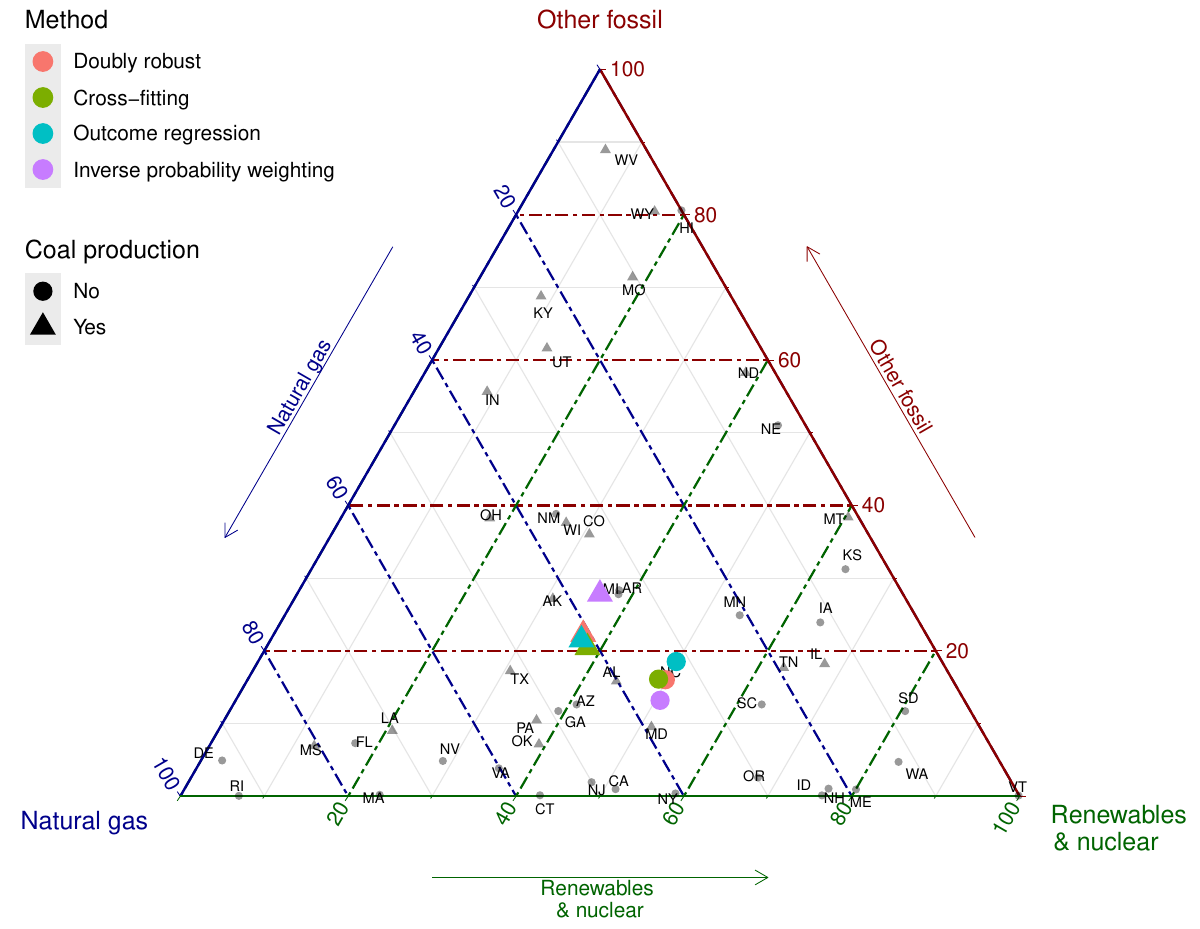}
    \caption{Ternary plot illustrating the composition of electricity generation in 2020 across 50 U.S. states and mean potential outcomes with and without production of coal in 2020 using four different methods. States are shape-coded based on whether they produced coal (triangle) or not (circle) in 2020. Mean potential outcomes are color-coded based on which method is used. The geodesic distance between mean potential outcomes for the DR, CF, OR, and IPW estimators is 0.130, 0.107, 0.131, and 0.198, respectively, where the GATE obtained from IPW differs substantially from the GATE obtained with the other estimators.}
    \label{fig:energy}
\end{figure}

To visualize the GATE obtained under the doubly robust approach, Figure~\ref{fig:energys2} displays the intrinsic geodesic on $\mathcal S^2_{+}$ that connects the DR-estimated mean potential outcomes. This geodesic represents the GATE itself, viewed as the intrinsic shortest path between the two average compositions. Interpreting this path reveals how the electricity mix would evolve when moving from the mean composition of states that do not produce coal to that of states that do. In this example, the geodesic shows a coordinated decline in the Renewables and Nuclear share accompanied by a compensatory increase primarily in Natural Gas. Unlike Euclidean interpolation, which would impose linear and componentwise adjustments, the geodesic reflects a coherent redistribution of energy sources that respects the geometry of compositional data and yields a substantively meaningful depiction of the treatment effect.

\begin{figure}[tb]
    \centering
	\includegraphics[width=0.5\linewidth]{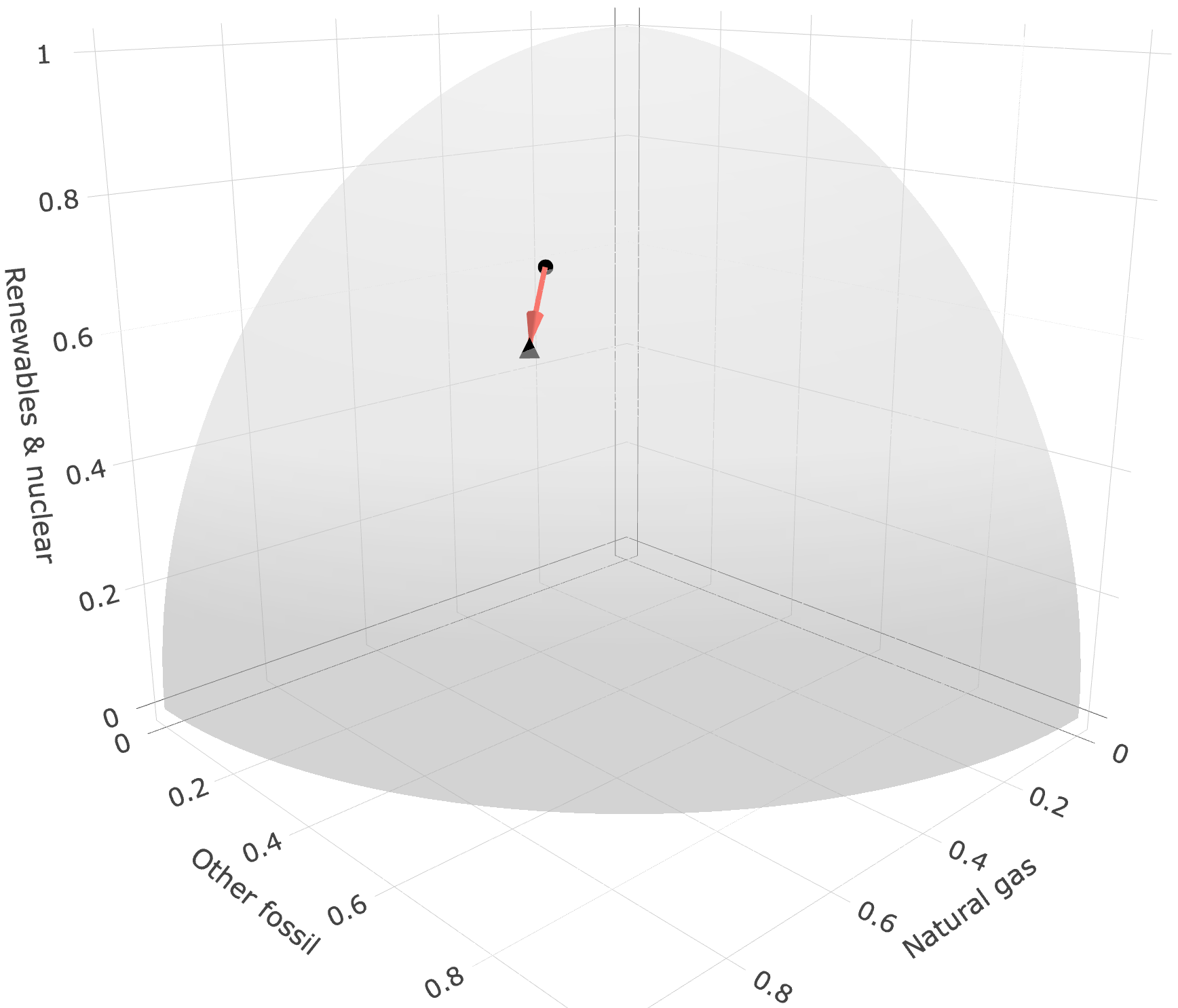}
    \caption{Doubly robust estimate of the GATE visualized on $\mathcal{S}^2_{+}$. The geodesic represents the intrinsic causal effect, illustrating the transition from the mean potential composition under non-coal-producing states (circle) to that under coal-producing states (triangle).}
    \label{fig:energys2}
\end{figure}

An additional application to brain functional connectivity networks is provided in Appendix F.

\section{Discussion} \label{sec:disc} 
We mention here a few additional issues. The assumptions are satisfied for most metric spaces of statistical interest and guarantee relatively fast convergence rates of the proposed doubly robust estimators, but alternative assumptions are also possible. For example, one can replace Assumption \ref{Ass:GDR-rate}(ii) with an analogous condition as Assumption (U2) in \citet{PeMu19}, that is
\[\inf_{d(\nu,\Theta_t^{(\mathrm{DR})})<\eta}\left(Q_t(\nu;e, \mu_t) - Q_t(\Theta_t^{(\mathrm{DR})}; e, \mu_t) - Cd(\nu, \Theta_t^{(\mathrm{DR})})^\beta\right)\geq 0.\]
However, under this relaxed assumption the convergence rates of the estimators $\hat{\Theta}_t^{(\mathrm{DR})}$ and $\hat{\Theta}_t^{(\mathrm{CF})}$ will be slower than those reported in Theorem \ref{Thm:GDR-rate}. Specifically, the term $(\varrho_n + r_n)^{\frac{1-\epsilon}{\beta-1}}$ would need to be replaced with $(\varrho_n + r_n)^{\frac{1}{\beta}}$.

Even when objects that represent outcomes of interest are given, for example SPD matrices, there is still the issue of choosing a metric and ensuing geodesics, where interpretability of the resulting geodesic average treatment effects will be a consideration. For a brief discussion of this and related issues see \citet{mull:24}.

There are numerous scenarios where metric space-valued outcomes are of interest in causal inference, as we have demonstrated in the data application section. The tools to construct estimators for treatment effects are inspired by their linear counterparts but are substantially different, and notions of Euclidean treatment effects need to be revisited to arrive at canonical generalizations for metric spaces. These investigations also shed new light on and lead to a reinterpretation of the classical Euclidean case. Future exploration of technical aspects and real-world applications for object data will enhance the applicability of causal inference across a range of complex data.

\appendix

\section{Asymptotic properties of OR and IPW estimators}\label{apdx:OR-IPW}

\subsection{Outcome regression}\label{apdx:OR}
In this section, we provide asymptotic properties of outcome regression estimators. 

\begin{assumption}\label{Ass:model-OR} 
For $t\in\{0, 1\}$, let $\hat{\mu}_t(\cdot)$ be estimators for the outcome regression functions $m_t(\cdot)$. Assume that $\sup_{x \in \mathcal{X}}d(\hat{\mu}_t(x),m_t(x)) = O_{\Prob}(r_{n,1})$, $t \in \{0,1\}$ with $r_{n,1} \to 0$ as $n \to \infty$. 
\end{assumption}

\begin{assumption}\label{Ass:OR-criterion}
Define $\Theta_t^{(\mathrm{OR})} = \argmin_{\nu \in \mathcal{M}}Q^{(\mathrm{OR})}(\nu;m_t)$ where $Q^{(\mathrm{OR})}(\nu;\mu) = \E[d^2(\nu, \mu(X))]$. The population minimizer $\Theta_t^{(\mathrm{OR})}$ and its empirical counterpart $\hat{\Theta}_t^{(\mathrm{OR})}$ exist and are unique. Moreover, for any $\varepsilon>0$,
\[\inf_{d(\nu, \Theta_t^{(\mathrm{OR})})>\varepsilon}Q^{(\mathrm{OR})}(\nu;m_t) > Q^{(\mathrm{OR})}(\Theta_t^{(\mathrm{OR})};m_t).\]
\end{assumption}

\begin{assumption}\label{Ass:OR-rate}
For $t \in \{0,1\}$, 
\begin{itemize}
\item[(i)] As $\delta \to 0$, 
\begin{align*}
J_t(\delta)&:=\int_0^1 \sqrt{1 + \log N(\delta \varepsilon, B_\delta(\Theta_t^{(\mathrm{OR})}),d)}d\varepsilon = O(1), \\
J_{m_t}(\delta) &:= \int_0^1 \sqrt{1 + \log N(\delta \varepsilon, B_{\delta'_1}(m_t),d_{\infty})}d\varepsilon = O(\delta^{-\vartheta_1}) 
\end{align*}
for some $\delta'_1>0$ and $\vartheta_1 \in (0,1)$.
\item[(ii)] There exist positive constants $\eta$, $\eta_1$, $C$, $C_1$, and $\beta>1$ such that 
\begin{align*}
\inf_{d_\infty(\mu,\mu_t)\leq \eta_1}
\inf_{d(\nu, \Theta_t^{(\mathrm{OR})})\leq \eta} 
\Bigg\{ & Q^{(\mathrm{OR})}(\nu;\mu) - Q^{(\mathrm{OR})}(\Theta_t; \mu) - \\
&Cd(\nu, \Theta_t^{(\mathrm{OR})})^\beta  
+ C_1\eta_1^{\frac{\beta}{2(\beta-1)}}d(\nu, \Theta_t^{(\mathrm{OR})})^{\frac{\beta}{2}} 
\Bigg\} \geq 0.
\end{align*}
\end{itemize}
\end{assumption}

The following results provide the consistency and convergence rates of the outcome regression estimators. The proofs are similar to those of Theorems 4.1 and 4.2 and thus omitted. 

\begin{theorem}\label{Thm:OR-conv}
Suppose that Assumptions 3.1(ii), \ref{Ass:model-OR}, and \ref{Ass:OR-criterion} hold. Then as $n \to \infty$, 
\[d(\hat{\Theta}_t^{(\mathrm{OR})},\E_\oplus[Y(t)]) = o_{\Prob}(1),\quad t \in \{0,1\}.\] 
\end{theorem}

\begin{theorem}\label{Thm:OR-rate}
Suppose that Assumptions 3.1(ii), \ref{Ass:model-OR}, \ref{Ass:OR-criterion}, and \ref{Ass:OR-rate} hold. Then for each $\epsilon \in (0,1)$, as $n \to \infty$, 
\[d(\hat{\Theta}_t^{(\mathrm{OR})},\E_\oplus[Y(t)]) = O_{\Prob}\left( n^{-\frac{1}{2(\beta - 1 +\vartheta_1)}} + r_{n,1}^{\frac{1-\epsilon}{(\beta-1)}}\right),\quad t \in \{0,1\}.\]
\end{theorem}

\subsection{Inverse probability weighting}\label{apdx:IPW}
In this section, we provide asymptotic properties of inverse probability weighting estimators. Recall that for $t\in\{0, 1\}$,
\[\Theta_t^{(\mathrm{IPW})}=\E_{\oplus}\left[\gamma_{y_\oplus, Y}\left(\frac{tT}{e(X)} + \frac{(1-t)(1-T)}{1-e(X)}\right)\right]=\argmin_{\nu\in\mathcal{M}}Q_t^{(\mathrm{IPW})}(\nu;e),\]
where
\[Q_t^{(\mathrm{IPW})}(\nu;e) = \E\left[d^2\left(\nu, \gamma_{y_\oplus,Y}\left(\frac{tT}{e(X)} + \frac{(1-t)(1-T)}{1-e(X)}\right)\right)\right].\]

\begin{assumption}\label{Ass:model-IPW} 
Let $\hat{e}(\cdot)$ denote the estimator of the propensity score $p(\cdot)$. Assume that $\sup_{x \in \mathcal{X}}|\hat{e}(x) -p(x)| = O_{\Prob}(\varrho_{n,1})$ with $\varrho_{n,1} \to 0$ as $n \to \infty$. 
\end{assumption}

\begin{assumption}\label{Ass:IPW-criterion}
The population minimizer $\Theta_t^{(\mathrm{IPW})}$ and its empirical counterpart $\hat{\Theta}_t^{(\mathrm{IPW})}$ exist and are unique. Moreover, for any $\varepsilon>0$,
\[\inf_{d(\nu, \Theta_t^{(\mathrm{IPW})})>\varepsilon}Q_t^{(\mathrm{IPW})}(\nu;e) > Q_t^{(\mathrm{IPW})}(\Theta_t^{(\mathrm{IPW})};e).\]
\end{assumption}

\begin{assumption}\label{Ass:IPW-rate}
For $t \in \{0,1\}$, 
\begin{itemize}
\item[(i)] As $\delta \to 0$, 
\begin{align*}
J_t(\delta)&:=\int_0^1 \sqrt{1 + \log N(\delta \varepsilon, B_\delta(\Theta_t^{(\mathrm{IPW})}),d)}d\varepsilon = O(1),\\
J_p(\delta) &:= \int_0^1 \sqrt{1 + \log N(\delta \varepsilon, B_{\delta'_1}(p),\|\cdot\|_{\infty})}d\varepsilon = O(\delta^{-\vartheta_2})
\end{align*}
for some $\delta'_1>0$ and $\vartheta_2 \in (0,1)$, where for $p_1,p_2:\mathcal{X} \to (0, 1)$, $\|p_1 - p_2\|_\infty=\sup_{x \in \mathcal{X}}|p_1(x) - p_2(x)|$.

\item[(ii)] There exist positive constants $\eta$, $\eta_1$, $C$, $C_1$, and $\beta>1$ such that  
\begin{align*}
&\inf_{\|e-p\|_\infty \leq \eta_1}\hspace{0.2em}\inf_{d(\nu,\Theta_t^{(\mathrm{IPW})})\leq \eta}\!\!\left\{\!Q_t^{(\mathrm{IPW})}\!(\nu;e) \!-\! Q_t^{(\mathrm{IPW})}\!(\Theta_t; e) \right. \\ 
&\left. \qquad \qquad \qquad \qquad \qquad - d(\nu, \Theta_t^{(\mathrm{IPW})})^\beta \!+\! C_1\eta_1^{\frac{\beta}{2(\beta-1)}}\!d(\nu, \Theta_t^{(\mathrm{IPW})})^{\frac{\beta}{2}}\!\right\} \geq 0. 
\end{align*} 
\end{itemize}
\end{assumption}

The following results provide the consistency and convergence rates of the inverse probability weighting estimators. The proofs follow by the same empirical process arguments as in Theorems 4.1 and 4.2.

\begin{theorem}\label{Thm:IPW-conv}
Suppose that Assumptions 3.1, 4.2, \ref{Ass:model-IPW}, and \ref{Ass:IPW-criterion} hold. Then as $n \to \infty$, 
\[d(\hat{\Theta}_t^{(\mathrm{IPW})},\E_\oplus[Y(t)]) = o_{\Prob}(1),\quad t \in \{0,1\}.\]  
\end{theorem}

\begin{theorem}\label{Thm:IPW-rate}
Suppose that Assumptions 3.1, 4.2, \ref{Ass:model-IPW}, \ref{Ass:IPW-criterion}, and \ref{Ass:IPW-rate} hold. Then for each $\epsilon \in (0,1)$, as $n \to \infty$, 
\[d(\hat{\Theta}_t^{(\mathrm{IPW})},\E_\oplus[Y(t)]) = O_{\Prob}\left( n^{-\frac{1}{2(\beta - 1 +\vartheta_2)}} + \varrho_{n,1}^{\frac{1-\epsilon}{(\beta-1)}}\right),\quad t \in \{0,1\}.\]
\end{theorem}

\section{Verification of model assumptions}\label{apdx:ass}

In this section, we verify our assumptions for Examples 1--5. 

\begin{proposition}
    \label{prop:net}
    The space of networks defined in Example 1 satisfies Assumptions 3.2, 4.2, 4.3(i), and 4.4.
\end{proposition}
\begin{proof}
    We first verify Assumptions~4.2, 4.3(i), and 4.4. In the space of graph Laplacians equipped with the Frobenius metric, geodesics coincide with Euclidean line segments connecting the endpoints. It follows directly that Assumption~4.2 holds with constant $C_0 = 1/\eta_0 - 1$. Moreover, by adapting the argument in Theorem~2 of \citet{zhou:22}, one can show that Assumptions~4.3(i) and 4.4 are satisfied with $\beta = 2$ in Assumption~4.4(ii).

    We next verify Assumption~3.2. For any random graph Laplacian $Y$, consider the model $Y=m(X)+\epsilon$ where $\varepsilon$ is an $m \times m$ random graph Laplacian satisfying $\E[\varepsilon_{jk}|X] = 0$ and $\E[\varepsilon_{jk}^2|X] < \infty$ for $1 \leq j < k \leq m$. Let $\mathrm{tr}(A)$ denote the trace of a matrix $A$. For any graph Laplacian $\nu$, the squared Frobenius distance satisfies
    \begin{align*}
     E[d_F^2(\nu,Y)|X] &= \E[d_F^2(\nu,m(X))|X] + \E\left[\mathrm{tr}\left(\varepsilon'\varepsilon\right)|X\right],\\
    \E[d_F^2(\nu,Y)] &= \E[d_F^2(\nu,m(X))] + \E\left[\mathrm{tr}\left(\varepsilon'\varepsilon\right)\right],
    \end{align*} 
    This decomposition implies that $\Eo[Y|X]=m(X)$ and $\Eo[Y] = \Eo[m(X)]$, thereby verifying Assumption~3.2(i). Assumption~3.2(ii) follows directly from the linear structure of the Frobenius metric, under which geodesic interpolation reduces to affine combinations.
\end{proof}

\begin{proposition}
    \label{prop:cor}
    The space of SPD matrices defined in Example 2 satisfies Assumptions 3.2, 4.2, 4.3(i), and 4.4.
\end{proposition}
The proof of Proposition \ref{prop:cor} is similar to that of Proposition \ref{prop:net} and is omitted.

\begin{proposition}
    \label{prop:fun}
    The Hilbert space $L^{2}(\mathcal{T})$ defined in Example 3 satisfies Assumptions 3.2, 4.2, 4.3(i), and 4.4.
\end{proposition}
\begin{proof}
In the Hilbert space $L^2(\mathcal{T})$, geodesics are given by linear interpolation, $\gamma_{f,g}(t)=f+t(g-f)$. Hence, for any $f_1,f_2,g\in L^2(\mathcal{T})$ and 
$\kappa\in[1/(1-\eta_0),1/\eta_0]$,
\[\begin{aligned}
d(\gamma_{f_1,g}(\kappa),\gamma_{f_2,g}(\kappa))
&=\| (1-\kappa)f_1+\kappa g
      -(1-\kappa)f_2-\kappa g\|_{L^2}  \\
&=|1-\kappa|\,\|f_1-f_2\|_{L^2}  \\
&\leq\left(\frac{1}{\eta_0}-1\right)d(f_1,f_2),
\end{aligned}\]
so Assumption~4.2 holds with $C_0=1/\eta_0-1$.

Assumption~4.3(i) follows from the strict convexity of the squared Hilbert norm. Indeed, for any random element $Y\in L^2(\mathcal{T})$ with finite second moment,
\[\nu\mapsto \E\|Y-\nu\|_{L^2}^2\]
is uniquely minimized at $\E[Y]$, and the separation condition follows from the quadratic identity
\[\E[\|Y-\nu\|_{L^2}^2]
=
\E[\|Y-\E[Y]\|_{L^2}^2]+\|\nu-\E[Y]\|_{L^2}^2.\]
Assumption~4.4 follows from the standard entropy bounds for bounded subsets of finite-dimensional or suitably regular functional subspaces of $L^2(\mathcal{T})$, together with the same quadratic curvature condition, with $\beta=2$ in Assumption~4.4(ii).

It remains to verify Assumption~3.2. Since $L^2(\mathcal{T})$ is linear, the \f mean coincides with the usual Bochner expectation. Therefore, for any random element $Y\in L^2(\mathcal{T})$ and random vector $X$,
\[\E_\oplus[Y]=\E[Y]=\E[\E[Y|X]]=\E_\oplus[\E_\oplus[Y|X]],\]
which proves Assumption~3.2(i).

For Assumption~3.2(ii), using the linear form of the geodesic and the independence of $U$ and $Y$,
\[\begin{aligned}
\E_\oplus[\gamma_{f,Y}(U)]
&=
\E[f+U(Y-f)]  \\
&=
f+\E[U](\E[Y]-f)  \\
&=
\gamma_{f,\E_\oplus[Y]}(\E[U]).
\end{aligned}\]
Thus Assumption~3.2(ii) holds, completing the proof.
\end{proof}

\begin{proposition}
    \label{prop:mea}
    The Wasserstein space defined in Example 4 satisfies Assumptions 3.2, 4.2, 4.3(i), and 4.4.
\end{proposition}

\begin{proof}
We first verify Assumptions~4.2, 4.3(i), and 4.4. For any probability measure $\mu \in \mathcal{W}$, let $F_\mu^{-1}$ denote its quantile function. The map $Q:\mu \mapsto F_\mu^{-1}$ is an isometry from $\mathcal{W}$ into the Hilbert space $L^2(0,1)$, where the image consists of equivalence classes of left-continuous nondecreasing functions on $(0,1)$ \citep{bigo:17}. Under this representation, the Wasserstein distance $d_{\mathcal{W}}$ coincides with the $L^2$ distance between quantile functions. Consequently, $\mathcal{W}$ can be viewed as a closed and convex subset of $L^2(0,1)$.

In the ambient Hilbert space $(L^2(0,1), d_{L^2})$, geodesics are given by linear interpolation, and it follows directly that Assumption~4.2 holds with constant $C_0 = 1/\eta_0 - 1$. Since $\mathcal{W}$ is a closed subset of $L^2(0,1)$, geodesics in $\mathcal{W}$ are obtained by restricting these interpolations to remain within the admissible set; this restriction can only decrease distances, so Assumption~4.2 continues to hold with the same constant. Furthermore, by adapting the argument in Proposition~1 of \citet{PeMu19}, Assumptions~4.3(i) and 4.4 are satisfied with $\beta = 2$ in Assumption~4.4(ii).

We next verify Assumption~3.2(i). Under the quantile representation, the \f mean in $\mathcal{W}$ corresponds to the pointwise $L^2$ mean of quantile functions. For any random element $Y \in \mathcal{W}$, consider the model
\[F_Y^{-1}(s) = F_{m(X)}^{-1}(s) + \varepsilon, \quad s \in [0,1],\]
where $m(X) \in \mathcal{W}$ and $\varepsilon$ is a real-valued random variable satisfying $\E[\varepsilon|X] = 0$ and $\E[\varepsilon^2|X] < \infty$. Then, for any $\nu \in \mathcal{W}$,
\begin{align*}
\E\big[d_{\mathcal{W}}^2(\nu, Y)|X\big]&=\E\big[d_{\mathcal{W}}^2(\nu, m(X))|
X\big]+\E[\varepsilon^2|X],\\
\E\big[d_{\mathcal{W}}^2(\nu, Y)\big]&=\E\big[d_{\mathcal{W}}^2(\nu, m(X))\big]+\E[\varepsilon^2],
\end{align*}
which implies $\Eo[Y|X]=m(X)$ and $\Eo[Y] = \Eo[m(X)]$, thereby establishing Assumption~3.2(i).

For Assumption~3.2(ii), note that geodesics in $\mathcal{W}$ correspond to linear interpolation of quantile functions, that is,
\[F_{\gamma_{\alpha,Y}(u)}^{-1} = F_\alpha^{-1} + u\big(F_Y^{-1} - F_\alpha^{-1}\big).\]
Taking expectations and using linearity of the $L^2$ structure, we obtain
\[\E\big[F_{\gamma_{\alpha,Y}(U)}^{-1}\big] 
= F_\alpha^{-1} + \E[U]\big(\E[F_Y^{-1}] - F_\alpha^{-1}\big)
= F_{\gamma_{\alpha,\,\E_\oplus[Y]}(\E[U])}^{-1},\]
which implies $\E_\oplus[\gamma_{\alpha,Y}(U)] = \gamma_{\alpha,\E_\oplus[Y]}(\E[U])$. This establishes Assumption~3.2(ii).
\end{proof}

\begin{proposition}
    \label{prop:com}
    Consider the space of compositional data $(\mathcal{S}^{d-1}_+, d_g)$ defined in Example 5. Let $T_\mu \mathcal{M}$, $\mathrm{Exp}_\mu$ and $\mathrm{Log}_\mu$ denote the tangent bundle, exponential and logarithmic maps at $\mu \in \mathcal{M}$, respectively. Define $\tilde{\nu} = \mathrm{Log}_{\Theta_t^{(\mathrm{DR})}}(\nu)$ and $\tilde{\gamma} = \mathrm{Log}_{\gamma_t}\left(\bar{\gamma}_t\right)$,  where 
    \begin{align*}
    \gamma_t &=\gamma_{\mu_t(X),Y(t)}\left(\frac{tT}{e(x)} + \frac{(1-t)(1-T)}{1-e(X)}\right),\\\bar{\gamma}_t &= \gamma_{\tilde{\mu}(X),Y(t)}\left(\frac{tT}{\tilde{e}(x)} + \frac{(1-t)(1-T)}{1-\tilde{e}(X)}\right).
    \end{align*}
    Additionally, for $v \in T_{\Theta_t^{(\mathrm{DR})}}\mathcal{M}$ and $w \in T_{\gamma_t}\mathcal{M}$, define $g(v,w) = d_g^2(\mathrm{Exp}_{\Theta_t^{(\mathrm{DR})}}(v),\mathrm{Exp}_{\gamma_t}(w))$. If there exist positive constants $\bar{\eta}_1$ and $\bar{\eta}$ such that
    \begin{align}
    &\inf_{\substack{\|v^*\|\leq \bar{\eta}_1\\ \sup_{x \in \mathcal{X}}d_g(\tilde{\mu}(x),\mu_t(x))\leq \bar{\eta}\\ \sup_{x\in\mathcal{X}}|\tilde{e}(x)-e(x)|\leq \bar{\eta}}}\lambda_{\mathrm{min}}\left(\left.\frac{\partial^2}{\partial v \partial v'}\E[g(v,\tilde{\gamma})]\right|_{v = v^*} \right)> 0, \label{ineq:com-eigen}
    \end{align}
    then the space $(\mathcal{S}^{d-1}_+, d_g)$ satisfies Assumptions 3.2(i), 4.2, 4.3(i), and 4.4, where $\lambda_{\mathrm{min}}(A)$ is the smallest eigenvalue of a square matrix $A$.
\end{proposition}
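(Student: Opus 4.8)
The plan is to verify each assumption in turn, treating the sphere's positive curvature as the central difficulty that distinguishes this case from the flat Examples \ref{exm:net} and \ref{exm:cor}. First I would dispatch the structural conditions: $(\mathcal{S}^{d-1}_+, d_g)$ is a closed, bounded, and convex subset of the unit sphere, hence complete, separable, and totally bounded, and geodesics between points in the positive orthant stay within it and extend uniquely (with the boundary modification of Section \ref{Sec:Pre1}), so Assumption \ref{Ass:GATE}(i) holds. For Assumption \ref{Ass:p-map}(ii)--(iii), I would construct the perturbation map via the exponential map, setting $\mathcal{P}_t(m_t(X)) = \mathrm{Exp}_{m_t(X)}(U)$ for a random tangent vector $U \in T_{m_t(X)}\mathcal{M}$ with conditional mean zero (mirroring the simulation construction and the perturbation framework of \citet{ChMu22}); the conditional Fr\'echet mean identity (ii) follows from the first-order optimality of the exponential-map perturbation, and (iii) follows from the law of iterated expectations for Fr\'echet means.

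Next I would establish the Lipschitz-type geodesic condition, Assumption \ref{Ass:geodesic-dist}. Using the explicit spherical geodesic formula $\gamma_{\alpha,\beta}(t) = \cos(\theta t)\alpha + \sin(\theta t)\frac{\beta - (\alpha'\beta)\alpha}{\|\beta - (\alpha'\beta)\alpha\|}$ from Section \ref{Sec:Pre1}, I would bound $d_g(\gamma_{\alpha_1,\beta}(\kappa), \gamma_{\alpha_2,\beta}(\kappa))$ for $\kappa$ in the compact range $[1/(1-\eta_0), 1/\eta_0]$ by $C_0\, d_g(\alpha_1,\alpha_2)$. Since the sphere is not CAT($0$), this does not follow automatically, but the geodesics here travel a bounded angular distance (the factor $\kappa$ is bounded and the orthant has bounded diameter), so the map $\alpha \mapsto \gamma_{\alpha,\beta}(\kappa)$ is Lipschitz on this compact domain; the constant $C_0$ depends only on $\eta_0$ and the diameter of $\mathcal{S}^{d-1}_+$, which the paper already notes holds for orthants of spheres.

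The crux is Assumptions \ref{Ass:DR-criterion}(i) and \ref{Ass:GDR-rate}, and this is where hypothesis \eqref{ineq:com-eigen} does the work. I would reduce the curvature condition \eqref{eq:exp} to a local quadratic lower bound with $\beta = 2$ by lifting the criterion function $Q_t$ to the tangent space at $\Theta_t^{(\mathrm{DR})}$ via $\mathrm{Log}_{\Theta_t^{(\mathrm{DR})}}$. Writing $v = \tilde{\nu}$ and using $g(v,w) = d_g^2(\mathrm{Exp}_{\Theta_t^{(\mathrm{DR})}}(v), \mathrm{Exp}_{\gamma_t}(w))$, the population criterion becomes $\E[g(\tilde{\nu}, \tilde{\gamma})]$, whose Hessian in $v$ is uniformly positive definite on the relevant neighborhood precisely by \eqref{ineq:com-eigen}. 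A second-order Taylor expansion of $\E[g(v, \tilde\gamma)]$ in $v$ around the minimizer then yields the quadratic lower bound $Q_t(\nu) - Q_t(\Theta_t^{(\mathrm{DR})}) \geq C\, d_g(\nu, \Theta_t^{(\mathrm{DR})})^2$ locally, and the mixed nuisance-dependence term producing the $C_1 \eta_1^{\beta/(2(\beta-1))} d^{\beta/2}$ correction arises from controlling how the base point $\gamma_t$ shifts when $(\mu, \varphi)$ deviate from $(\mu_t, \varphi_*)$, using the Lipschitz dependence of geodesics (Assumption \ref{Ass:geodesic-dist}) and of the propensity model (Assumption \ref{Ass:model-prop-reg}(i)). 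The separation condition \ref{Ass:DR-criterion}(i) and uniqueness follow from the same strict convexity of $\E[g(\cdot, \tilde\gamma)]$ near its minimizer combined with continuity of the criterion on the compact space. Finally, the entropy bounds \eqref{eq:coverN-bound1} and \eqref{eq:covering-reg} hold because $\mathcal{S}^{d-1}_+$ is a compact $(d-1)$-dimensional Riemannian manifold, so its covering numbers satisfy $N(\varepsilon, B_\delta(\cdot), d_g) = O((\delta/\varepsilon)^{d-1})$, giving $J_t(\delta) = O(1)$ and $J'_{\mu_t}(\delta) = O(-\log\delta)$. The main obstacle is verifying that the eigenvalue hypothesis \eqref{ineq:com-eigen} genuinely delivers the local quadratic behavior uniformly over the nuisance neighborhood, since positive curvature can cause $d_g^2$ to fail to be globally convex; the argument must be confined to a small enough geodesic ball $\bar\eta$ where the exponential map is a diffeomorphism and the squared-distance Hessian stays bounded below, which is exactly what the infimum in \eqref{ineq:com-eigen} encodes.
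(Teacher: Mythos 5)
Your proposal follows essentially the same route as the paper: Assumption \ref{Ass:GATE}(i) from closedness and boundedness of $\mathcal{S}^{d-1}_+$ in the unit sphere; Assumption \ref{Ass:p-map} via the exponential-map perturbation $\mathcal{P}_t(m_t(X)) = \mathrm{Exp}_{m_t(X)}(U)$ with a mean-zero tangent vector (exactly the paper's construction); Assumption \ref{Ass:geodesic-dist} from the explicit spherical geodesic formula with $\kappa$ ranging over a compact interval (the paper routes this through the metric equivalence $\|q_1-q_2\| \leq d_g(q_1,q_2) \leq \tfrac{\pi}{2\sqrt{2}}\|q_1-q_2\|$ and a Euclidean Lipschitz bound, yielding $C_0 = C'\pi/(2\sqrt{2})$, but this is the same idea); and Assumptions \ref{Ass:DR-criterion}(i) and \ref{Ass:GDR-rate} with $\beta=2$ from the tangent-space lift and the uniform Hessian lower bound \eqref{ineq:com-eigen} --- your Taylor-expansion sketch is precisely the content of Proposition 3 of \citet{PeMu19}, which the paper invokes by citation rather than unpacking.

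One step in your argument is genuinely wrong as stated: you claim Assumption \ref{Ass:p-map}(iii) ``follows from the law of iterated expectations for Fr\'echet means.'' No such tower property holds in positively curved spaces --- Fr\'echet means do not commute with conditioning, which is exactly why (iii) is stated as a separate assumption rather than being a consequence of (ii). The paper instead verifies (iii) by a direct argument analogous to (ii) (compare the explicit computation $\E[d_F^2(\nu,\mathcal{P}_t(m_t(X)))] = \E[d_F^2(\nu,m_t(X))] + \E[\mathrm{tr}(\varepsilon'\varepsilon)]$ in the graph-Laplacian case, Proposition \ref{prop:net}); on the sphere one uses the symmetry of the tangent-space perturbation $U=\sum_j Z_j e_j$ with independent mean-zero $Z_j$ to show the unconditional Fr\'echet mean is unmoved. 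The fix is immediate given your construction, but the justification you gave would not survive scrutiny. A smaller caveat, not a gap relative to the paper's own level of detail: compactness of the domain alone does not give Lipschitzness of $\alpha \mapsto \gamma_{\alpha,\beta}(\kappa)$; one needs smoothness of the geodesic formula, with care at the degenerate configuration $\alpha=\beta$ (where the direction vector $(\beta-(\alpha'\beta)\alpha)/\|\beta-(\alpha'\beta)\alpha\|$ is undefined) and at the boundary-modified extensions --- the paper is equally terse here, asserting the existence of the constant $C'$.
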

\begin{proof}
    Assumption 4.2 holds trivially for $\alpha_1=\alpha_2$. So we verify Assumption 4.2 for $\alpha_1 \neq \alpha_2$. Note that for any $q_1,q_2 \in \mathcal{S}_+^{d-1}$, we have $\|q_1 - q_2\| \leq d_g(q_1,q_2) \leq \frac{\pi}{2\sqrt{2}}\|q_1- q_2\|$, which implies the equivalence of the geodesic metric and the ambient Euclidean metric. One can find a positive constant $C'$ depending only on $\eta_0$ such that 
    \[\sup_{\beta \in \mathcal{S}^{d-1}_+, \kappa \in [1/(1-\eta_0),1/\eta_0]}\|\gamma_{\alpha_1,\beta}(\kappa) - \gamma_{\alpha_2,\beta}(\kappa)\| \leq C'\|\alpha_1 - \alpha_2\|\] 
    and this yields (4.1) with $C_0 = C'\pi/(2\sqrt{2})$. 
    
    Following the argument in Proposition 3 of \citet{PeMu19}, one can verify that Assumptions 4.3(i) and 4.4 hold with $\beta=2$ in Assumption 4.4(ii) under (\ref{ineq:com-eigen}).
    
    Let $m(x)$ be a function that takes values in $\mathcal{S}_+^{d-1}$. For $y_\oplus=\Eo[m(X)]\in\mathcal{S}^{d-1}_+$, let $T_{y_\oplus} \mathcal{S}_+^{d-1}$ denote the tangent space of $\mathcal{S}_+^{d-1}$ at $y_\oplus$. Consider a random variable $U \in T_{y_\oplus}\mathcal{S}_+^{d-1}$ independent of $X$ such that $\E[U]=0$. Define $Y= \mathrm{Exp}_{m(X)}(\mathcal{P}_{y_\oplus \to m(X)}(U))$ where $\mathcal{P}_{y_\oplus \to m(X)}$ is the parallel transport map from $y_\oplus$ to $m(X)$. Observe that $\E[\mathrm{Log}_{m(X)}(Y)|X]=\E[\mathcal{P}_{y_\oplus \to m(X)}(U)|X] = \mathcal{P}_{y_\oplus \to m(X)}(\E[U|X])=0$. This and the first-order optimality condition of the \f mean on the positive orthant of the unit sphere yield $\Eo[Y|X]=m(X)$. Assume that $U \stackrel{d}{=}RU$ and $m(X) \stackrel{d}{=} Rm(X)$ for all $R \in G_{y_\oplus}=\{R \in \mathrm{SO}(d): Ry_\oplus=y_\oplus\}$ where $\mathrm{SO}(d)$ is the special orthgonal group. Note that for any $R\in G_{y_\oplus}$, 
    \[
    \mathrm{Exp}_p(Rv) = R\mathrm{Exp}_{y_\oplus}(v),\ \mathrm{Log}_{y_\oplus}(Rp)=R\mathrm{Log}_{y_\oplus}(p),\ \mathcal{P}_{Ry_\oplus \to Rp}(Rv) = R\mathcal{P}_{y_\oplus \to p}(v). 
    \]
    Combining the conditions on $U$ and $m(X)$, we have 
    \begin{align*}
    RY 
    &= \mathrm{Exp}_{m(X)}(R\mathcal{P}_{y_\oplus \to m(X)}(U))
    = \mathrm{Exp}_{m(X)}(\mathcal{P}_{Ry_\oplus \to Rm(X)}(RU))\\
    &= \mathrm{Exp}_{m(X)}(\mathcal{P}_{y_\oplus \to Rm(X)}(RU))
    \stackrel{d}{=} \mathrm{Exp}_{m(X)}(\mathcal{P}_{y_\oplus \to m(X)}(U))
    = Y
    \end{align*}
    for all $R \in G_{y_\oplus}$. Then $R\E[\mathrm{Log}_{y_\oplus}(Y)]=\E[\mathrm{Log}_{y_\oplus}(Y)]$. This yields $\E[\mathrm{Log}_{y_\oplus}(Y)]=0$ and from the first-order optimality condition of the \f mean, we have $\E_\oplus[Y]=y_\oplus=\E_\oplus[m(X)]$. Therefore, we can verify Assumption 3.2(i).
\end{proof}

\section{Proofs for Section 3}\label{apdx:main}
\begin{proof}[Proof of Proposition 3.1]
It suffices to show that, for each $t \in \{0,1\}$,
\[\Theta_t = \E_{\oplus}[Y(t)]\]
whenever either $e(\cdot) = p(\cdot)$ or $\mu_t(\cdot) = m_t(\cdot)$. We provide the argument for $t=1$, as the case $t=0$ follows analogously.

For $t=1$, note that the definition of $\Theta_t$ simplifies to
\[\Theta_1 = \E_{\oplus}\left[\gamma_{\mu_1(X),Y(1)}\left(\frac{T}{e(X)}\right)\right].\]
since $Y = Y(1)$ when $T=1$. Define the pseudo-outcome
\[Z_1=\gamma_{\mu_1(X),Y(1)}\left(\frac{T}{e(X)}\right).\]

We first analyze the conditional \f mean of $Z_1$ given $X$. By Assumption~3.2(ii), which characterizes the behavior of \f means under geodesic interpolation, together with Assumption~3.1(ii) (unconfoundedness), which ensures that $Y(1)$ and $T$ are conditionally independent given $X$, we obtain
\begin{align*}
\E_{\oplus}[Z_1|X]
&= \E_{\oplus}\left[\gamma_{\mu_1(X),Y(1)}\left(\frac{T}{e(X)}\right)\middle| X\right] \\
&= \gamma_{\mu_1(X),\E_{\oplus}[Y(1)|X]}\left(\E\!\left[\frac{T}{e(X)} \middle| X\right]\right).
\end{align*}

We now evaluate each component in the above expression. By the definition of the conditional \f mean, we have $\E_{\oplus}[Y(1)|X] = m_1(X)$. Furthermore, by the definition of the propensity score, it follows that
\[\E\left[\frac{T}{e(X)} \middle| X\right] = \frac{p(X)}{e(X)}.\]
Combining these, we obtain
\[\E_{\oplus}[Z_1|X] = \gamma_{\mu_1(X),m_1(X)}\left(\frac{p(X)}{e(X)}\right).\]

Next, we take the marginal \f mean. By Assumption~3.2(i), which serves as the analogue of the law of iterated expectations, we have
\[\Theta_1 
= \E_{\oplus}[Z_1] 
= \E_{\oplus}\big[\E_{\oplus}[Z_1|X]\big]
= \E_{\oplus}\left[\gamma_{\mu_1(X),\, m_1(X)}\left(\frac{p(X)}{e(X)}\right)\right].\]

We now consider the two cases corresponding to double robustness.

\medskip
\noindent\textbf{Case 1:} $e(\cdot) = p(\cdot)$. In this case, $p(X)/e(X) = 1$, and therefore
\begin{align*}
\Theta_1 
&= \E_{\oplus}\left[\gamma_{\mu_1(X),\, m_1(X)}(1)\right] \\
&= \E_{\oplus}[m_1(X)].
\end{align*}
Applying Assumption~3.2(i) with $Y(1)$ yields
\[
\E_{\oplus}[m_1(X)] = \E_{\oplus}[Y(1)],
\]
and hence $\Theta_1 = \E_{\oplus}[Y(1)]$.

\medskip
\noindent\textbf{Case 2:} $\mu_1(\cdot) = m_1(\cdot)$. In this case, for every $X$,
\[
\gamma_{\mu_1(X),\, m_1(X)}\left(\frac{p(X)}{e(X)}\right)
= \gamma_{m_1(X),\, m_1(X)}\left(\frac{p(X)}{e(X)}\right)
= m_1(X),
\]
since a geodesic connecting a point to itself is constant. Therefore,
\[
\Theta_1 = \E_{\oplus}[m_1(X)] = \E_{\oplus}[Y(1)],
\]
where the last equality again follows from Assumption~3.2(i).

\medskip
In both cases, we conclude that $\Theta_1 = \E_{\oplus}[Y(1)]$. The argument for $t=0$ proceeds in the same manner, yielding $\Theta_0 = \E_{\oplus}[Y(0)]$. This completes the proof.
\end{proof}

\section{Proofs for Section 4}

For any positive sequences $a_n, b_n$, we write $a_n \lesssim b_n$ if there is a constant $C >0$ independent of $n$ such that $a_n \leq Cb_n$ for all $n$. 

\begin{proof}[Proof of Theorem 4.1(i)]
Define $\mathrm{diam}(\mathcal{M})= \sup_{\mu_1,\mu_2 \in \mathcal{M}}d(\mu_1,\mu_2)$. By Corollary 3.2.3 in \citet{vaWe23}, it is sufficient to show
\begin{align*}
\sup_{\nu \in \mathcal{M}}\left|Q_{n,t}(\nu; \hat{e}, \hat{\mu}_t) - Q_t(\nu; e, \mu_t)\right| \stackrel{\Prob}{\to} 0,\quad t \in \{0,1\},
\end{align*}
where 
\begin{align*}
Q_t(\nu;e, \mu_t)&=\E\left[d^2\left(\nu,\gamma_{\mu_t(X),Y}\left(\frac{tT}{e(X)} + \frac{(1-t)(1-T)}{1-e(X)}\right)\right)\right],\\
Q_{n,t}(\nu; \hat{e}, \hat{\mu}_t) &= \frac{1}{n}\sum_{i=1}^{n}d^2\left(\nu, \gamma_{\hat{\mu}_t(X_i),Y_i}\left(\frac{tT_i}{\hat{e}(X_i)} + \frac{(1-t)(1-T_i)}{1-\hat{e}(X_i)}\right)\right).
\end{align*}
Now we show $\sup_{\nu \in \mathcal{M}}\left|Q_{n,1}(\nu; \hat{e}, \hat{\mu}_1) - Q_1(\nu; e, \mu_1)\right| \stackrel{\Prob}{\to} 0$;  the proof for $t=0$ is similar. For this, we show that $Q_{n,1}(\nu; \hat{e}, \hat{\mu}_1)$ converges weakly to $Q_1(\nu; e, \mu_1)$ in $\ell^\infty(\mathcal{M})$ and then apply Theorem 1.3.6 in \citet{vaWe23}. From Theorem 1.5.4 in \citet{vaWe23}, this weak convergence follows by showing that 
\begin{itemize}
\item[(i)] $Q_{n,1}(\nu; \hat{e}, \hat{\mu}_1) \stackrel{\Prob}{\to} Q_1(\nu; e, \mu_1)$ for each $\nu \in \mathcal{M}$ as $n \to \infty$.
\item[(ii)] $Q_{n,1}(\nu; \hat{e}, \hat{\mu}_1)$ is asymptotically equicontinuous  in probability, i.e., for each $\varepsilon,\eta>0$, there exists $\delta>0$ such that
\begin{align*}
\limsup_{n \to \infty} \Prob \left(\sup_{d(\nu_1,\nu_2)<\delta}|Q_{n,1}(\nu_1; \hat{e}, \hat{\mu}_1) - Q_{n,1}(\nu_2; \hat{e}, \hat{\mu}_1)|>\varepsilon\right) < \eta.
\end{align*}
\end{itemize}
For an arbitrary  $\nu \in \mathcal{M}$, for (i) we have 
\begin{align*}
&|Q_{n,1}(\nu; \hat{e}, \hat{\mu}_1) - Q_1(\nu; e, \mu_1)|\\
&\leq |Q_{n,1}(\nu; \hat{e}, \hat{\mu}_1) - Q_{n,1}(\nu; e, \hat{\mu}_1)| + |Q_{n,1}(\nu; e, \hat{\mu}_1) - Q_{n,1}(\nu; e, \mu_1)|\\ 
&\quad + |Q_{n,1}(\nu; e, \mu_1) - Q_1(\nu; e, \mu_1)|\\
&=: Q_1 + Q_2 + Q_3
\end{align*}
and observe that 
\begin{align}\label{ineq:Q1}
Q_1 &\leq \frac{1}{n}\sum_{i=1}^{n}\left|d^2\left(\nu, \gamma_{\hat{\mu}_1(X_i),Y_i}\left(\frac{T_i}{\hat{e}(X_i)}\right)\right) - d^2\left(\nu, \gamma_{\hat{\mu}_1(X_i),Y_i}\left(\frac{T_i}{e(X_i)}\right)\right)\right| \nonumber \\
&\leq \frac{2\mathrm{diam}(\mathcal{M})}{n}\sum_{i=1}^{n}d\left( \gamma_{\hat{\mu}_1(X_i),Y_i}\left(\frac{T_i}{\hat{e}(X_i)}\right), \gamma_{\hat{\mu}_1(X_i),Y_i}\left(\frac{T_i}{e(X_i)}\right)\right) \nonumber \\
&\lesssim \frac{2\mathrm{diam}(\mathcal{M})}{n}\sum_{i=1}^{n}\left|\frac{T_i}{\hat{e}(X_i)} - \frac{T_i}{e(X_i)}\right|d(\hat{\mu}_1(X_i),Y_i) \nonumber \\
&\leq \frac{2\mathrm{diam}(\mathcal{M})^2}{\eta_0^2}\sup_{x \in \mathcal{X}}|\hat{e}(x) - e(x)| = O_{\Prob}(\varrho_n), 
\end{align}
\begin{align}\label{ineq:Q2}
Q_2 &\leq \frac{1}{n}\sum_{i=1}^{n}\left|d^2\left(\nu, \gamma_{\hat{\mu}_1(X_i),Y_i}\left(\frac{T_i}{e(X_i)}\right)\right) - d^2\left(\nu, \gamma_{\mu_1(X_i),Y_i}\left(\frac{T_i}{e(X_i)}\right)\right)\right| \nonumber \\
&\leq \frac{2\mathrm{diam}(\mathcal{M})}{n}\sum_{i=1}^{n}d\left( \gamma_{\hat{\mu}_1(X_i),Y_i}\left(\frac{T_i}{e(X_i)}\right), \gamma_{\mu_1(X_i),Y_i}\left(\frac{T_i}{e(X_i)}\right)\right) \nonumber \\
&\leq 2C_0\mathrm{diam}(\mathcal{M})\sup_{x \in \mathcal{X}}d(\hat{\mu}_1(x),\mu_1(x)) = O_{\Prob}(r_n), 
\end{align}
\begin{align}\label{ineq:Q3}
\E[Q_3^2] &\leq \frac{1}{n}\E\left[d^4\left(\nu, \gamma_{\mu_1(X),Y}\left(\frac{T}{e(X)}\right)\right)\right] \leq \frac{\mathrm{diam}^4(\mathcal{M})}{n} = O(n^{-1}). 
\end{align}
Combining (\ref{ineq:Q1}), (\ref{ineq:Q2}), and (\ref{ineq:Q3}), we obtain (i).

Pick any $\nu_1,\nu_2 \in \mathcal{M}$. For (ii), similarly to the argument  leading to  (\ref{ineq:Q1}), 
\begin{align*}
|Q_{n,1}(\nu_1; \hat{e}, \hat{\mu}_1) - Q_{n,1}(\nu_2; \hat{e}, \hat{\mu}_1)| &\leq 2\mathrm{diam}(\mathcal{M})d(\nu_1,\nu_2)
\end{align*}
and this implies $\sup_{d(\nu_1,\nu_2)<\delta}|Q_{n,1}(\nu_1; \hat{e}, \hat{\mu}_1) - Q_{n,1}(\nu_2; \hat{e}, \hat{\mu}_1)| = O_{\Prob}(\delta)$ so that we obtain (ii), which completes the proof.
\end{proof}

\begin{proof}[Proof of Theorem 4.1(ii)]

Note that one can show $\max_{1 \leq k \leq K}d(\hat{\Theta}_{t,k}^{(\mathrm{DR})},\E_\oplus[Y(t)]) = o_{\Prob}(1)$ by applying similar  arguments as in the proof of Theorem 4.1(i). From the definition of $\hat{\Theta}_t^{(\mathrm{CF})}$, 
\begin{align*}
\sum_{k=1}^K\frac{n_k}{n}d^2(\hat{\Theta}_t^{(\mathrm{CF})},\hat{\Theta}_{t,k}^{(\mathrm{DR})}) &\leq \sum_{k=1}^K\frac{n_k}{n}d^2(\E_\oplus[Y(t)],\hat{\Theta}_{t,k}^{(\mathrm{DR})}) \leq \max_{1 \leq k \leq K}d^2(\E_\oplus[Y(t)],\hat{\Theta}_{t,k}^{(\mathrm{DR})}),
\end{align*}
which implies that \[d(\hat{\Theta}_t^{(\mathrm{CF})},\hat{\Theta}_{t,\bar{k}}^{(\mathrm{DR})}) \leq \max_{1 \leq k \leq K}d(\E_\oplus[Y(t)],\hat{\Theta}_{t,k}^{(\mathrm{DR})})\]
for some $\bar{k} \in \{1,\dots,K\}$. Then
\begin{align}
d(\hat{\Theta}_t^{(\mathrm{CF})},\E_\oplus[Y(t)]) &\leq d(\hat{\Theta}_t^{(\mathrm{CF})},\hat{\Theta}_{t,\bar{k}}^{(\mathrm{DR})}) + d(\hat{\Theta}_{t,\bar{k}}^{(\mathrm{DR})},\E_\oplus[Y(t)])\nonumber\\&\leq 2\max_{1 \leq k \leq K}d(\hat{\Theta}_{t,k}^{(\mathrm{DR})}, \E_\oplus[Y(t)])\nonumber\\&= o_{\Prob}(1). \label{ineq:CF-conv}  
\end{align}
\end{proof}

\begin{proof}[Proof of Theorem 4.2(i)]

Now we show (4.6) when $t = 1$. The case  $t=0$ is analogous.  We adapt the proof of Theorem 3.2.5 in \citet{vaWe23}.  Define 
\[
h_{\nu, \tilde{e}, \tilde{\mu}}(x,y,t) = d^2\left(\nu;\gamma_{\tilde{\mu}(x),y}\left(\frac{t}{\tilde{e}(x)}\right)\right) - d^2\left(\Theta_1^{(\mathrm{DR})};\gamma_{\mu_1(x),y}\left(\frac{t}{e(x)}\right)\right).
\]
For some $\delta_0>0$ and $\delta'_1>0$, consider the class of functions 
\[
\mathcal{H}_{\delta_,\delta'_1} := \left\{h_{\nu, \tilde{e}, \tilde{\mu}}(\cdot): d(\nu,\Theta_1^{(\mathrm{DR})})\leq \delta, d_\infty(\tilde{\mu},\mu_1) \leq \delta'_1, \sup_{x\in\mathcal{X}}|\tilde{e}(x)-e(x)|\leq \delta'_1\right\}
\]
for any $\delta\leq\delta_0$. An envelope function for $\mathcal{H}_{\delta,\delta'_1}$ is $2\delta\mathrm{diam}(\mathcal{M})$. 
Observe that 
\begin{align*}
&|h_{\nu', \tilde{e}', \tilde{\mu}'}(x,y,t) - h_{\nu, \tilde{e}, \tilde{\mu}}(x,y,t)|\\
&\leq 2\mathrm{diam}(\mathcal{M})\left(1 + C_0 + \frac{C_e\mathrm{diam}(\mathcal{M})}{\eta_0^2}\right)\left\{d(\nu',\nu) + |\tilde{e}' - e| +  d_\infty(\tilde{\mu}',\tilde{\mu})\right\}. 
\end{align*}
Then for $\varepsilon>0$, following Theorem 2.7.17 of \citet{vaWe23}, the $\delta \varepsilon$ bracketing number of $\mathcal{H}_{\delta,\delta'_1}$ is bounded by a multiple of 
\[N(c_1\delta\varepsilon, B_\delta(\Theta_1^{(\mathrm{DR})}),d)N(c_2\delta\varepsilon, B_{\delta'_1}(\mu_1),d_\infty)(\delta \varepsilon)^{-c_3},\]
where $c_1,c_2$, and $c_3$ are constants depending only on $p$, $\mathrm{diam}(\mathcal{M})$, $C_e$, and  $C_0$. 

Assumption 4.4(i) and Theorem 2.14.16 of \citet{vaWe23} imply that, for small enough $\delta>0$,
\begin{align}
\E\Bigg[\sup_{\substack{d(\nu,\Theta_1)\leq \delta, \sup_{x\in\mathcal{X}}|\tilde{e}(x)-e(x)|\leq \delta'_1 \\ 
d_\infty(\tilde{\mu},\mu_1)\leq \delta'_1}} 
\Big| & Q_{n,1}(\nu; \tilde{e}, \tilde{\mu}) - Q_{n,1}(\Theta_1^{(\mathrm{DR})}; \tilde{e}, \tilde{\mu}) \nonumber \\
& - Q_1(\nu; \tilde{e}, \tilde{\mu}) + Q_1(\Theta_1^{(\mathrm{DR})}; \tilde{e}, \tilde{\mu}) \Big| \Bigg]\nonumber\\
&\lesssim \frac{2\delta \mathrm{diam}(\mathcal{M})}{\sqrt{n}}\left(J_1(\delta) + J_{\mu_1}(\delta) + \int_0^1 \sqrt{-\log (\delta \varepsilon)}d\varepsilon \right) \nonumber\\&\lesssim \frac{\delta (1 + \delta^{-\vartheta})}{\sqrt{n}} \lesssim \frac{\delta^{1-\vartheta}}{\sqrt{n}}. \label{ineq:DV-Ass(B2)}
\end{align}

For any $\epsilon \in (0,1)$, set $q_n^{-1} = \max\{n^{-\frac{\beta}{4(\beta-1+\vartheta)}}, (\varrho_n + r_n)^{\frac{\beta 1-\epsilon}{2(\beta-1)}} \}$ and 
\[
S_{j,n} = \left\{\nu: 2^{j-1} < q_nd(\nu,\Theta_1^{(\mathrm{DR})})^{\beta/2} \leq 2^j\right\}.
\]
Choose $\eta>0$ to satisfy Assumption 4.4(ii) and also small enough so that Assumption 4.4(i) holds for all $\delta <\eta$ and set $\tilde{\eta} = \eta^{\beta/2}$. For any $L, \delta_1, K>0$, 
\begin{align}\label{ineq:GDR-rate}
&\Prob\left(q_n d(\hat{\Theta}_1^{(\mathrm{DR})},\Theta_1^{(\mathrm{DR})})^{\beta/2} > 2^L\right) \nonumber \\ 
&\leq \Prob(A_n^c) + \Prob(2d(\hat{\Theta}_1^{(\mathrm{DR})},\Theta_1^{(\mathrm{DR})}) \geq \eta) \nonumber \\
&\quad + \sum_{\substack{j \geq L \\ 2^j \leq q_n \tilde{\eta}}}\!\!\!\Prob \!\left(\!\left\{\sup_{\nu \in S_{j,n}}\!\!\!\left\{Q_{n,1}(\nu;\hat{e}, \hat{\mu}_1) - Q_{n,1}(\Theta_1^{(\mathrm{DR})};\hat{e}, \hat{\mu}_1)\right\}\! \leq Kq_n^{-2}\right\}\! \cap A_n \!\right),
\end{align}
where $A_n = \{d_\infty(\hat{\mu}_1,\mu_1) \leq \delta_1 q_n^{-\frac{2(\beta-1)}{\beta}}, \|\hat{\varphi} - \varphi_*\| \leq \delta_1 q_n^{-\frac{2(\beta-1)}{\beta}}\}$.

Note that for any $\varepsilon>0$, there exist an integer $n_\varepsilon$ such that 
\[\Prob(A_n^c) < \varepsilon/3,\;\Prob(2d(\hat{\Theta}_1^{(\mathrm{DR})},\Theta_1^{(\mathrm{DR})})\geq \eta) < \varepsilon/3,\text{ and }\delta_1q_n^{-\frac{2(\beta-1)}{\beta}}\leq \delta'_1.\] 
Define $A_{n,\varepsilon}$ as $A_n$ with $n \geq n_\varepsilon$. Then, on $A_{n,\varepsilon}$, for each fixed $j$ such that $2^j \leq q_n \tilde{\eta}$, one has for all $\nu \in S_{j,n}$, 
\begin{align*}
&Q_{n,1}(\nu;\hat{e}, \hat{\mu}_1) - Q_{n,1}(\Theta_1^{(\mathrm{DR})};\hat{e}, \hat{\mu}_1) \\
&\geq Q_1(\nu;\hat{e}, \hat{\mu}_1) - Q_1(\Theta_1^{(\mathrm{DR})};\hat{e}, \hat{\mu}_1)-\sup_{\substack{d(\nu,\Theta_1^{(\mathrm{DR})})^{\beta/2} \leq \frac{2^j}{q_n}}} 
\Big| Q_{n,1}(\nu;\hat{e}, \hat{\mu}_1) - Q_{n,1}(\Theta_1^{(\mathrm{DR})};\hat{e}, \hat{\mu}_1) \\
&\quad - Q_1(\nu;\hat{e}, \hat{\mu}_1) + Q_1(\Theta_1^{(\mathrm{DR})};\hat{e}, \hat{\mu}_1) \Big|\\
&\geq \frac{2^{2j-2}}{q_n^2}\left(\frac{C}{2} - C_1\delta_1^{\frac{\beta}{2(\beta-1)}}2^{2-j}\right)-\sup_{\substack{d(\nu,\Theta_1^{(\mathrm{DR})})^{\beta/2} \leq \frac{2^j}{q_n}}} 
\Big|  Q_{n,1}(\nu;\hat{e}, \hat{\mu}_1) - Q_{n,1}(\Theta_1^{(\mathrm{DR})};\hat{e}, \hat{\mu}_1) \\
&\quad - Q_1(\nu;\hat{e}, \hat{\mu}_1) + Q_1(\Theta_1^{(\mathrm{DR})};\hat{e}, \hat{\mu}_1) \Big|.
\end{align*}
Then for large $j$ such that $C/2 - C_1\delta_1^{\frac{\beta}{2(\beta-1)}} 2^{2-j}
 - K 2^{2-2j} \geq C/4$, we have
\begin{align}\label{ineq:GDR-rate-2}
&\Prob\left(\left\{\sup_{\nu \in S_{j,n}}\left\{Q_{n,1}(\nu;\hat{e}, \hat{\mu}_1) - Q_{n,1}(\Theta_1^{(\mathrm{DR})};\hat{e}, \hat{\mu}_1)\right\} \leq Kq_n^{-2}\right\} \cap A_{n,\varepsilon}\right) \nonumber \\
&\leq \Prob \Bigg( \sup_{\substack{d(\nu,\Theta_1^{(\mathrm{DR})})^{\beta/2} \leq \frac{2^j}{q_n}, \\
d_\infty(\tilde{\mu},\mu_1) \leq \delta_1 q_n^{-2(\beta-1)/\beta}, \\
\sup_{x\in\mathcal{X}}|\tilde{e}(x)-e(x)| \leq \delta_1 q_n^{-2(\beta-1)/\beta}}} \Bigg| 
Q_{n,1}(\nu; \tilde{e}, \tilde{\mu}) - Q_{n,1}(\Theta_1^{(\mathrm{DR})}; \tilde{e}, \tilde{\mu})\nonumber \\
& \quad\quad\quad - Q_1(\nu; \tilde{e}, \tilde{\mu}) + Q_1(\Theta_1^{(\mathrm{DR})}; \tilde{e}, \tilde{\mu}) 
\Bigg| \geq \frac{C 2^{2j-2}}{4q_n^2} \Bigg).
\end{align}

For each $j$, in the sum on the right-hand side of (\ref{ineq:GDR-rate}), we have $d(\nu, \Theta_1^{(\mathrm{DR})}) \leq (2^j/q_n)^{2/\beta} \leq \eta$.  Using (\ref{ineq:GDR-rate-2}) and the Markov inequality,
\begin{align*}
&\Prob\left(q_n d(\hat{\Theta}_1^{(\mathrm{DR})},\Theta_1^{(\mathrm{DR})})^{\beta/2} > 2^L\right)\\
&\leq\frac{4}{C} \sum_{\substack{j \geq L \\ 2^j \leq q_n \tilde{\eta}}} 
\frac{2^{-2j}}{q_n^{-2}} \E \Bigg[ 1_{A_{n,\varepsilon}} 
\sup_{\nu \in S_{j,n}} \Big| Q_{n,1}(\nu; \tilde{e}, \tilde{\mu}) - Q_{n,1}(\Theta_1^{(\mathrm{DR})}; \tilde{e}, \tilde{\mu}) \\
&\quad - Q_1(\nu; \tilde{e}, \tilde{\mu}) + Q_1(\Theta_1^{(\mathrm{DR})}; \tilde{e}, \tilde{\mu}) \Big| \Bigg] 
+ \frac{2\varepsilon}{3}\\
&\lesssim \sum_{\substack{j \geq L \\ 2^j \leq q_n \tilde{\eta}}}\frac{2^{-2j + \frac{2j}{\beta}(1-\vartheta)}}{q_n^{-2 + \frac{2}{\beta}(1-\vartheta)}\sqrt{n}} + \frac{2\varepsilon}{3} \\&\leq \sum_{j \geq L}\left(\frac{1}{4^{\frac{\beta - 1 + \vartheta}{\beta}}}\right)^j + \frac{2\varepsilon}{3}.
\end{align*}
Since $\beta>1$, the last series converges and hence the first term on the far right-hand side can be made smaller than $\varepsilon>0$ for large $L$. Therefore, we obtain the desired result, $$d(\hat{\Theta}_1^{(\mathrm{DR})},\E_\oplus[Y(1)]) = O_{\Prob}(q_n^{-\frac{2}{\beta}}) 
 =O_{\Prob}(n^{-\frac{1}{2(\beta - 1 +\vartheta)}} + (\varrho_n + r_n)^{\frac{1-\epsilon}{(\beta-1)}}).$$
\end{proof}

\begin{proof}[Proof of Theorem 4.2(ii)]
Applying almost the same argument as in the proof of Theorem 4.2(i), one can show 
\[
\max_{1 \leq k \leq K}d(\hat{\Theta}_{t,k}^{(\mathrm{DR})},\E_\oplus[Y(t)]) = O_{\Prob}\left(n^{-\frac{1}{2(\beta - 1 +\vartheta)}} + (\varrho_n + r_n)^{\frac{1-\epsilon}{(\beta-1)}}\right).
\]
Then (\ref{ineq:CF-conv}) yields the result.
\end{proof}

\section{Proofs for Section 5}
\begin{proof}[Proof of Proposition 5.1]
We only prove the result for $\hat{\Theta}_{t'}^{(\mathrm{DR})}$ since the proof for $\hat{\Theta}_{t'}^{(\mathrm{CF})}$ is almost the same. Observe that
\begin{align*}
&T_{m,n}^{(t)}(\hat{\Theta}_{t'}^{(\mathrm{DR})}) - \tilde{T}_m^{(t)}(\Eo[Y(t')])\\
&= T_{m,n}^{(t)}(\hat{\Theta}_{t'}^{(\mathrm{DR})})- T_{m,n}^{(t)}(\Eo[Y(t')]) + T_{m,n}^{(t)}(\Eo[Y(t')]) - \tilde{T}_m^{(t)}(\Eo[Y(t')])\\
&= \frac{1}{\sqrt{m}}\sum_{i=1}^m\left\{d^2\left(\hat{\Theta}_{t'}^{(\mathrm{DR})}, \gamma_{\hat{\mu}_t(X_i), Y_i}\left(\frac{tT_i}{\hat{e}(X_i)} + \frac{(1-t)(1 - T_i)}{1 - \hat{e}(X_i)}\right)\right) \right.\\
&\qquad \qquad \left.-d^2\left(\Eo[Y(t')], \gamma_{\mu_t(X_i), Y_i}\left(\frac{tT_i}{e(X_i)} + \frac{(1-t)(1 - T_i)}{1 - e(X_i)}\right)\right)\right\}\\
&\quad +\frac{1}{\sqrt{m}}\sum_{i=1}^m\left\{d^2\left(\Eo[Y(t')], \gamma_{\hat{\mu}_t(X_i), Y_i}\left(\frac{tT_i}{\hat{e}(X_i)} + \frac{(1-t)(1 - T_i)}{1 - \hat{e}(X_i)}\right)\right) \right.\\
&\qquad \qquad \left.-d^2\left(\Eo[Y(t')], \gamma_{\mu_t(X_i), Y_i}\left(\frac{tT_i}{e(X_i)} + \frac{(1-t)(1 - T_i)}{1 - e(X_i)}\right)\right)\right\}\\
&\quad + \sqrt{m}\left\{Q_{n,t}(\hat{\Theta}_t^{(\mathrm{DR})};\hat{e}, \hat{\mu}_t) - Q_t(\Eo[Y(t)];e, \mu_t)\right\}\\
&=: I_1^{(t)} + I_2^{(t)} + \sqrt{m}I_3^{(t)}.
\end{align*}
Applying a similar argument in the proof of Theorem 4.1 (i), we have
\begin{align*}
|I_1^{(t)}| &=  O_{\Prob}(\sqrt{m}\{n^{-\frac{1}{2(\beta - 1 +\vartheta)}} + (\varrho_n + r_n)^{\frac{1-\epsilon}{\beta-1}}+\varrho_n + r_n\}) = o_{\Prob}(1),\\
|I_2^{(t)}| &=O_{\Prob}(\sqrt{m}\{\varrho_n + r_n\})=o_{\Prob}(1),\\
\sqrt{m}|I_3^{(t)}| &=  O_{\Prob}(\sqrt{m}\{n^{-\frac{1}{2(\beta - 1 +\vartheta)}} + (\varrho_n + r_n)^{\frac{1-\epsilon}{\beta-1}} + \varrho_n + r_n + n^{-1/2}\}) = o_{\Prob}(1).
\end{align*}
Therefore, we have $T_{m,n}^{(t)}(\hat{\Theta}_{t'}^{(\mathrm{DR})}) - \tilde{T}_m^{(t)}(\Eo[Y(t')])=  o_{\Prob}(1)$ as $n \to \infty$. The conclusion follows immediately from the asymptotic properties of $\tilde{T}_m^{(t)}(\Eo[Y(t')])$. 
\end{proof}

\begin{proof}[Proof of Proposition 5.2]
As in Proposition 5.1, we present the result only for $\hat{\Theta}_t^{(\mathrm{DR})}$. Observe that under the assumptions of Proposition 5.1, one can see
\begin{align*}
&T_{m,n}^{(t,*)}(\hat{\Theta}_t^{(\mathrm{DR})})\\ 
&= \frac{1}{\sqrt{m}}\sum_{i=1}^m \xi_i\left\{d^2\left(\Eo[Y(t)], \gamma_{\mu_t(X_i), Y_i}\left(\frac{tT_i}{e(X_i)} + \frac{(1-t)(1 - T_i)}{1 - e(X_i)}\right)\right)\right.\\
&\qquad \qquad \qquad \left.-Q_{m,t}(\Eo[Y(t)];e,\mu_t)\right\} + o_{\Prob}(1).
\end{align*}
This and the conditional multiplier central limit theorem (e.g., Theorem 2.1, \citet{pauly:11}) yield $T_{m,n}^{(t,*)}(\hat{\Theta}_t^{(\mathrm{DR})}) \stackrel{\Prob_\xi}{\leadsto} N(0,V^{(t)}(\Eo[Y(t)]))$ where $\stackrel{\Prob_\xi}{\leadsto}$ denote conditional weak convergence given the data; see Section~1.13 of \citet{vaWe23}. By applying the conditional continuous mapping theorem (e.g., Theorem 10.8, \citet{koso:08}), we have $|T_{m,n}^{(t,*)}(\hat{\Theta}_t^{(\mathrm{DR})})| \stackrel{\Prob_\xi}{\leadsto} |N(0,V^{(t)}(\Eo[Y(t)]))|$ and this yields the desired result.
\end{proof}

\section{Brain functional connectivity networks and Alzheimer's disease}\label{subsec:Alzheimer}
Resting-state functional Magnetic Resonance Imaging (fMRI) methodology allows for the study of brain activation and the identification of brain regions exhibiting similar activity during the resting state \citep{ferr:13, sala:15}. In resting-state fMRI, a time series of Blood Oxygen Level Dependent (BOLD) signals are obtained for different regions of interest (ROI). The temporal coherence between pairwise ROIs is typically measured by temporal Pearson correlation coefficients (PCC) of the fMRI time series, forming an $m\times m$ correlation matrix when considering $m$ distinct ROIs. Alzheimer's disease has been found to be associated with anomalies in the functional integration and segregation of ROIs \citep{zhan:10, damo:12}.

Our study utilized data obtained from the Alzheimer's Disease Neuroimaging Initiative (ADNI) database (\url{adni.loni.usc.edu}). The dataset in our analysis consists of 372 cognitively normal (CN), and 145 Alzheimer's (AD) subjects. We used the automated anatomical labeling (AAL) atlas \citep{tzou:02} to parcellate the whole brain into 90 ROIs, with 45 ROIs in each hemisphere, so that $m=90$. Details about the ROIs can be found in Table~\ref{tab:aal}. The preprocessing of the BOLD signals was implemented by adopting standard procedures of slice-timing correction, head motion correction, and other standard steps. A PCC matrix was calculated for all time series pairs for each subject. These matrices were then converted into simple, undirected, weighted networks by setting diagonal entries to 0 and thresholding the absolute values of the remaining correlations. Density-based thresholding was employed, where the threshold varied from subject to subject to achieve a desired, fixed connection density. Specifically, we retained the 15\% strongest connections. 

In our analysis, Alzheimer's disease, coded as a binary variable (0 for cognitively normal, 1 for diagnosed with Alzheimer's disease), serves as exposure of interest. The outcomes are brain functional connectivity networks, where two confounders are the age and gender of each subject. To quantify the impact of Alzheimer's disease on brain functional connectivity, we computed the entry-wise differences between the adjacency matrices of the two mean potential networks for the DR estimator, as depicted in Figure~\ref{fig:adniate}. Our observations reveal that Alzheimer's disease leads to notable effects on various regions of the human brain. Specifically, the central region, orbital surface in the frontal lobe, temporal lobe, occipital lobe, and subcortical areas exhibit more pronounced influences, displaying clustered patterns consistent with findings in previous literature \citep{plan:22}. Notably, damage to the frontal lobe is associated with issues in judgment, intelligence, and behavior, while damage to the temporal lobe can impact memory.

\begin{figure}[tb]
    \centering
    \includegraphics[width=0.8\linewidth]{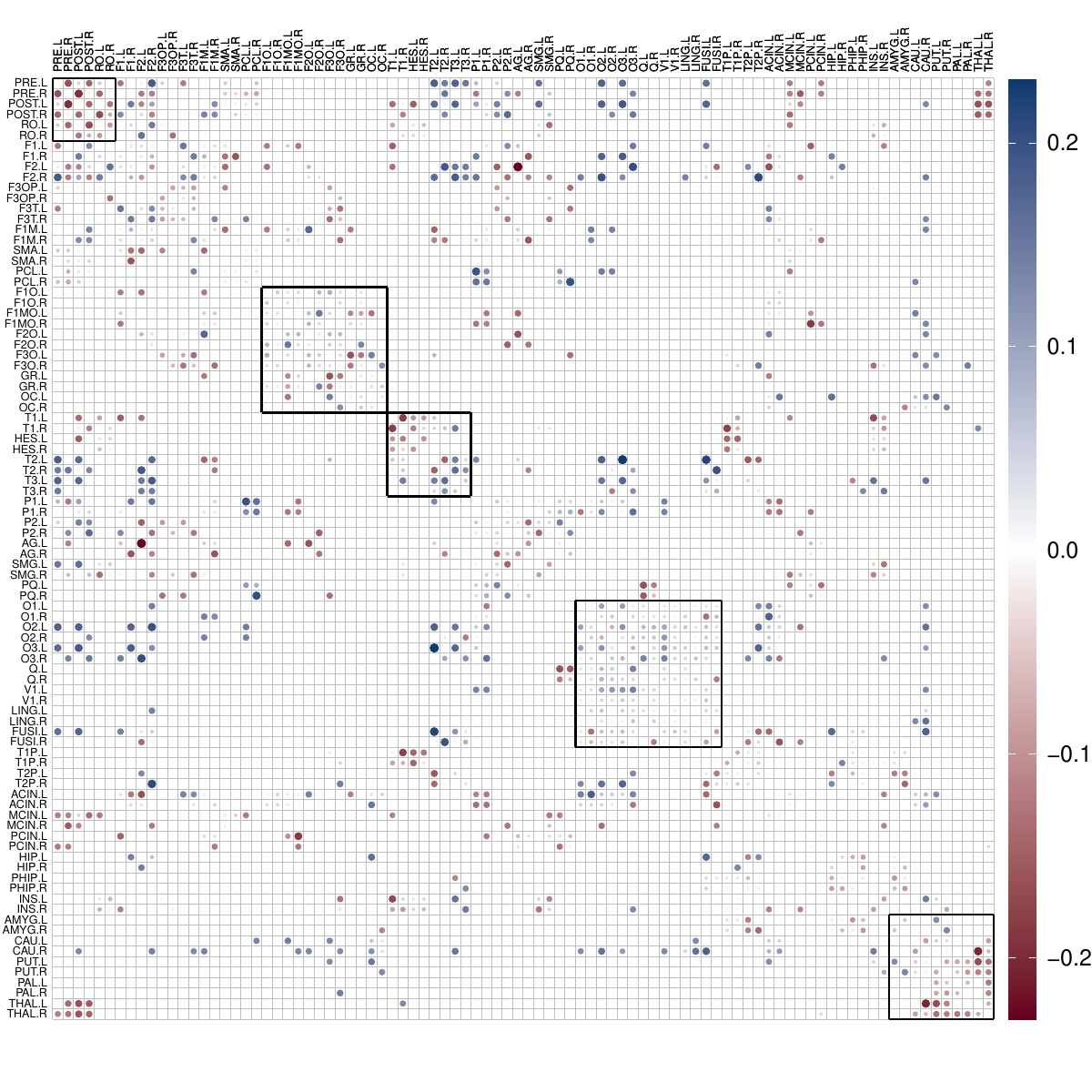}
    \caption{Entry-wise differences between the adjacency matrices corresponding to the mean potential networks for cognitively normal and Alzheimer's disease subjects, estimated using the doubly robust procedure. The five regions circled by squares, from top left to bottom right, correspond to the central region, orbital surface in the frontal lobe, temporal lobe, occipital lobe, and subcortical areas, respectively.}
\label{fig:adniate}
\end{figure}

To visualize the GATE in this setting, we focus on a $6\times 6$ submatrix corresponding to central brain regions, which exhibit some of the strongest disease-related changes. Figure~\ref{fig:adnigds} displays the intrinsic geodesic connecting the DR-estimated mean potential networks for cognitively normal and Alzheimer's disease subjects within this subnetwork. Each panel shows an intermediate network along the geodesic at $t \in \{0, 0.2, 0.4, 0.6, 0.8, 1\}$, where $t=0$ corresponds to the mean potential network for cognitively normal subjects and $t=1$ corresponds to that for Alzheimer's disease. The geodesic provides an interpretable depiction of the causal effect, revealing a smooth and coordinated weakening of connections in these central regions as one moves from $t=0$ to $t=1$. This pattern reflects a gradual loss of functional integration, consistent with disruptions of executive and memory-related pathways documented in the Alzheimer's literature. Standard approaches based on elementwise differences or summary network measures capture only isolated aspects of this effect, whereas the geodesic highlights how multiple connections degrade jointly and provides a coherent representation of how the overall network structure changes under disease progression.

\begin{figure}[tb]
    \centering
	\includegraphics[width=0.8\linewidth]{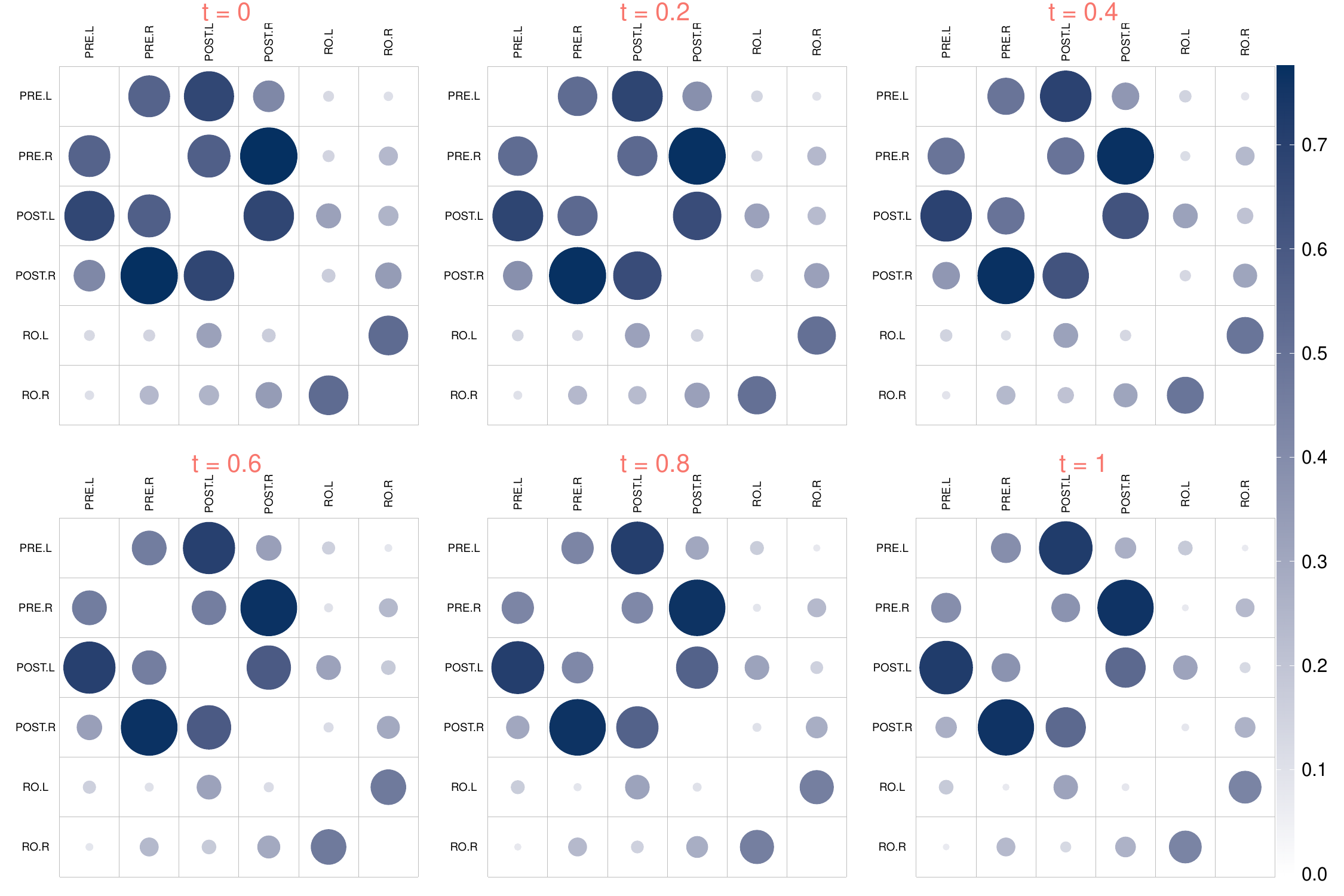}
    \caption{GATE for the ADNI functional connectivity example. Shown are six intermediate networks along the intrinsic geodesic between the DR-estimated mean potential networks for cognitively normal ($t=0$) and Alzheimer's disease ($t=1$) subjects, restricted to a $6\times6$ submatrix of central brain regions. The geodesic displays the coordinated weakening of connectivity associated with Alzheimer's disease.}
    \label{fig:adnigds}
\end{figure}

To assess structural changes in brain functional connectivity networks following Alzheimer's disease, Table~\ref{tab:adninm} provides a summary of commonly adopted network measures that quantify functional integration and segregation, with definitions provided in \citet{rubi:10}. Our observations indicate a decrease in all functional integration and segregation measures of brain functional connectivity networks, with local efficiency experiencing the most substantial reduction. Functional segregation, denoting the brain's ability for specialized processing within densely interconnected groups of regions, and functional integration, representing the capacity to rapidly combine specialized information from distributed brain regions, are both compromised. This finding implies that Alzheimer's disease leads to lower global and local efficiency in human brains, disrupting crucial functions such as judgment and memory \citep{ahma:23}.

\begin{table}[tb]
	\caption{Network measures of mean potential networks for the doubly robust estimator}
	\centering
	\begin{tabular}{llcc}
		\hline
            & &Cognitively normal & Alzheimer's disease\\\hline
		\multirow{2}{*}{Functional integration}&Global efficiency & 3.117 & 2.914\\
            &Characteristic path length & 10.319 & 9.593\\
            \multirow{2}{*}{Functional segregation}&Local efficiency & 3.599 & 3.035\\
            &Clustering coefficient & 0.154 & 0.142\\
            \hline
	\end{tabular}
	\label{tab:adninm}
\end{table}
\newpage
{\small
\begin{longtable}{|lll|}
\hline
ROI & Lobe & Label \\ \hline
Central region &  &  \\ 
\hspace{0.5cm}1 & Precentral gyrus & PRE \\ 
\hspace{0.5cm}2 & Postcentral gyrus & POST \\ 
\hspace{0.5cm}3 & Rolandric operculum & RO \\ 

Frontal lobe &  &  \\ 
\hspace{0.5cm}Lateral surface &  &  \\ 
\hspace{0.5cm}4 & Superior frontal gyrus, dorsolateral & F1 \\ 
\hspace{0.5cm}5 & Middle frontal gyrus & F2 \\ 
\hspace{0.5cm}6 & Inferior frontal gyrus, opercular part & F3OP \\ 
\hspace{0.5cm}7 & Inferior frontal gyrus, triangular part & F3T \\ 

\hspace{0.5cm}Medial surface &  &  \\ 
\hspace{0.5cm}8 & Superior frontal gyrus, medial & F1M \\ 
\hspace{0.5cm}9 & Supplementary motor area & SMA \\ 
\hspace{0.5cm}10 & Paracentral lobule & PCL \\ 

\hspace{0.5cm}Orbital surface &  &  \\ 
\hspace{0.5cm}11 & Superior frontal gyrus, orbital part & F1O \\ 
\hspace{0.5cm}12 & Superior frontal gyrus, medial orbital & F1MO \\ 
\hspace{0.5cm}13 & Middle frontal gyrus, orbital part & F2O \\ 
\hspace{0.5cm}14 & Inferior frontal gyrus, orbital part & F3O \\ 
\hspace{0.5cm}15 & Gyrus rectus & GR \\ 
\hspace{0.5cm}16 & Olfactory cortex & OC \\ 

Temporal lobe &  &  \\ 
\hspace{0.5cm}Lateral surface &  &  \\ 
\hspace{0.5cm}17 & Superior temporal gyrus & T1 \\ 
\hspace{0.5cm}18 & Heschl gyrus & HES \\ 
\hspace{0.5cm}19 & Middle temporal gyrus & T2 \\ 
\hspace{0.5cm}20 & Inferior temporal gyrus & T3 \\ 

Parietal lobe &  &  \\ 
\hspace{0.5cm}Lateral surface &  &  \\ 
\hspace{0.5cm}21 & Superior parietal gyrus & P1 \\ 
\hspace{0.5cm}22 & Inferior parietal & P2 \\ 
\hspace{0.5cm}23 & Angular gyrus & AG \\ 
\hspace{0.5cm}24 & Supramarginal gyrus & SMG \\ 

\hspace{0.5cm}Medial surface &  &  \\ 
\hspace{0.5cm}25 & Precuneus & PQ \\ 

Occipital lobe &  &  \\ 
\hspace{0.5cm}Lateral surface &  &  \\ 
\hspace{0.5cm}26 & Superior occipital gyrus & O1 \\ 
\hspace{0.5cm}27 & Middle occipital gyrus & O2 \\ 
\hspace{0.5cm}28 & Inferior occipital gyrus & O3 \\ 

\hspace{0.5cm}Medial surface &  &  \\ 
\hspace{0.5cm}29 & Cuneus & Q \\ 
\hspace{0.5cm}30 & Calcarine fissure & V1 \\ 
\hspace{0.5cm}31 & Lingual gyrus & LING \\ 
\hspace{0.5cm}32 & Fusiform gyrus & FUSI \\ 

Limbic lobe &  &  \\ 
\hspace{0.5cm}33 & Temporal pole: superior temporal gyrus & T1P \\ 
\hspace{0.5cm}34 & Temporal pole: middle temporal gyrus & T2P \\ 
\hspace{0.5cm}35 & Anterior cingulate and paracingulate gyri & ACIN \\ 
\hspace{0.5cm}36 & Median cingulate and paracingulate gyri & MCIN \\ 
\hspace{0.5cm}37 & Posterior cingulate gyrus & PCIN \\ 
\hspace{0.5cm}38 & Hippocampus & HIP \\ 
\hspace{0.5cm}39 & Parahippocampal gyrus & PHIP \\ 
\hspace{0.5cm}40 & Insula & INS \\ 

Subcortical &  &  \\ 
\hspace{0.5cm}41 & Amygdala & AMYG \\ 
\hspace{0.5cm}42 & Caudate nuclei & CAU \\
\hspace{0.5cm}43 & Lenticular nucleus, putamen & PUT \\
\hspace{0.5cm}44 & Lenticular nucleus, pallidum & PAL \\
\hspace{0.5cm}45 & Thalamus & THAL \\
\hline
\caption{Anatomical regions of interest (ROIs) in each hemisphere for the AAL atlas \citep{tzou:02}.}
\label{tab:aal}
\end{longtable}
}

\bibliography{collection}

@article{ding:23,
  title={A first course in causal inference},
  author={Ding, Peng},
  journal={arXiv preprint arXiv:2305.18793},
  year={2023}
}

@book{pete:17,
  title={Elements of Causal Inference: Foundations and Learning Algorithms},
  author={Peters, Jonas and Janzing, Dominik and Sch{\"o}lkopf, Bernhard},
  year={2017},
  publisher={The MIT Press}
}

@article{scha:99,
  title={Adjusting for nonignorable drop-out using semiparametric nonresponse models},
  author={Scharfstein, Daniel O and Rotnitzky, Andrea and Robins, James M},
  journal={Journal of the American Statistical Association},
  volume={94},
  number={448},
  pages={1096--1120},
  year={1999},
  publisher={Taylor \& Francis}
}

@Article{scea:11,
  author  = {Scealy, JL and Welsh, AH},
  title   = {{Regression for compositional data by using distributions defined on the hypersphere}},
  journal = {Journal of the Royal Statistical Society Series B: Statistical Methodology},
  year    = {2011},
  volume  = {73},
  number  = {3},
  pages   = {351--375}
}

@Article{bill:01,
  author  = {Louis J. Billera and Susan P. Holmes and Karen Vogtmann},
  title   = {Geometry of the Space of Phylogenetic Trees},
  journal = {Advances in Applied Mathematics},
  year    = {2001},
  volume  = {27},
  number={4},
  pages   = {733--767}
}

@article{than:23,
  title={{O(n)-invariant Riemannian metrics on SPD matrices}},
  author={Thanwerdas, Yann and  Pennec, Xavier},
  journal={Linear Algebra and its Applications},
  volume={661},
  pages={163--201},
  year={2023}
}

@Article{lin:19:1,
  author  = {Lin, Zhenhua},
  title   = {{Riemannian geometry of symmetric positive definite matrices via Cholesky decomposition}},
  journal = {SIAM Journal on Matrix Analysis and Applications},
  year    = {2019},
  volume  = {40},
  number  = {4},
  pages   = {1353--1370}
}

@Book{brids:99,
	title={Metric Spaces of Non-Positive Curvature},
	author={Bridson, Martin R and Haefliger, Andr\'{e}},
	series={Grundlehren der mathematischen Wissenschaften},
	year={1999},
	publisher={Springer Berlin, Heidelberg}
}

@ARTICLE{dryd:09,
  author = {Dryden, I. L. and Koloydenko, A. and Zhou, D.},
  title = {{Non-{E}uclidean statistics for covariance matrices, with applications
	to diffusion tensor imaging}},
  journal = {Annals of Applied Statistics},
  year = {2009},
  volume = {3},
  pages = {1102--1123}
}

@Article{pigo:14,
  author  = {Pigoli, Davide and Aston, John AD and Dryden, Ian L and Secchi, Piercesare},
  title   = {{Distances and inference for covariance operators}},
  journal = {Biometrika},
  year    = {2014},
  volume  = {101},
  pages   = {409--422}
}

@article{robin:94,
  title={Estimation of regression coefficients when some regressors are not always observed},
  author={Robins, James M and Rotnitzky, Andrea and Zhao, Lue Ping},
  journal={Journal of the American statistical Association},
  volume={89},
  number={427},
  pages={846--866},
  year={1994},
  publisher={Taylor \& Francis}
}

@article{bang:05,
  title={Doubly robust estimation in missing data and causal inference models},
  author={Bang, Heejung and Robins, James M},
  journal={Biometrics},
  volume={61},
  number={4},
  pages={962--973},
  year={2005},
  publisher={Oxford University Press}
}

@article{mull:24,
  title={Metric Statistics: Exploration and Inference for Random Objects With Distance Profiles},
  author={Dubey, Paromita and Chen, Yaqing and M{\"u}ller, Hans-Georg},
  journal={Annals of Statistics}, 
  year={2024}, volume={52}, pages={757--792}
}

@book{bura:01,
  title={A Course in Metric Geometry},
  author={Burago, Dmitri and Burago, Yuri and Ivanov, Sergei},
  publisher={American Mathematical Society},
  year={2001}
}

@article{ogbu:15,
  title={Doubly robust estimation of the local average treatment effect curve},
  author={Ogburn, Elizabeth L and Rotnitzky, Andrea and Robins, James M},
  journal={Journal of the Royal Statistical Society Series B: Statistical Methodology},
  volume={77},
  number={2},
  pages={373--396},
  year={2015},
  publisher={Oxford University Press}
}

@article{rose:83,
  title={The central role of the propensity score in observational studies for causal effects},
  author={Rosenbaum, Paul R and Rubin, Donald B},
  journal={Biometrika},
  volume={70},
  number={1},
  pages={41--55},
  year={1983},
  publisher={Oxford University Press}
}

@Article{mull:21:2,
  author  = {Dai, Xiongtao and Lin, Zhenhua and M{\"u}ller, Hans-Georg},
  journal = {Biometrics},
  title   = {Modeling sparse longitudinal data on {R}iemannian manifolds},
  year    = {2021},
  number  = {77},
  pages   = {1328--1341},
  volume  = {77}
}

@Article{PeMu19,
  author  = {Petersen, Alexander and M{\"u}ller, Hans-Georg},
  journal = {Annals of Statistics},
  title   = {Fr{\'e}chet regression for random objects with {E}uclidean predictors},
  year    = {2019},
  number  = {2},
  pages   = {691-719},
  volume  = {47},
}

@BOOK{kola:14,
	title = {{Statistical analysis of network data with R}},
	publisher = {Springer New York},
	year = {2014},
	author = {Kolaczyk, Eric D and Cs{\'a}rdi, G{\'a}bor},
	volume = {65}
}

@BOOK{rams:05,
	title = {Functional Data Analysis},
	publisher = {Springer New York, NY},
	year = {2005},
	author = {Ramsay, James O. and Silverman, B. W.},
	pages = {xx+426},
	series = {Springer Series in Statistics},
	edition = {2nd},
	volume={426}
}

@book{hsin:15,
	title={Theoretical Foundations of Functional Data Analysis, with an Introduction to Linear Operators},
	author={Hsing, Tailen and Eubank, Randall},
	volume={997},
	year={2015},
	publisher={John Wiley \& Sons}
}

@ARTICLE{mull:16:3,
	author = {Jane-Ling Wang and Jeng-Min Chiou and Hans-Georg M\"{u}ller},
	title = {{Functional Data Analysis}},
	journal = {Annual Review of Statistics and its Application},
	year = {2016},
	volume = {3},
	pages = {257--295},
	publisher={Annual Reviews}
}

@Article{LiKoWa23,
  author  = {Lin, Zhenhua and Kong, Dehan and Wang, Linbo},
  journal = {Journal of the Royal Statistical Society Series B: Statistical Methodology},
  title   = {Causal inference on distribution functions},
  year    = {2023},
  number  = {2},
  pages   = {378--398},
  volume  = {85},
}

@article{chen:23,
  title={Wasserstein regression},
  author={Chen, Yaqing and Lin, Zhenhua and M{\"u}ller, Hans-Georg},
  journal={Journal of the American Statistical Association},
  volume={118},
  number={542},
  pages={869--882},
  year={2023},
  publisher={Taylor \& Francis}
}

@article{ChMu22,
  title={Uniform convergence of local {F}r{\'e}chet regression with applications to locating extrema and time warping for metric space valued trajectories},
  author={Chen, Yaqing and M{\"u}ller, Hans-Georg},
  journal={Annals of Statistics},
  volume={50},
  number={3},
  pages={1573-1592},
  year={2022},
  publisher={Institute of Mathematical Statistics}
}

@article{kenn:23,
  title={Towards optimal doubly robust estimation of heterogeneous causal effects},
  author={Kennedy, Edward H},
  journal={Electronic Journal of Statistics},
  volume={17},
  number={2},
  pages={3008--3049},
  year={2023},
  publisher={The Institute of Mathematical Statistics and the Bernoulli Society}
}

@article{kenn:24,
  title={Minimax rates for heterogeneous causal effect estimation},
  author={Kennedy, Edward H and Balakrishnan, Sivaraman and Robins, James M and Wasserman, Larry},
  journal={Annals of Statistics},
  volume={52},
  number={2},
  pages={793--816},
  year={2024}
}

@book{vaWe23,
	author = {van der Vaart, Aad W and Wellner, Jon A},
	publisher = {Springer New York},
	title = {Weak Convergence and Empirical Processes with Applications to Statistics.},
	year = {2023},
    edition = {2nd}
    }

@article{DeKe20,
  title={Semiparametric M-estimation with non-smooth criterion functions},
  author={Delsol, Laurent and Van Keilegom, Ingrid},
  journal={Annals of the Institute of Statistical Mathematics},
  volume={72},
  number={2},
  pages={577-605},
  year={2020},
  publisher={Springer}
}

@article{ecke:24,
  title={Causal inference with a functional outcome},
  author={Ecker, Kreske and de Luna, Xavier and Schelin, Lina},
  journal={Journal of the Royal Statistical Society Series C: Applied Statistics},
  volume={73},
  number={1},
  pages={221--240},
  year={2024},
  publisher={Oxford University Press US}
}

@BOOK{ambr:08,
    title = {{Gradient Flows in Metric Spaces and in the Space of Probability Measures}},
    publisher = {Birkh\"auser Basel},
    year = {2008},
    series = {Lectures in Mathematics. ETH Z\"urich},
    author = {Ambrosio, L. and Gigli, N. and Savar\'{e}, G.},
    edition = {2nd}
}

@ARTICLE{arsi:07,
	author = {Arsigny, Vincent and Fillard, Pierre and Pennec, Xavier and Ayache,
	Nicholas},
	title = {{Geometric means in a novel vector space structure on symmetric positive-definite
	matrices}},
	journal = {SIAM Journal on Matrix Analysis and Applications},
	year = {2007},
	volume = {29},
	pages = {328--347},
	number = {1},
	publisher = {SIAM}
}

@BOOK{aitc:86,
	title = {{The Statistical Analysis of Compositional Data}},
	publisher = {Chapman \& Hall, Ltd.},
	year = {1986},
	author = {Aitchison, John}
}

@article{seve:22,
  title={Manifold valued data analysis of samples of networks, with applications in corpus linguistics},
  author={Severn, Katie E and Dryden, Ian L and Preston, Simon P},
  journal={Annals of Applied Statistics},
  volume={16},
  number={1},
  pages={368--390},
  year={2022}
}

@book{filz:18,
  title={{Applied Compositional Data Analysis: With Worked Examples in R}},
  author={Filzmoser, Peter and Hron, Karel and Templ, Matthias},
  year={2018},
  publisher={Springer}
}

@Article{zhou:22,
    journal = {Journal of Machine Learning Research}, 
    title   = {{Network regression with graph {L}aplacians}},
    year    = {2022},
    volume  = {23},
    pages   = {1--41},
    author= {Zhou, Yidong and M\"{u}ller, Hans-Georg}
}

@ARTICLE{rubi:10,
	author = {Rubinov, Mikail and Sporns, Olaf},
	title = {{Complex network measures of brain connectivity: {U}ses and interpretations}},
	journal = {NeuroImage},
	year = {2010},
	volume = {52},
	pages = {1059--1069},
	number = {3}
}

@article{shin:24,
  title={Absolute average and median treatment effects as causal estimands on metric spaces},
  author={Shin, Ha-Young and Kim, Kyusoon and Lee, Kwonsang and Oh, Hee-Seok},
  journal={arXiv preprint arXiv:2407.03726},
  year={2024}
}

@article{lin:21,
  title={Causal inference on non-linear spaces: Distribution functions and beyond},
  author={Lin, Z and Kong, D and Wang, L},
  journal={arXiv e-prints, pp. arXiv--2101},
  year={2021}
}

@article{ahma:23,
  title={A comparative study of correlation methods in functional connectivity analysis using f{MRI} data of {A}lzheimer's patients},
  author={Ahmadi, Hessam and Fatemizadeh, Emad and Motie-Nasrabadi, Ali},
  journal={Journal of Biomedical Physics \& Engineering},
  volume={13},
  number={2},
  pages={125},
  year={2023},
  publisher={Shiraz University of Medical Sciences}
}

@article{plan:22,
  title={Structural progression of {A}lzheimer’s disease over decades: the {MRI} staging scheme},
  author={Planche, Vincent and Manjon, Jos{\'e} V and Mansencal, Boris and Lanuza, Enrique and Tourdias, Thomas and Catheline, Gwena{\"e}lle and Coup{\'e}, Pierrick},
  journal={Brain Communications},
  volume={4},
  number={3},
  pages={fcac109},
  year={2022},
  publisher={Oxford University Press}
}

@article{zhu:23,
  title={Geodesic optimal transport regression},
  author={Zhu, Changbo and M{\"u}ller, Hans-Georg},
  journal={Biometrika},
  volume={113},
  number={1},
  pages={asaf086},
  year={2026},
  publisher={Oxford University Press}
}

@Article{bigo:17,
	author  = {Bigot, J\'{e}r\'{e}mie and Gouet, Ra{\'u}l and Klein, Thierry and L{\'o}pez, Alfredo},
	title   = {{Geodesic {PCA} in the {W}asserstein space by convex {PCA}}},
    journal = {Annales de l'Institut Henri Poincar\'{e}, Probabilit\'{e}s et Statistiques},
	year    = {2017},
	volume  = {53},
	pages   = {1-26},
}

@article{caze:18,
  title={Geodesic {PCA} versus log-{PCA} of histograms in the {W}asserstein space},
  author={Cazelles, Elsa and Seguy, Vivien and Bigot, J{\'e}r{\'e}mie and Cuturi, Marco and Papadakis, Nicolas},
  journal={SIAM Journal on Scientific Computing},
  volume={40},
  number={2},
  pages={B429--B456},
  year={2018},
  publisher={SIAM}
}

@article{pego:22,
  title={Projected statistical methods for distributional data on the real line with the Wasserstein metric},
  author={Pegoraro, Matteo and Beraha, Mario},
  journal={Journal of Machine Learning Research},
  volume={23},
  number={37},
  pages={1--59},
  year={2022}
}

@Article{mcca:97,
	author  = {McCann, Robert J},
	title   = {{A convexity principle for interacting gases}},
	journal = {Advances in Mathematics},
	year    = {1997},
	volume  = {128},
	number  = {1},
	pages   = {153--179},
	publisher={Elsevier}
}

@Article{CCDDHNR18,
  title={Double/debiased machine learning for treatment and structural parameters},
  author={Chernozhukov, Victor and Chetverikov, Denis and Demirer, Mert and Duflo, Esther and Hansen, Christian and Newey, Whitney and Robins, James},
  journal={The Econometrics Journal},
  year={2018},
  volume  = {21},
  number  = {1},
  pages   = {C1--C68},
  publisher={Oxford University Press Oxford, UK}
}

@Article{stur:03,
	author  = {Sturm, Karl-Theodor},
	title   = {{Probability measures on metric spaces of nonpositive curvature}},
	journal = {Heat Kernels and Analysis on Manifolds, Graphs, and Metric Spaces (Paris, 2002). Contemp. Math., 338. Amer. Math. Soc., Providence, RI},
	year    = {2003},
	pages   = {357--390},
	volume={338},
	publisher={American Mathematical Soc.}
}

@ARTICLE{ferr:13,
	author = {Ferreira, Luiz Kobuti and Busatto, Geraldo F},
	title = {{Resting-state functional connectivity in normal brain aging}},
	journal = {Neuroscience \& Biobehavioral Reviews},
	year = {2013},
	volume = {37},
	pages = {384--400},
	number = {3}
}

@article{sala:15,
	title={Reorganization of Brain Networks in Aging: A Review of Functional Connectivity Studies},
	author={Sala-Llonch, Roser and Bartr{\'e}s-Faz, David and Junqu{\'e}, Carme},
	journal={Frontiers in Psychology},
	volume={6},
	year={2015},
	publisher={Frontiers}
}

@article{damo:12,
  title={Functional connectivity tracks clinical deterioration in {A}lzheimer's disease},
  author={Damoiseaux, Jessica S and Prater, Katherine E and Miller, Bruce L and Greicius, Michael D},
  journal={Neurobiology of Aging},
  volume={33},
  number={4},
  pages={828--e19},
  year={2012},
  publisher={Elsevier}
}

@article{zhan:10,
  title={Resting brain connectivity: changes during the progress of {A}lzheimer disease},
  author={Zhang, Hong-Ying and Wang, Shi-Jie and Liu, Bin and Ma, Zhan-Long and Yang, Ming and Zhang, Zhi-Jun and Teng, Gao-Jun},
  journal={Radiology},
  volume={256},
  number={2},
  pages={598--606},
  year={2010},
  publisher={Radiological Society of North America, Inc.}
}

@article{tzou:02,
	title={Automated Anatomical Labeling of Activations in {SPM} Using a Macroscopic Anatomical Parcellation of the {MNI} {MRI} Single-Subject Brain},
	author={Tzourio-Mazoyer, Nathalie and Landeau, Brigitte and Papathanassiou, Dimitri and Crivello, Fabrice and Etard, Olivier and Delcroix, Nicolas and Mazoyer, Bernard and Joliot, Marc},
	journal={NeuroImage},
	volume={15},
	number={1},
	pages={273--289},
	year={2002},
	publisher={Elsevier}
}

@manual{geye:20,
	author    = {Geyer, Charles J.},
	title     = {{trust: Trust Region Optimization}},
	year      = {2020},
	note = {R package version 0.1.8},
	url = {https://cran.r-project.org/web/packages/trust}
}

@Article{stel:20,
	author  = {Stellato, B. and Banjac, G. and Goulart, P. and Bemporad, A. and Boyd, S.},
	title   = {{OSQP}: An Operator Splitting Solver for Quadratic Programs},
	journal = {Mathematical Programming Computation},
	volume  = {12},
	number  = {4},
	pages   = {637--672},
	year    = {2020}
}

@Article{scea:14,
  author  = {Scealy, JL and Welsh, AH},
  title   = {Colours and cocktails: {C}ompositional data analysis},
  journal = {Australian \& New Zealand Journal of Statistics},
  year    = {2014},
  volume  = {56},
  number  = {2},
  pages   = {145--169}
}

@Article{frec:48,
	author  = {Fr\'{e}chet, Maurice},
	title   = {{Les \'{e}l\'{e}ments al\'{e}atoires de nature quelconque dans un espace distanci\'{e}}},
	journal = {Annales de l'Institut Henri Poincar\'{e}},
	volume={10},
	number={4},
	pages={215--310},
	year={1948}
}

@manual{chen:20,
  title={{frechet: Statistical Analysis for Random Objects and Non-Euclidean Data}},
  author={Chen, Yaqing and Zhou, Yidong and Chen, Han and Gajardo, Alvaro and Fan, Jianing and Zhong, Q and Dubey, P and Han, Kyunghee and Bhattacharjee, S and Zhu, Changbo and Iao, Su I and Kundu, Poorbita and Petersen, Alexander and M{\"u}ller, Hans-Georg},
  year={2023},
  note={R package version 0.3.0},
  url={https://cran.r-project.org/web/packages/frechet}
}

@article{pauly:11,
  title={Weighted resampling of martingale difference arrays with applications},
  author={Pauly, Markus},
  journal={Electronic Journal of Statistics},
  volume={5},
  pages={41--52},
  year={2011},
  publisher={Oxford University Press US}
}

@book{koso:08,
  title={Introduction to empirical processes and semiparametric inference},
  author={Kosorok, Michael R},
  year={2008},
  publisher={Springer}
}
\bibliographystyle{apalike}

\end{document}